\def\ARXIVVERSION{1}
\def\PAPERWITHSUPPLEMENT{1}
\def\PAPERWITHREFERENCEGUIDE{1}
\newif\ifrefereelayout
\refereelayoutfalse   
\ifrefereelayout
  \documentclass[AER,reviewmode]{AEA}
\else
  \documentclass[AER,finalmode]{AEA}
\fi
\ifdefined\ARXIVVERSION
  \newcommand{\paperversionurl}{https://arxiv.org/pdf/2607.05802}
\else
  \newcommand{\paperversionurl}{https://cashman.science/assets/pdf/safe-collective-expression.pdf}
\fi
\newif\ifanonymizeforpeerreview
\anonymizeforpeerreviewfalse   
\newif\ifshowacknowledgments
\showacknowledgmentstrue     
\newif\ifbuildwithsupplement
\ifdefined\PAPERWITHSUPPLEMENT
  \buildwithsupplementtrue
\else
  \buildwithsupplementfalse
\fi

\usepackage{natbib}
\usepackage{amsmath}
\usepackage{amsfonts}
\usepackage{amssymb}
\usepackage{bm}
\usepackage{booktabs}
\usepackage{array}
\usepackage{longtable}
\makeatletter
\def\LT@makecaption#1#2#3{%
  \LT@mcol\LT@cols c{\hbox to\z@{\hss\parbox[t]\LTcapwidth{%
    \sbox\@tempboxa{#1{#2}#3}%
    \ifdim\wd\@tempboxa>\hsize #1{#2}#3\else
    \hbox to\hsize{\hfil\box\@tempboxa\hfil}\fi
    \endgraf\vskip\baselineskip}\hss}}}
\makeatother
\usepackage{graphicx}
\usepackage{tikz}
\usepackage[letterpaper,margin=1in]{geometry}
\usepackage{xr-hyper}
\usepackage{subfiles}
\usepackage[colorlinks=true,allcolors=blue]{hyperref}

\makeatletter
\let\AC@original@linkfile\hyper@linkfile
\def\hyper@linkfile#1#2#3{%
  \begingroup\AC@original@linkfile{#1}{#2}{#3}\endgroup}
\makeatother

\makeatletter
\let\AC@original@appsect\@appsect
\def\@appsect#1#2#3#4#5#6[#7]#8{%
  \NR@gettitle{#7}%
  \AC@original@appsect{#1}{#2}{#3}{#4}{#5}{#6}[{#7}]{\csname the#1\endcsname.\ #8}%
}
\makeatother

\let\ACbodysection\section

\newtheorem{assumption}{Assumption}
\newtheorem{definition}{Definition}
\newtheorem{lemma}{Lemma}
\newtheorem{proposition}{Proposition}
\newtheorem{theorem}{Theorem}
\newtheorem{corollary}{Corollary}
\newtheorem{remark}{Remark}

\newcommand{\Pfam}{\mathcal{P}}
\newcommand{\Cs}{C^{-}}
\newcommand{\Cp}{C^{+}}
\newcommand{\appendixparttitle}[1]{%
  \clearpage
  \begin{center}
  {\scshape\sectionsize #1}
  \end{center}
  \vspace{12pt}
}

\newcommand{\runinhead}[1]{\par\textit{#1}---\ignorespaces}

\ifSubfilesClassLoaded{%
  \externaldocument[][nocite]{safe_coalitions}[safe_coalitions.pdf]%
}{%
  \ifbuildwithsupplement\else
    \externaldocument[][nocite]{safe_coalitions_online_appendix}[safe_coalitions_online_appendix.pdf]%
  \fi
}

\providecommand{\theHsection}{}
\providecommand{\theHsubsection}{}
\providecommand{\theHequation}{}
\providecommand{\theHfigure}{}
\providecommand{\theHtable}{}
\providecommand{\theHassumption}{}
\providecommand{\theHdefinition}{}
\providecommand{\theHlemma}{}
\providecommand{\theHproposition}{}
\providecommand{\theHtheorem}{}
\providecommand{\theHcorollary}{}
\providecommand{\theHremark}{}
\newcommand{\startsupplement}{%
  \appendix
  \setcounter{section}{0}%
  \setcounter{subsection}{0}%
  \setcounter{equation}{0}%
  \setcounter{figure}{0}%
  \setcounter{table}{0}%
  \setcounter{assumption}{0}%
  \setcounter{definition}{0}%
  \setcounter{lemma}{0}%
  \setcounter{proposition}{0}%
  \setcounter{theorem}{0}%
  \setcounter{corollary}{0}%
  \setcounter{remark}{0}%
  \renewcommand{\thesection}{S\arabic{section}}%
  \renewcommand{\thesubsection}{\arabic{subsection}}%
  \renewcommand{\theequation}{S\arabic{equation}}%
  \renewcommand{\thefigure}{S\arabic{figure}}%
  \renewcommand{\thetable}{S\arabic{table}}%
  \renewcommand{\theassumption}{S\arabic{assumption}}%
  \renewcommand{\thedefinition}{S\arabic{definition}}%
  \renewcommand{\thelemma}{S\arabic{lemma}}%
  \renewcommand{\theproposition}{S\arabic{proposition}}%
  \renewcommand{\thetheorem}{S\arabic{theorem}}%
  \renewcommand{\thecorollary}{S\arabic{corollary}}%
  \renewcommand{\theremark}{S\arabic{remark}}%
  \renewcommand{\theHsection}{supplement.\arabic{section}}%
  \renewcommand{\theHsubsection}{supplement.\arabic{section}.\arabic{subsection}}%
  \renewcommand{\theHequation}{supplement.\arabic{equation}}%
  \renewcommand{\theHfigure}{supplement.\arabic{figure}}%
  \renewcommand{\theHtable}{supplement.\arabic{table}}%
  \renewcommand{\theHassumption}{supplement.\arabic{assumption}}%
  \renewcommand{\theHdefinition}{supplement.\arabic{definition}}%
  \renewcommand{\theHlemma}{supplement.\arabic{lemma}}%
  \renewcommand{\theHproposition}{supplement.\arabic{proposition}}%
  \renewcommand{\theHtheorem}{supplement.\arabic{theorem}}%
  \renewcommand{\theHcorollary}{supplement.\arabic{corollary}}%
  \renewcommand{\theHremark}{supplement.\arabic{remark}}%
}

\draftSpacing{1.5}

\renewenvironment{proof}[1][Proof]{\noindent{\scshape #1:}\par}{\par\par}

\makeatletter
\def\ps@headings{%
  \leftskip=0pt\rightskip=0pt
  \let\@oddfoot\@empty
  \let\@evenfoot\@empty
  \def\@oddhead{{\small\textit{\MakeUppercase{\@shortTitle\hfil\thepage}}}}%
  \def\@evenhead{{\small\textit{\MakeUppercase{\thepage\hfil\@shortTitle}}}}%
}
\makeatother

\begin{document}

\title{Safe Collective Expression\\with Social Assurance Contracts}
\shortTitle{Safe Collective Expression with Social Assurance Contracts}
\issueName{}
\ifanonymizeforpeerreview
\author{Anonymous}
\else
\author{Matthew Cashman\thanks{MIT Sloan, 100 Main Street, Cambridge, MA, USA. Email: \href{mailto:cashman@mit.edu}{cashman@mit.edu}.}\\[6pt]
\normalfont\normalsize \today\ (\href{\paperversionurl}{latest version here})}
\fi

\begin{abstract}
Controversial views often go unspoken because of fears of retaliation. Speaking up one at a time may encourage others to join in, but retaliation against early participants can stop that cascade before a group large enough to protect itself forms. However, a sufficiently large group could speak up safely if only they spoke up together. A Social Assurance Contract collects endorsements of a controversial view in private and publishes them only when the group is large enough to protect itself. Such contracts can tunnel under the barrier that stops public cascades, forming coalitions that one-by-one public organizing cannot reach.

(JEL D71, D74, D82, D83, J51)
\end{abstract}

\Keywords{Assurance contracts, disclosure design, failure privacy, coalition formation, collective action, retaliation}

\maketitle

\phantomsection\label{sec:intro}

Many people hold views they are not willing to share publicly. Their views may fall outside the range of publicly acceptable opinion, be opposed by an employer, a university, or a hostile public, or concern allegations that invite retaliation. Being publicly identified as endorsing a controversial view can be dangerous enough that a person stays silent---unless, perhaps, enough other people were to speak up with her. A \textit{Social Assurance Contract} lets people arrange to speak up together. The contract specifies a controversial statement (such as an objection to an employer's policy), and then collects endorsements: signatures that give permission to publish that statement under the signer's name. It holds these endorsements in private and publishes them all at once, but only when the group would provide the protection each signer requires. At a firm, such a contract might look like this:

\begin{quote}
\textsc{We, the undersigned, believe [controversial belief] and ask the firm to [requested change]. This letter and its signatures will be published together \textsl{only if at least 200 employees sign.} Until then, no one can see who has signed. If the threshold is not met, the letter and its signatures are destroyed.}
\end{quote}

The contract's \emph{publication condition} specifies when endorsements may be published and whose endorsements may appear. In this example, it requires at least 200 employees to sign. \emph{Failure privacy} means that if the publication condition is never met, no signer's identity is revealed.

This paper focuses on social and professional retaliation in response to the public expression of controversial ideas. The motivating settings are workplaces, universities, and public debate, where speaking as a group can deter retaliation or help its members withstand it. Classic accounts of self-censorship emphasize how beliefs about others' views interact with fear of isolation or retaliation to shape willingness to speak \citep{kuranPrivateTruthsPublic1997,noelle-neumannSpiralSilenceTheory1974}. Information can change the circumstances under which people are willing to speak up, though. For instance, a manager who learns that most employees oppose a policy may become less willing to punish someone who criticizes it. Yet information is not always the missing ingredient: everyone may know that many people privately oppose an established practice, but, simultaneously, it can be the case that no one is willing to become its first public critic.

Indeed, public enforcement of norms can survive despite the underlying norm being unpopular. Third parties punish norm violations even when they were not harmed themselves \mbox{\citep{fehrThirdpartyPunishmentSocial2004}}, and private opposition and public enforcement can even coexist in the same person. Punishing norm violators \textit{and} people who fail to punish can stabilize any individually costly practice---whether or not it benefits the group \citep{boydPunishmentAllowsEvolution1992}, and a weak tendency to copy common behavior when learning from others is enough to sustain punishment of those who fail to punish \citep{henrichWhyPeoplePunish2001}.

Reputational concerns provide another reason for public enforcement. When association with a dissenter damages a person's reputation, even people who privately agree with the dissenter may ostracize her to preserve their own standing \citep{tiroleDigitalDystopia2021}. Consistent with this reputational motive, in the lab we observe people who conformed under pressure later criticizing a lone dissenter in public \mbox{\citep{willerFalseEnforcementUnpopular2009}}.

These mechanisms explain why learning that a view is widely shared need not make it safe to express. The question here is how supporters can turn that private agreement into a protective public coalition. A Social Assurance Contract arranges the expressive act beforehand, forms a coalition large enough to protect itself, and only then publishes its endorsements.

I compare four ways of acting on latent support for a controversial statement in the context of a firm. First is the base case, organizing in public through an \textit{open letter} whose signatures are published as they arrive. Each new name is exposed before the next signer joins. Second, a \emph{call to action} asks employees to add their names publicly at a specified time, without collecting endorsements beforehand. Each person must bet that enough others will follow through. Third, a \textit{credible poll} verifies that respondents are eligible employees, for instance through a firm login, but publishes only the number who support the letter. Finally, a Social Assurance Contract assembles an embargoed letter, publishing its signatures together only when the group would protect every signer.

The first, second, and fourth procedures publish endorsements of the same statement to the same readers, while the poll publishes a count and names no one. What differs is when identities become public and what each person must believe about others before acting. To isolate these differences, I hold fixed the eligible population, its supporters, the opposition, the protection each published group provides, and the readers' response.

The group that speaks together matters because retaliation against an identifiable person, such as dismissal, blacklisting, or public censure, may be diluted or deterred when others speak too. I call the set of people who publicly endorse the statement the \textit{coalition}. It protects a member when speaking with that group is at least as good for her as remaining silent, though protection need not mean immunity from attack. In Section~\ref{sec:target}, for example, expression can be worthwhile even after accounting for the expected cost of retaliation.

Two assumptions govern how coalitions form. The first is safety in numbers: adding public participants never makes an existing member less safe. This lets a contract open to every eligible signer accept a new signature without rechecking, each time, that everyone already committed would still be protected. The second is follow-through uncertainty: even when several people plan to publish their endorsements together, any one person may follow through while others do not. I impose the same safety standard on public organizing and on the contract: no participant is ever named in a coalition that does not protect her, including when others do not follow through. This regret-free standard is appropriate when retaliation falls on individuals, cannot be undone, and is difficult to compensate.

\runinhead{Cascade versus tunnel.} These requirements allow us to distinguish a safe \textit{destination} from a safe \textit{path}. In this model, each new signature means public organizing must pass through an interim coalition that protects everyone already exposed. Each new signature also adds to the group's protection and encourages others to join, so the public cascade of signatures can grow. However, it stops when no additional person is willing to sign. This stopping point is the \textit{exposure barrier}. A Social Assurance Contract can tunnel beneath this barrier to a larger self-protecting group without exposing the people who would otherwise have to join first (Figure~\ref{fig:cascade_tunnel}).

\begin{figure}[t]
\centering
\begin{tikzpicture}[scale=0.95]
  \draw[->] (0,0) -- (6.7,0);
  \draw[->] (0,0) -- (0,6.4) node[above] {\footnotesize Protected mass $G_S(y)$};
  \node[below] at (3.35,-0.46) {\footnotesize Current public mass $y$};
  \draw[dash dot,gray!70] (0,0) -- (6,6);
  \draw[very thick] plot[smooth] coordinates
    {(0.6,1.7) (1,2.0) (2.5,2.5) (3.3,2.95) (4,4) (4.62,4.95) (5.5,5.5) (6.2,5.65)};
  \node[right] at (6.2,5.65) {\footnotesize Protected-size curve};
  \fill (2.5,2.5) circle (2.2pt);
  \draw[fill=white] (4,4) circle (2.2pt);
  \fill (5.5,5.5) circle (2.2pt);
  \draw[dotted] (1,0) -- (1,1);
  \draw[dotted] (2.5,0) -- (2.5,2.5);
  \draw[dotted] (5.5,0) -- (5.5,5.5);
  \draw (1,0.09) -- (1,-0.09);
  \draw (2.5,0.09) -- (2.5,-0.09);
  \draw (5.5,0.09) -- (5.5,-0.09);
  \node[below] at (1,0) {\footnotesize $b$};
  \node[below] at (2.5,0) {\footnotesize $y^-$};
  \node[below] at (5.5,0) {\footnotesize $y^+$};
\node[below] at (4,0) {\footnotesize $y^u$};
  \node[below right] at (2.5,2.5) {\footnotesize $\Cs$};
  \node[above left] at (5.5,5.5) {\footnotesize $\Cp$};
  \draw[dotted] (4,0) -- (4,4);
  \draw[|-|] (2.5,0.3) -- (4,0.3);
\node[above] at (3.25,0.32) {\footnotesize $y^u-y^-$};
  \draw[<-] (3.5,3.05) -- (4.35,2.6) node[right] {\footnotesize Exposure barrier};
  \draw[->,thick] (1,1) -- (1,2.0) -- (2.0,2.0) -- (2.0,2.33) -- (2.45,2.33);
  \node at (2.35,1.7) {\footnotesize Cascade};
  \draw[->,thick,densely dashed] (2.5,2.5) to[bend left=28] (5.5,5.5);
  \node at (4.95,3.9) {\footnotesize Tunnel};
\end{tikzpicture}
\caption{Cascade versus tunnel}
\label{fig:cascade_tunnel}
\par\medskip\noindent\begin{minipage}{\textwidth}\footnotesize \emph{Notes:} The horizontal axis $y$ is the total mass of support (the weighted number of people, simply the headcount when everyone counts equally) already public, including the vanguard $b$. The vertical axis $G_S(y)=b+H_S(y)$ is the total mass that would be public if every supporter protected at mass $y$ joined. Crossings with the diagonal are fixed points. Growth resumes from just above the hollow crossing, the unstable middle crossing $y^u$, but not from the filled crossings; the bracket marks the width of the exposure barrier, $y^u-y^-$. Starting from $b$, public organizing reaches the lower crossing $y^-$ and stalls at the exposure barrier. With all relevant endorsements collected, the Social Assurance Contract reaches the largest self-protecting coalition at $y^+$, adding mass $y^+-y^-$.
\end{minipage}
\end{figure}

\runinhead{Main results.} Starting from the \textit{vanguard}, a self-protecting group whose endorsements are already public, Section~\ref{sec:model} identifies both the largest coalition that can form through regret-free public entry and the largest that a contract can assemble from the endorsements it receives. With all supporters signing, the contract reaches the largest protective coalition available in the population. Its additional members beyond the public cascade are the \textit{hidden supporters}. Section~\ref{sec:representation_necessity} then establishes what crossing the exposure barrier safely requires. When any person may follow through without the others, a crossing must rely on endorsements received privately before their joint publication. After the exposure barrier has been crossed with such a mechanism, public organizing can resume.

The comparison changes when participants accept some risk of speaking without enough others to protect them. With sufficiently reliable turnout, a call to action could cross the exposure barrier with high probability (Section~\ref{sec:positive_tolerance_frontier}). A contract avoids this turnout gamble, but collecting endorsements privately does create a risk of leaks. The publication rule itself also does not guarantee that supporters will sign: protection applies to endorsements actually received, not intentions to sign. Sections~\ref{sec:custody_frontier} and~\ref{sec:scope_axioms} address custody and participation.

The institutions also differ in whom they bring into public view. When protection depends only on the group's size and all supporters sign, hidden supporters require more protection than the public cascade's members. Section~\ref{sec:target} connects those requirements to the opposition's decision about whether to retaliate. In that setting, the additional members have more to lose or less to gain from speaking. Bringing them into view can therefore both raise the amount of retaliation observed and reduce total suppression of expression, because silence previously kept them from becoming targets.

\runinhead{Related literature.} Social Assurance Contracts adapt a familiar monetary idea to the social domain. Monetary assurance contracts and provision-point mechanisms collect contributions conditionally and return the money if total participation falls short of what is required \citep{bagnoliProvisionPublicGoods1989,tabarrokPrivateProvisionPublic1998,zubrickasProvisionPointMechanism2014}. Suppose a filmmaker needs \$20,000 to produce a documentary. Supporters contribute to a fund that releases the money to the filmmaker only if it reaches \$20,000 by a specified deadline. If it does not, their contributions are refunded. Contingent contracts and dynamic contribution games address related participation and commitment problems \citep{segalContractingExternalities1999,admatiJointProjectsCommitment1991}. In contrast to crowdfunding, once a complainant, dissenter, or whistleblower is publicly identified, returning a deposit cannot make her anonymous again. Information escrows for allegations of wrongdoing collect information privately and release it conditionally \citep{ayresInformationEscrows2012}. Building on that work, this paper asks which protective coalitions such an institution can assemble beyond the reach of the public cascade.

Another strand in the literature asks how information changes willingness to speak or act. Public signals, what people know about one another's knowledge, and strategic control of information can change expectations about others' participation \citep{lohmannDynamicsInformationalCascades1994,chweStructureStrategyCollective1999,edmondInformationManipulationCoordination2013}. Within this broader concern about beliefs, the social image literature examines how concern about others' opinions changes the costs of public expression \citep{benabouIncentivesProsocialBehavior2006,aliImageInformationChanging2020,bursztynMisperceivedSocialNorms2020}.

\citet{braghieriThresholdDisclosureCollective2026} connect information to conditional public identification and provide experimental evidence for the paper's two behavioral premises: first, that publicly linking people to their views discourages controversial expression, and second, that people condition their willingness to have their vote revealed on the share of voters who agree. The authors asked university students to vote on whether transgender women should be allowed to compete in women's collegiate sports and varied whether each student's name was linked to her vote. Their threshold-disclosure treatment lets each voter choose the share of like-minded votes required for her own vote to be revealed. That share counts agreeing votes even when those votes remain private. That counting rule matters because, in the present model, protection depends on the coalition that actually becomes public. Proposition~\ref{prop:bbf} in the Supplemental Appendix shows what happens when that dependence is absent: if protection depends only on a statistic outside the coalition's control, the cascade and tunnel reach the same group. Unlike Braghieri and coauthors' model, in which the vote share depends on how people vote, this boundary comparison holds the statistic fixed. Section~\ref{sec:target} discusses the experimental evidence from their study.

The public cascade considered here follows the threshold models of \citet{granovetterThresholdModelsCollective1978} and \citet{schellingMicromotivesMacrobehavior1978}: people join as the participating group reaches their individual thresholds. Starting with an initial active group, this process grows until it forms a group that attracts no further members, a fixed point. This is also the critical-mass account of collective action \citep{marwellCriticalMassCollective1993}.\footnote{It is not the informational cascade of \citet{bikhchandaniTheoryFadsFashion1992}, in which people abandon private signals to follow predecessors.} Standard fixed-point methods identify the smallest and largest such groups when adding participants never makes others less willing to join \citep{topkisSupermodularityComplementarity1998}. Related results for supermodular games, in which others' actions increase the incentive to act, bound learning dynamics by the smallest and largest equilibria \citep{milgromRationalizabilityLearningEquilibrium1990}. This paper adds the requirement that each public step protect the people exposed at that step. The lower fixed point is then a limit on safe public expression, not merely an outcome selected by learning.

\citet{tabarrokSimpleModelCrime1997} models how greater criminal activity strains enforcement, lowering individual punishment risk and encouraging further activity. He extends this logic to strikes and revolutions even when participants are well informed. That is safety in numbers arising from strained enforcement. This paper takes safety in numbers as given and asks how supporters can assemble a protective group without exposing its early members before that protection exists.

The closest structural comparison is \citet{galeMonotoneGamesPositive2001}. Gale studies irreversible public participation in which one person's action makes participation more attractive to others, but he evaluates only the terminal action profile. This means an early participant can accept temporary exposure on the strength of equilibrium expectations that others will follow. Here, attribution is irreversible and retaliation may arrive before later participants do, so an early participant cannot rely on others' promises. Remark~\ref{rem:gale} in Section~\ref{sec:necessity} states the contrast precisely.

The divide-and-conquer literature studies the same barrier from the principal's side. Discriminatory contracts and targeted subsidies can induce participation in coordination problems by rewarding the agents who matter most to the rest \citep{segalCoordinationDiscriminationContracting2003,winterIncentivesDiscrimination2004,halacRaisingCapitalHeterogeneous2020,sakovicsWhoMattersCoordination2012}. In a public recruitment process, the opposition can instead concentrate retaliation on whoever is exposed first. The tunnel avoids that exposure because no new endorsement appears in public until the coalition that protects it has formed.

\section{The Model: Protective Coalitions, Cascade, and Tunnel}\label{sec:model}

Which protective groups can be assembled without leaving anyone unprotected along the way? First, I consider a small example that shows why a safe destination need not have a safe public path leading to it. The general model then identifies the largest self-protecting group, establishes the limit of public organizing, and gives the contract's publication rule.

\subsection{A Worked Example}\label{sec:working_example}

Suppose one person has already signed a public letter opposing an employer's policy and finds speaking up worthwhile even alone. Four others support the statement but need company before they are willing to appear on the letter. Counting the person already public and each new signer herself, the first supporter needs at least two public names, the second needs three, and the last two each need five in order for signing the letter to be worthwhile.

Public organizing can recruit the first supporter, bringing the total to two, and then the second, bringing it to three. It stops there. Either of the last two would bring the total only to four if they signed alone, short of the five each one needs. Yet both would be protected if they appeared together, bringing the total to five. The destination offers sufficient protection from retribution, but the next public step does not.

A call to action asking the last two to sign at noon does not remove this difficulty: either could follow through while the other does not. A Social Assurance Contract instead collects both endorsements privately and publishes neither until it holds both. Joint publication then raises the public total from three to five without exposing either person in the smaller group.

Now add a fifth supporter who requires six public names. Once the contract has published the two endorsements, this person can safely join through the public letter: her own signature brings the total from five to six. The contract need not collect her endorsement. Private custody is needed to cross the exposure barrier, after which ordinary public organizing could resume---though it is possible for private custody to handle these names, too.

The example separates three tasks: finding a protective group, reaching it without stranding early participants, and collecting the endorsements needed to cross the barrier. The general model keeps these tasks separate while allowing people to differ in how much protection they need and how much protection they provide.

\subsection{Protective Coalitions}\label{sec:safe_coalitions}

\runinhead{Primitives.} Let $C_0$ be the vanguard, the group whose support is already public when recruitment begins, and let $N$ be a fixed finite set of other people eligible to join. Thus $C_0\cap N=\varnothing$. The \emph{supporter set} $S\subseteq N$ contains those willing to have their names on the published statement if that coalition protects them. A supporter has not necessarily signed or committed. Section~\ref{sec:scope_axioms} establishes when supporters sign, and the Supplemental Appendix separately analyzes public participation.

I measure groups by their \emph{mass}: this is the number of people in a coalition, weighted when some count for more than others, and simply the headcount when everyone counts equally. Each person $i\in C_0\cup N$ has positive mass $a_i$, and a coalition has total mass $m(C)=\sum_{i\in C}a_i$. Write $b=m(C_0)\ge0$ for the vanguard's mass. A public coalition includes the vanguard and any supporters who have joined it, so $C_0\subseteq C\subseteq C_0\cup S$. Its total mass $m(C)$ therefore already includes the vanguard. The vanguard may be empty.

The vanguard gives public organizing a place to start. If no one is safe alone and there is no vanguard, organizing in public cannot begin even though some coalition would protect every member. Even in an environment deeply hostile to dissent, a Social Assurance Contract can still tunnel to the largest protective coalition.

Two groups outside the coalition matter. The \emph{opposition} is whoever can punish a person for publicly supporting the statement: an employer, a university, an accused party, or a hostile public. It can be a single decision-maker or a diffuse group. The model records which coalitions protect people from its response, and Section~\ref{sec:target} later models the opposition's choice of whether to retaliate. \emph{Allies} are whoever would act to protect the people named as supporters, from sympathy or from duty. These could be a court, a labor board, the press, or the public.

To collect endorsements from supporters and manage disclosure to these audiences, the contract requires an \textit{architect} and an \textit{administrator}. The \emph{architect} chooses the statement, the population eligible to sign, and the publication condition, while the \emph{administrator} receives endorsements and publishes them. These may be the same person or a cryptographic system that requires trusting no single party.\footnote{A companion paper discusses practical implementation, including cryptographic systems for collecting endorsements privately and publishing them conditionally \citep{cashmanConditionalDisclosureCoordination}.}

\runinhead{Protection.} For each person $i\in C_0\cup S$, the \emph{protection family} $\Pfam_i\subseteq\{C:C_0\subseteq C\subseteq C_0\cup S,\ i\in C\}$ contains the public coalitions in which she is at least as well off speaking as remaining silent. I take the vanguard to be self-protecting at the outset: $C_0\in\Pfam_i$ for every $i\in C_0$. Under monotone protection below, its members remain protected as the coalition grows. Recruitment therefore needs to check protection only for newcomers, although the guarantee covers the entire coalition. Retaliation falls on identifiable people. In public organizing and contract release, safety is evaluated when both the statement and the signer become public. For the calls to action in Section~\ref{sec:positive_tolerance_frontier}, it is evaluated against those who actually participate in the appointed round. Both evaluations use the same protection families, and neither counts participation in later rounds. The coalition $C$ therefore records who has spoken publicly. Protection families and the opposition remain fixed while the coalition forms. Legal reforms, a change in audience beliefs, or different opposition would change which coalitions protect whom and would therefore define a new protection environment.

A group may protect each of its members, but a person deciding whether to speak does not necessarily know whether that group will form. When answering a public call to action, she must judge whether the others she needs will follow through. A Social Assurance Contract instead waits until it has the endorsements needed for a protective coalition before publishing them together. Whether she finds signing worthwhile remains a separate question.

\begin{assumption}[Monotone protection]\label{ass:M}
For all $i$ and all $C,C'$ with $C\in\Pfam_i$, $C\subseteq C'$, and $i\in C'$: $C'\in\Pfam_i$.
\end{assumption}

\runinhead{Monotone protection.} Monotone protection is safety in numbers: adding people to a public coalition never makes an existing member less safe. It is also what makes a Social Assurance Contract manageable. Once the architect fixes the statement to be published and the eligible class, the administrator can accept another eligible signer without first checking that every incumbent will still be protected at the publication condition, or asking the incumbents to consent again. If a new member could make others less safe, the administrator would instead have to decide whom to admit (and potentially turn people away, which could leak information about who is already in the contract), or reveal information about who has already signed to potential endorsers. Though new entrants can in principle make incumbents less safe, a Social Assurance Contract does not seem workable without monotone protection. Section~\ref{sec:target} gives an economic explanation. In that model, larger coalitions are harder to suppress, so adding members can both discourage retaliation and make it less likely to succeed. Section~\ref{sec:scope} studies what changes when an added name can make others less safe.

Monotone protection does not require everyone to contribute equally. The protection a group provides can depend on who belongs to it. A senior employee, corroborating witness, tenured professor, or legally protected organizer may add more protection than another person of equal mass. For a simpler case, suppose that total public mass alone determines protection. This is the \emph{scalar protection case} below: one number, the group's mass, is enough to evaluate each member's safety, so we can represent the model with a single curve. The general formulation instead records every named coalition that protects each person, allowing identity and coalition composition to matter. Section~\ref{sec:scope} extends the analysis to cases in which stronger opposition, strategic allegations, or guilt by association make an additional member harmful to others.

\begin{definition}
A coalition $C$ with $C_0\subseteq C\subseteq C_0\cup S$ is \emph{self-protecting} if it protects every member, that is, if $C\in\Pfam_i$ for every $i\in C$. I refer to such groups as \emph{protective coalitions}. The vanguard $C_0$ is self-protecting by the starting condition. If it is empty, that condition is vacuous. A \emph{regret-free expansion} maintains protection at every step as new members join the public coalition.
\end{definition}

\runinhead{The scalar protection case.} Here, one number, the protection requirement $\rho_i$, summarizes each person's protection family. Person $i$ is then safe in coalition $C$ exactly when its total public mass reaches her requirement: $C\in\Pfam_i$ iff $m(C)\ge\rho_i$. Vanguard members have $\rho_i\le b$, so the initial coalition already protects them. The function $H_S(y)=m\{i\in S:\rho_i\le y\}$ is the total mass of supporters protected when public mass is $y$.

In the \emph{nonatomic} case, agents form a continuum, each individual has negligible mass, and $H_S$ is continuous over the relevant interval. A self-protecting coalition that includes everyone protected at its size satisfies
\begin{equation}\label{eq:granovetter}
y = b + H_S(y),
\end{equation}
the threshold-cascade equation of \citet{granovetterThresholdModelsCollective1978}. At this fixed point, the coalition includes everyone protected at that mass, and its mass protects no one else.

Figure~\ref{fig:cascade_tunnel} graphs the right side as $G_S(y)=b+H_S(y)$. Crossings with the diagonal are fixed points. Write $\Cs$ for the coalition reached by the public cascade and $\Cp$ for the largest self-protecting coalition. The construction below defines each precisely. Then $y^-=m(\Cs)$ and $y^+=m(\Cp)$ denote the smallest and largest crossings. A crossing is locally stable if a small increase above it causes no further growth; just above such a crossing, $G_S$ lies below the diagonal.

Outside the continuum, this graphical reading needs one refinement. With finitely many people, entrant $i$ contributes her own mass when she joins. She is safe when $m(C)+a_i\ge\rho_i$. The public cascade therefore uses the lower activation threshold $\rho_i-a_i$, whereas a coalition released together must satisfy the original requirement $\rho_i$ (Section~\ref{sec:gap}).

\runinhead{Safe paths and safe destinations.} Two rules keep the path question separate from the destination question. Each takes a coalition as its input and returns a set of people. Mathematically, such a rule is an \emph{operator}. The safe-expansion operator
\[
T(C) = C \cup \{\, i\in S\setminus C : C\cup\{i\}\in\Pfam_i \,\}
\]
adds exactly the people who can safely join the currently public set $C$. It represents one round of public expression. The second operator
\[
\Gamma(C) = C_0\cup\{\, i\in S : C\cup\{i\}\in\Pfam_i \,\}
\]
keeps the vanguard and collects every other supporter who would be safe in $C$ or after joining it. Both operators act on coalitions between $C_0$ and $C_0\cup S$. $\Gamma$ is the static safety check. For $i\in C$, $C\cup\{i\}=C$, so $C$ is self-protecting iff $C\subseteq\Gamma(C)$. It is a fixed point, $C=\Gamma(C)$, when it protects its members and no one else can safely join.

The operators answer different questions. $\Gamma$ asks who would be safe if a candidate coalition appeared at once; its greatest fixed point gives the largest safe destination. $T$ asks who can safely join the coalition already public; iterating it traces a safe path. Because a new participant may be the only newcomer to appear, $T$ tests the existing coalition plus her name. A technology that guarantees joint publication removes the need for this one-at-a-time test. Two institutions facing the same supporters and protection families can therefore reach different coalitions.

\begin{lemma}[Monotonicity]\label{lem:mono}
Under Assumption~\ref{ass:M}, $C\subseteq C'$ implies $\Gamma(C)\subseteq\Gamma(C')$ and $T(C)\subseteq T(C')$, and $T$ is inflationary ($C\subseteq T(C)$).
\end{lemma}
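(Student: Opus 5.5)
The plan is to verify the three claims directly from the definitions of $\Gamma$ and $T$, using Assumption~\ref{ass:M} once in each monotonicity argument. In both cases the key observation is the same. If $C\subseteq C'$, then $C\cup\{i\}\subseteq C'\cup\{i\}$, and both sets contain $i$. So whenever $C\cup\{i\}\in\Pfam_i$, Assumption~\ref{ass:M} gives $C'\cup\{i\}\in\Pfam_i$. One point needs checking: both sets must lie in the domain of the protection family, that is, between $C_0$ and $C_0\cup S$. This holds because the operators act on such coalitions and $i\in S$.

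For $\Gamma$, take $j\in\Gamma(C)$. If $j\in C_0$, then $j\in\Gamma(C')$ trivially. Otherwise $j\in S$ and $C\cup\{j\}\in\Pfam_j$. The observation above gives $C'\cup\{j\}\in\Pfam_j$, so $j\in\Gamma(C')$.

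Inflation of $T$ is immediate from the definition, since $T(C)=C\cup\{\cdots\}$. For monotonicity of $T$, take $j\in T(C)$. If $j\in C$, then $j\in C'\subseteq T(C')$. Otherwise $j\in S\setminus C$ and $C\cup\{j\}\in\Pfam_j$. There are now two subcases. If $j\in C'$, then $j\in T(C')$ because $C'\subseteq T(C')$. If $j\notin C'$, then $j\in S\setminus C'$, and the observation gives $C'\cup\{j\}\in\Pfam_j$, so $j\in T(C')$.

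The only real obstacle is bookkeeping in $T$. The set being added is restricted to $S\setminus C$, so it is not itself monotone in $C$: a person can leave that set because she is already in $C'$. The case split above handles this by routing such a person through inflation. No fixed-point theory is needed for this lemma. The assumption is used exactly once in each of the two monotonicity parts.
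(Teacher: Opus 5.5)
Your proof is correct and follows the paper's own argument. The paper likewise applies Assumption~\ref{ass:M} to $C\cup\{i\}\subseteq C'\cup\{i\}$ to get both monotonicity claims and reads inflation directly off the definition of $T$. Your explicit treatment of the case $j\in C'\setminus C$, handled through inflation, fills in a step the paper leaves implicit.
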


Because $T$ only adds people and $S$ is finite, the sequence $C_0\subseteq T(C_0)\subseteq T^2(C_0)\subseteq\cdots$ stops changing after at most $|S|$ rounds. It stops at the \emph{cascade coalition} $\Cs(S)=T^{|S|}(C_0)$. The exposure barrier is where the safe path ends: at $\Cs(S)$, no one outside it is safe to join in public. The largest self-protecting coalition instead combines all protective coalitions:

\[
\Cp(S) = \bigcup\{\, C:C_0\subseteq C\subseteq C_0\cup S,\ C \text{ is self-protecting} \,\}.
\]

For any available supporter set $W\subseteq S$, use $\Cs(W)$ and $\Cp(W)$ for the same constructions with $W$ in place of $S$, keeping $C_0$ fixed. In particular, $\Cp(\varnothing)=C_0$: without new signatures, the public coalition remains the vanguard.

\runinhead{The largest protective coalition.} In the example in Section~\ref{sec:working_example}, write $v$ for the initial signer, so $C_0=\{v\}$ and $b=1$. Everyone has unit mass, the vanguard member has $\rho_v\le1$, and the four potential recruits have requirements $\rho=(2,3,5,5)$. The cascade reaches $\Cs=\{v,1,2\}$, whereas $\Cp=C_0\cup S=\{v,1,2,3,4\}$ contains the vanguard and all four recruits. The next theorem establishes that a unique largest protective coalition exists under monotone protection, even when protection depends on the identities of the members as well as their mass.

\begin{theorem}[Largest Self-Protecting Coalition]\label{thm:largest}
Under Assumption~\ref{ass:M}, the union of any nonempty family of self-protecting coalitions is self-protecting; hence $\Cp(S)$ is the unique largest self-protecting coalition. Any permission-respecting release rule---one that publishes no name without that agent's permission---that satisfies regret-free safety (formalized in Section~\ref{sec:res}) can expand the public coalition only to a subset of $\Cp(S)$.
\end{theorem}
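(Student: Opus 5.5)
The plan is to prove the union claim directly from Assumption~\ref{ass:M}, then read off the uniqueness of $\Cp(S)$, and finally handle the release-rule claim by an invariant argument. Throughout, the regret-free standard is taken in the sense the paper describes informally: no participant is ever named in a public coalition that does not protect her.

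\textbf{Union of self-protecting coalitions.} Let $\{C_k\}$ be a nonempty family of self-protecting coalitions with $C_0\subseteq C_k\subseteq C_0\cup S$, and put $U=\bigcup_k C_k$.
\begin{itemize}
\item $U$ lies between $C_0$ and $C_0\cup S$, because the family is nonempty and each member does.
\item Take any $i\in U$ and pick some $k$ with $i\in C_k$. Self-protection of $C_k$ gives $C_k\in\Pfam_i$.
\item Since $C_k\subseteq U$ and $i\in U$, Assumption~\ref{ass:M} gives $U\in\Pfam_i$.
\end{itemize}
So $U$ protects every member. Equivalently, in operator terms, $C_k\subseteq\Gamma(C_k)\subseteq\Gamma(U)$ by Lemma~\ref{lem:mono}, hence $U\subseteq\Gamma(U)$.

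\textbf{$\Cp(S)$ is the unique largest.} The family of self-protecting coalitions is nonempty, since $C_0$ belongs to it by the starting condition (vacuously if $C_0$ is empty). It is also finite, because $N$ is finite. By the union claim, $\Cp(S)$ is therefore self-protecting. It contains every self-protecting coalition by construction, so it is the largest. Uniqueness follows because two largest elements would each contain the other.

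\textbf{Release rules stay inside $\Cp(S)$.} I would formalize a permission-respecting, regret-free release rule as generating a sequence of public coalitions $C_0=D_0\subseteq D_1\subseteq\cdots$ with two properties:
\begin{itemize}
\item each $D_t\subseteq C_0\cup S$, since only supporters grant permission;
\item each $D_t$ is self-protecting.
\end{itemize}
The second property is the key one: every person named at step $t$ must be protected by the coalition actually public at $t$. For a newcomer this is the regret-free requirement. For an incumbent it also follows from monotone protection applied to her earlier coalition, since that earlier coalition is contained in $D_t$.

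Given these two properties, every $D_t$ is a self-protecting coalition, so $D_t\subseteq\Cp(S)$ by definition of the union. This also covers follow-through uncertainty. If only some subset of the intended entrants actually appears, regret-free safety still requires the realized coalition to protect everyone named in it. The realized coalition is therefore again self-protecting and lies in $\Cp(S)$.

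\textbf{Main obstacle.} The mathematics is routine. The delicate step is the last one: stating regret-free safety precisely enough, ahead of Section~\ref{sec:res}, that "every realized public coalition is self-protecting" is a faithful consequence of it. I would phrase the claim conditionally on that formalization: whatever the rule's timing or randomization, each realized public set must protect all of its named members.
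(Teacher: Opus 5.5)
Your proof is correct and follows essentially the same route as the paper's. Both arguments show the union is self-protecting by applying monotone protection from the member coalition $C_k$ to $U$, and both use $C_0$ to guarantee nonemptiness so that $\Cp(S)$ is the largest self-protecting coalition. Both then note that any regret-free, permission-respecting rule leaves a self-protecting public coalition, which must therefore lie inside $\Cp(S)$.
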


Theorem~\ref{thm:largest} answers the destination question. Under monotone protection, protective coalitions can be combined, so their union $\Cp$ is protective and contains every other protective coalition. Whether its potential recruits sign is a separate question, taken up in Section~\ref{sec:scope_axioms}. Before turning to whether public organizing can reach $\Cp$ and what an institution must hold to do so, consider why monotonicity matters.

Without monotone protection, two protective coalitions may be incompatible: their union can leave a member unprotected. A contract must then choose among groups, because no unique largest coalition exists. Supplemental Appendix Section~\ref{sec:scope} gives an example and characterizes which coalitions require private collection in this setting.

\subsection{Regret-Free Safety}\label{sec:res}

Public organizing and conditional release face the same safety requirement in this setting, but differ in whether the institution can receive an endorsement before exposing its author.

For a potential recruit $i\in S$, \emph{unprotected exposure} means being identified as a supporter without the protection she requires from the public coalition at the time her safety is assessed, defined below. Let $r_i$ denote the probability of this event she will tolerate, and let an \emph{admissible realization} be a pattern of who goes public and who does not, as described below, on which the institution promises protection. An escrow, lawyer, or organizer can offer this disclosure guarantee without assuming that everyone rejects every risky prospect. Section~\ref{sec:positive_tolerance_frontier} allows $r_i>0$ and interprets this tolerance through the benefit of expression and the harm from unprotected exposure.

\runinhead{Follow-through and verification.} People may intend to speak together, but some may not follow through. The danger is that a person publishes her endorsement while others she was counting on do not.

Time is divided into dates $t=1,2,\dots$. The set $C_t$ contains everyone whose statement is public under her name by the end of date $t$, including the vanguard present at date zero. Because publication is irreversible, $C_0\subseteq C_1\subseteq\cdots$. At date $t$, the set $A_t\subseteq S\setminus C_{t-1}$ contains the people who choose to try to publish the statement under their names. That decision need not lead to publication. A person may change her mind, or a practical obstacle may prevent her statement from appearing. I say that a public action \emph{completes} when the statement becomes public under its author's name. The set $R_t\subseteq A_t$ contains the people whose public actions are completed, so $C_t=C_{t-1}\cup R_t$. For a private commitment, completion instead means that the endorsement reaches its intended recipient. A co-participant is \emph{verified before exposure} when the institution has received and certified her signature rather than relying on her stated intention. Verification establishes identity and ties the signature to the exact statement and publication condition. An \emph{unverified attempt} is a person's attempt to publish her own endorsement before the institution holds the other endorsements needed to protect her. The assumptions below state which patterns of completed and uncompleted actions the institution must handle.

\begin{assumption}[Follow-through uncertainty]\label{ass:U}
Unless the institution has verified co-participants before exposure, any subset of a date's attempted public actions may not be completed: for every $R\subseteq A_t$, the realization in which exactly the actions by $R$ are completed is admissible. In particular, for every $i\in A_t$, the realization in which only $i$'s public action is completed, $R_t=\{i\}$, is admissible.
\end{assumption}

The assumption concerns possibility, not probability. Under the disclosure guarantee, the institution must protect a participant even when she is the only person who follows through. Coordination may make that outcome less likely, but the guarantee must still cover it. The bounds below (Lemma~\ref{lem:reduction}, Theorem~\ref{thm:cascade}(i), and Theorem~\ref{thm:necessity}) use only that case, so they hold under any weaker assumption that keeps it.

A Social Assurance Contract avoids relying on those intentions by collecting endorsements before publishing them together. Section~\ref{sec:necessity} shows why this guarantee requires private custody, and the later analysis allows people to risk speaking without enough protection when others do not follow through.

Organizers have other useful tools. Deniable inquiries and polls can reveal support, repeated interaction can build trust, and costly pledges can make later action more likely. These tools may reduce the expected cost of speaking alone or reduce the probability that others fail to join her, but they do not guarantee protection at publication unless they either make lone expression safe or set up joint publication of sufficient endorsements to make a protective coalition.

\begin{assumption}[Regret-free safety]\label{ass:RES}
For every date $t$, every admissible realization path, and every $i\in C_t$: $C_t\in\Pfam_i$.
\end{assumption}

This assumption is the institution's disclosure guarantee: no participant is ever named in a coalition that fails to protect her. Requiring zero probability of unprotected exposure, as in Section~\ref{sec:positive_tolerance_frontier}, is slightly weaker: it ignores outcomes of the administrator's randomization that have probability zero. Supplemental Appendix Section~\ref{sec:acceptable_risk_supp} shows that the two otherwise coincide. Accepting some chance of unprotected exposure (Section~\ref{sec:positive_tolerance_frontier}) relaxes the guarantee further. Expected-utility protection (Section~\ref{sec:target}) instead specifies what counts as a protective coalition, namely that speaking with that group can be worthwhile despite some chance of retaliation. The guarantee then requires each coalition in which a person becomes public to meet that criterion. It does not allow a poor turnout to be averaged against a favorable one.

To test this guarantee, we need not examine every possible group separately. Under monotonicity, the singleton outcome in Assumption~\ref{ass:U} is enough to test every realization.

\begin{lemma}[Reduction to singleton safety]\label{lem:reduction}
Consider a process in which every current attempt is unverified before exposure. Under Assumptions~\ref{ass:M} and~\ref{ass:U}, the process is regret-free if and only if for every date $t$ and every $i\in A_t$, $C_{t-1}\cup\{i\}\in\Pfam_i$; equivalently, iff $C_t\subseteq T(C_{t-1})$ for every realization.
\end{lemma}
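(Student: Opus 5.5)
The plan is to prove the two directions separately, using only the singleton realization from Assumption~\ref{ass:U} for necessity and Assumption~\ref{ass:M} for sufficiency. I then rewrite the condition in terms of $T$.

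\emph{Necessity.} Suppose the process is regret-free in the sense of Assumption~\ref{ass:RES}. Fix a date $t$ and an $i\in A_t$. All current attempts are unverified, so Assumption~\ref{ass:U} makes the realization $R_t=\{i\}$ admissible. One point needs care: this realization must be reachable along an admissible path. I would take whatever admissible history produced $C_{t-1}$ and append the date-$t$ outcome $R_t=\{i\}$. On this path $C_t=C_{t-1}\cup\{i\}$ and $i\in C_t$, so regret-free safety gives $C_{t-1}\cup\{i\}\in\Pfam_i$.

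\emph{Sufficiency.} Assume the singleton condition holds at every date and for every attempter, and take any admissible realization. I argue by induction on $t$ that every $j\in C_t$ satisfies $C_t\in\Pfam_j$. The base case is $t=0$: every member of $C_0$ has $C_0\in\Pfam_j$ by the starting condition. For the inductive step, take $j\in C_t=C_{t-1}\cup R_t$ and split into two cases.
\begin{itemize}
\item If $j\in C_{t-1}$, the induction hypothesis gives $C_{t-1}\in\Pfam_j$. Since $C_{t-1}\subseteq C_t$ and $j\in C_t$, Assumption~\ref{ass:M} gives $C_t\in\Pfam_j$.
\item If $j\in R_t\subseteq A_t$, the hypothesis gives $C_{t-1}\cup\{j\}\in\Pfam_j$. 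Since $C_{t-1}\cup\{j\}\subseteq C_t$, monotonicity again gives $C_t\in\Pfam_j$.
\end{itemize}
Every coalition involved lies between $C_0$ and $C_0\cup S$, because $A_t\subseteq S\setminus C_{t-1}$. So the protection families are evaluated on their proper domain.

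\emph{Equivalence with $T$.} Since $A_t\subseteq S\setminus C_{t-1}$, the definition of $T$ shows that the singleton condition is exactly $A_t\setminus C_{t-1}\subseteq T(C_{t-1})$. Because $T$ is inflationary by Lemma~\ref{lem:mono}, this is the same as $C_{t-1}\cup A_t\subseteq T(C_{t-1})$. In one direction, this implies $C_t\subseteq T(C_{t-1})$ for every realization, since $R_t\subseteq A_t$. In the other direction, suppose $C_t\subseteq T(C_{t-1})$ for every admissible realization. Applying this to the singleton realization $R_t=\{i\}$ gives $i\in T(C_{t-1})$ for each $i\in A_t$. That recovers the singleton condition.

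I expect the main obstacle to be bookkeeping rather than mathematics. It lies in the necessity direction and the ``every realization'' clause: they quantify over admissible paths, so I must make clear that the singleton outcome at date $t$ can follow any admissible history. That is why the condition is stated for whatever $C_{t-1}$ arises. The monotone-closure step itself is routine.
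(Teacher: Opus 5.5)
Your proof is correct and follows essentially the same route as the paper. Necessity comes from the singleton realization allowed by Assumption~\ref{ass:U}, and sufficiency comes from monotone protection applied by induction from the self-protecting vanguard; your handling of the $T$-equivalence is slightly more careful than the paper's ``by definition'' step, because you note that it needs $C_t\subseteq T(C_{t-1})$ to hold across all admissible realizations, including the singleton ones.
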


Each person going public must therefore be safe with the existing public coalition plus herself, whether she moves on her own or as part of a planned simultaneous action, and the next theorem turns this test into the limit of public organizing.

\subsection{Cascade and Tunnel}\label{sec:cascade_tunnel}

The model now has a common safety standard and two candidate destinations: the largest safe destination and the coalition that public organizing can build. The next theorem identifies how far regret-free public organizing can go. Its attainment part uses a ``greedy'' process, in which everyone who is willing to join the current public coalition joins at once, and this repeats until no one else is willing to join.

\begin{theorem}[Cascade Feasibility]\label{thm:cascade}
Under Assumptions~\ref{ass:M} and~\ref{ass:U}:
\begin{enumerate}
\item[(i)] (\emph{Bound.}) Every decentralized public process described in Section~\ref{sec:res}, whose transitions use only prior public exposure and unverified current attempts, has $C_t\subseteq\Cs(S)$ for every $t$ and every realization when it satisfies regret-freeness.
\item[(ii)] (\emph{Attainment.}) The greedy process $A_t=T(C_{t-1})\setminus C_{t-1}$ satisfies regret-freeness and, in the realization in which every attempted action is completed, reaches exactly $\Cs(S)$ in at most $|S|$ dates.
\item[(iii)] (\emph{Lattice position.}) $\Cs(S)$ is the least fixed point of $\Gamma$ and $\Cp(S)$ is the greatest; the fixed points of $\Gamma$ form a complete lattice on the domain $\{C:C_0\subseteq C\subseteq C_0\cup S\}$.
\end{enumerate}

\end{theorem}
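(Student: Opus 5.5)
I prove the three parts in the order (ii), (i), (iii), since the bound and the lattice position both use the iterates $T^k(C_0)$ constructed for attainment.

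\textbf{Part (ii).} By Lemma~\ref{lem:mono}, $T$ is inflationary and monotone, so $C_0\subseteq T(C_0)\subseteq T^2(C_0)\subseteq\cdots$ is an increasing chain inside $C_0\cup S$. It therefore stabilizes within $|S|$ steps. Fix the greedy process. Every attempt at date $t$ comes from $T(C_{t-1})\setminus C_{t-1}$, so for each $i\in A_t$ we have $C_{t-1}\cup\{i\}\in\Pfam_i$ by the definition of $T$. Lemma~\ref{lem:reduction} then gives regret-freeness on every realization. In the full-completion realization, induction gives $C_t=T^t(C_0)$, which reaches $\Cs(S)$ by date $|S|$.

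\textbf{Part (i).} I show by induction on $t$ that $C_t\subseteq T^t(C_0)$ for every realization of any regret-free process. The base case is $t=0$. For the step, Lemma~\ref{lem:reduction} applies because the transitions use only unverified current attempts. It gives $C_t\subseteq T(C_{t-1})$. By the induction hypothesis and monotonicity of $T$, this is contained in $T(T^{t-1}(C_0))=T^t(C_0)\subseteq\Cs(S)$.

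The one subtlety is that Lemma~\ref{lem:reduction} must hold for each realization path separately. The lemma quantifies over singleton realizations at each history, and Assumption~\ref{ass:U} makes these admissible after any prior history. I would note this explicitly.

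\textbf{Part (iii).} This is the main work.
\begin{itemize}
\item \emph{Complete lattice.} $\Gamma$ is monotone on the finite complete lattice $\{C:C_0\subseteq C\subseteq C_0\cup S\}$. Tarski's theorem then gives that its fixed points form a complete lattice.
\item \emph{Greatest fixed point.} Every fixed point is self-protecting, hence contained in $\Cp(S)$. Conversely, $\Cp(S)$ is self-protecting by Theorem~\ref{thm:largest}, so $\Cp\subseteq\Gamma(\Cp)$. By monotonicity $\Gamma(\Cp)\subseteq\Gamma(\Gamma(\Cp))$, so $\Gamma(\Cp)$ is also self-protecting (it contains $C_0$ by definition). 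Hence $\Gamma(\Cp)\subseteq\Cp$, and $\Cp$ is a fixed point, necessarily the greatest.
\item \emph{$\Cs$ is a fixed point.} Since $T(\Cs)=\Cs$, no $i\in S\setminus\Cs$ satisfies $\Cs\cup\{i\}\in\Pfam_i$, which gives $\Gamma(\Cs)\subseteq\Cs$. For the reverse inclusion, show by induction that each $T^k(C_0)$ is self-protecting. Vanguard members are protected in $C_0$, hence in every larger coalition by Assumption~\ref{ass:M}. Each member $i$ added at step $k$ satisfied $T^{k-1}(C_0)\cup\{i\}\in\Pfam_i$; monotonicity lifts this to $T^k(C_0)$ and onward. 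Therefore $\Cs\subseteq\Gamma(\Cs)$.
\item \emph{$\Cs$ is least.} Let $F$ be any fixed point of $\Gamma$, so $C_0\subseteq F$. I show by induction that $T^k(C_0)\subseteq F$. Suppose $T^{k-1}(C_0)\subseteq F$, and let $i$ be added at step $k$. Then $T^{k-1}(C_0)\cup\{i\}\in\Pfam_i$, and since this set is contained in $F\cup\{i\}$, Assumption~\ref{ass:M} gives $F\cup\{i\}\in\Pfam_i$. Thus $i\in\Gamma(F)=F$.
\end{itemize}
The expected obstacle is the careful bookkeeping of the $C\cup\{i\}$ convention in both operators. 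It is what makes the least fixed point of $\Gamma$ coincide with the limit of $T$, even though $T$ uses the safety test for a newcomer joining the current coalition.
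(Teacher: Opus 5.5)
Your proposal is correct and follows the paper's proof essentially step for step: Lemma~\ref{lem:reduction} plus monotonicity of $T$ give the bound and attainment; self-protection and closure of $\Cs$, together with the induction $T^k(C_0)\subseteq F$, give leastness; and Tarski's theorem gives the lattice. The only variation is in showing $\Cp=\Gamma(\Cp)$: you use the standard post-fixed-point argument ($\Gamma(\Cp)\subseteq\Gamma(\Gamma(\Cp))$, so $\Gamma(\Cp)$ is itself self-protecting), whereas the paper shows elementwise that $\Cp\cup\{i\}$ is self-protecting for each $i\in\Gamma(\Cp)$; both are valid.
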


Theorem~\ref{thm:cascade} gives the exact limit of regret-free public organizing. No safe one-by-one process can pass $\Cs$, and the greedy process, which immediately invites everyone who can safely join, reaches it. Section~\ref{sec:scope_axioms} reports why strategic public expression selects this coalition (Supplemental Appendix Theorem~\ref{thm:selection}).

To go beyond this public cascade, the tunnel requires the published coalition to protect every member. To formalize the tunnel, let $V\subseteq S$ be the people whose valid signatures the administrator has received. The release rule publishes the signed statements of a subset $D(V)\subseteq V$ together, choosing it so that $C_0\cup D(V)$ is self-protecting. The vanguard is already public, so $D(V)$ records only the newly published endorsements. The rule publishes no other names, and all unreleased signatures remain private.

\begin{assumption}[All-or-nothing release]\label{ass:atomic}
The administrator can publish the endorsements of a coalition in a single event, rather than as a sequence of individual exposures subject to Assumption~\ref{ass:U}.
\end{assumption}

All-or-nothing release is one public event. Having received every endorsement does not itself guarantee this output: the administrator must also be able to publish the protecting coalition together to the relevant audience. Without that ability, a partial publication is a leak, covered by Section~\ref{sec:custody_frontier}. The object being published matters here: a jointly published signed statement supplies the modeled expression itself, whereas a promise to attend a later strike or protest still leaves a separate execution stage. Lemma~\ref{lem:custody} derives the prior custody required for joint publication within the stated environment.

\begin{theorem}[Tunnel Implementation]\label{thm:tunnel}
Under Assumptions~\ref{ass:M} and~\ref{ass:atomic}, the rule $D(V)=\Cp(V)\setminus C_0$ satisfies regret-freeness for every realized participation $V$ and produces the public coalition $\Cp(S)$ when all supporters sign. The difference $m(\Cp(S))-m(\Cs(S))$ is nonnegative and is strictly positive iff $\Cs(S)\subsetneq\Cp(S)$, equivalently iff $\Gamma$ has multiple fixed points.
\end{theorem}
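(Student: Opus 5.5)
The plan is to handle the three claims in turn: regret-freeness of the release, the outcome when everyone signs, and the sign and strictness of the mass gap. Throughout I use Theorem~\ref{thm:largest} applied to the supporter set $V$ in place of $S$. Protection families are defined relative to $S$, but for $V\subseteq S$ the coalitions between $C_0$ and $C_0\cup V$ form a subfamily, and Assumption~\ref{ass:M} restricts to it. Hence $\Cp(V)$ is the unique largest self-protecting coalition with $C_0\subseteq\Cp(V)\subseteq C_0\cup V$.

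\emph{Regret-freeness.} Under Assumption~\ref{ass:atomic}, the contract generates only two public states. Before release the public coalition is $C_0$. After release it is $C_0\cup D(V)=\Cp(V)$, because $\Cp(V)\supseteq C_0$.

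Checking Assumption~\ref{ass:RES} therefore needs only two facts. First, $C_0\in\Pfam_i$ for every $i\in C_0$, which is the starting condition. Second, $\Cp(V)\in\Pfam_i$ for every $i\in\Cp(V)$, which is self-protection from Theorem~\ref{thm:largest}.

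Release is a single event, so Assumption~\ref{ass:U} produces no intermediate partial realizations. Signatures outside $D(V)$ stay private, so their authors are never in any $C_t$. The rule is also permission-respecting, since $D(V)\subseteq V$. This argument holds for every realized $V$, including $V=\varnothing$, where $\Cp(\varnothing)=C_0$ and nothing is published.

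\emph{Outcome when all sign.} With $V=S$ the public coalition is $\Cp(S)$ by definition.

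\emph{The gap.} I take the lattice facts from Theorem~\ref{thm:cascade}(iii): $\Cs(S)$ is the least fixed point of $\Gamma$ and $\Cp(S)$ the greatest. This gives $\Cs(S)\subseteq\Cp(S)$. Because every $a_i>0$, additivity gives $m(\Cp(S))-m(\Cs(S))=m(\Cp(S)\setminus\Cs(S))\ge0$. The difference is strictly positive exactly when the set difference is nonempty, that is, when the inclusion is proper.

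For the last equivalence, suppose first that $\Cs(S)\subsetneq\Cp(S)$. Then the least and greatest fixed points are distinct, so $\Gamma$ has at least two fixed points. Conversely, suppose $\Gamma$ has two distinct fixed points. Every fixed point lies between the least and the greatest, so these two cannot coincide, which gives $\Cs(S)\neq\Cp(S)$.

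The step most likely to need care is the use of Theorem~\ref{thm:cascade}(iii), and in particular the identification of $\Cp(S)$ with the greatest fixed point of $\Gamma$ rather than merely the largest self-protecting set. If I must justify this directly, I would argue in three steps.
\begin{itemize}
\item By Lemma~\ref{lem:mono}, $\Gamma$ is monotone. A self-protecting $C$ satisfies $C\subseteq\Gamma(C)$, so $\Gamma(C)\subseteq\Gamma(\Gamma(C))$, and $\Gamma(C)$ is again self-protecting.
\item Maximality of $\Cp(S)$ then forces $\Gamma(\Cp(S))=\Cp(S)$.
\item Every fixed point is self-protecting and so is contained in $\Cp(S)$.
\end{itemize}
The least-fixed-point claim for $\Cs(S)$ is taken as given from that theorem.
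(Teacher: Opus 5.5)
Your proposal is correct and follows the paper's proof. Regret-freeness holds because the only public states are $C_0$ and $\Cp(V)$, the latter self-protecting by Theorem~\ref{thm:largest} applied to $V$, and the gap follows from the lattice position of $\Cs(S)$ and $\Cp(S)$ in Theorem~\ref{thm:cascade}(iii); you simply make explicit the positive-mass arithmetic and the least-versus-greatest fixed-point argument that the paper leaves to that theorem.
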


The release rule is constructive when protection can be checked. Start with all received signers, $V_0=V$, and remove anyone the current group does not protect:
\[
V_{k+1}=\{i\in V_k:C_0\cup V_k\in\Pfam_i\}.
\]
Repeat until no one is removed, always counting the vanguard's protection. Every self-protecting coalition $C\subseteq C_0\cup V$ survives every round: if its recruits $C\setminus C_0$ are contained in $V_k$, monotonicity makes $C_0\cup V_k$ protect them. The sequence stops after at most $|V|$ deletion rounds. If $V_\ast$ is its terminal signer set, $C_0\cup V_\ast$ is self-protecting and contains every other protective coalition available from these signers, so it equals $\Cp(V)$. The endorsements published are $D(V)=V_\ast$. This computes the release from known protection families. It does not elicit them. Proposition~\ref{prop:fullsign} separately addresses participation under the costless signing conditions.

Applying this rule requires a practical protection test and a credible administrator. Section~\ref{sec:scope_axioms} discusses these requirements alongside the decision to sign.

\runinhead{Credible polls.} An administrator could verify signatures and publish their \textit{count} instead of the signers' identities. The result is the credible poll defined in the Introduction. In the scalar protection case a count-only mechanism can privately assemble the upper-fixed-point coalition and publish its mass $y^+$ (Corollary~\ref{prop:count}). A count can change what the opposition believes, but it does not tell supporters whom they can coordinate with---the critical distinction. Joint publication creates a coalition that can act together, which is why named coalitions are the object here (Section~\ref{sec:count_release}).

\section{Crossing the Exposure Barrier}\label{sec:necessity}

The limit on safe public organizing and the contract's release rule answer the first question: which protective coalitions can form? This section answers the second: what must an institution do differently to cross the exposure barrier? Section~\ref{sec:representation_necessity} shows why a crossing that guarantees protection requires \textit{private custody} before publication. Private custody means receiving and safeguarding endorsements that permit publication of a specified statement under a stated publication condition. Sections~\ref{sec:positive_tolerance_frontier} and~\ref{sec:custody_frontier} then compare the risks of a public call to action and private collection when people accept some chance of speaking without enough protection, and Section~\ref{sec:scope_axioms} brings together participation, practical publication rules, and the institutional conditions behind these comparisons.

\subsection{Representation and Guaranteed Protection}\label{sec:representation_necessity}

The coalition model identifies protective coalitions, and we now ask what an institution must do to reach one without leaving people unprotected along the way. Social Assurance Contracts are designed for environments in which retaliation is triggered when a person's endorsement of a specified controversial statement is publicly attributed to her, but before the publication condition is met the endorsement is kept beyond the opposition's reach. If the opposition can observe or punish based on that private endorsement itself, the mechanism cannot provide failure privacy.

To identify what a safe crossing requires, I now specify a general environment describing the actions available to participants and institutions. The same rules apply whether endorsements are handled by a lawyer, a server, a cryptographic protocol, or a trusted person. The question is what an institution must have received before it can publish a statement under several people's names at once. Each action, such as endorsing a statement through a Social Assurance Contract or attending a protest, has an author and a recipient. It can be completed while other actions are not, and its effects cannot be undone once it is complete. For each action, what it would make public is fixed in advance: a post under a person's name, if published, makes that name public and nothing else. Information moves only through completed actions, so a person or device can act only on the public history and on the private commitments it has actually received.

Under these rules, a private commitment reaches its recipient without exposing its author; a public action or a release makes designated names visible; and first naming anyone other than oneself requires a permitting action from that person that the institution has received. The lemmas show that this action is a private commitment, still unpublished when the release happens. Holding the opposition and the protection families fixed isolates the technological question: can an institution guarantee joint attribution without first receiving private commitments? Assumption~\ref{ass:causal} in Appendix~\ref{app:representation} states this environment formally, as environment E1, and Supplemental Appendix Section~\ref{sec:representation_details} discusses its scope.

This account of attribution distinguishes publication of a signed statement from suspicions of dissent. A signed statement directly records the signer's support and can be cited, circulated, or used by institutions. In comparison, an allegation (or an inference to that effect) can be denied, though suspicion may still identify a supporter well enough to trigger retaliation. The model records that event as a leak rather than as successful public expression, because it can harm the person without her joining the public coalition. More generally, a \emph{leak} is any route by which the opposition learns, or can infer, that a particular person has endorsed the statement before release: theft or carelessness in custody, inference from metadata or a side channel, or suspicion strong enough to trigger retaliation. The model treats every such identification the same way.

Two lemmas, stated in Appendix~\ref{app:representation} and proved in Appendix~\ref{app:proofs}, do the work. The first, the attribution dichotomy (Lemma~\ref{lem:dichotomy}), says that every completed action by a participant in this environment is one of two things: an \emph{exposure action}, which identifies its author when it is completed, or a \emph{private commitment}, which reaches a recipient who can verify it while its author stays hidden. The institution's own publications are a third kind of action, a device action that names others, and the environment's permission and receipt rules govern it. The second, the custody lemma (Lemma~\ref{lem:custody}), says that an action first publishing endorsements under several people's names at once requires the releasing party to hold their still-private commitments, except possibly its own. Endorsements already public need no new private commitment. Without such custody, every first attribution is a public action that can be completed on its own, governed by Assumption~\ref{ass:U}.

A cryptographic bulletin board, an embargoed letter, and a trusted delegate use the same technology. Each receives private commitments---a pseudonymous filing, a signed instruction, an encrypted message---and controls the one release that reveals their authors together. By contrast, separate public posts backed by a common promise remain separate attempts because any one of them can be the only one to appear. In each custody example, later publication requires permission tied to the statement and the publication condition. An action that instead changes legal protection or audience beliefs changes the protection environment. It does not provide another way to publish the same endorsements. The definition below calls any person or device controlling this process the administrator.

\begin{definition}[Expression mechanisms and the public-only class]\label{def:mechanism}
Within the general environment, an \emph{expression mechanism} is an extensive form game between an administrator and $S$. At each time point agents attempt \emph{exposure actions}, which publish their expressions under their identities when completed, or \emph{private commitments}, which the administrator verifies on receipt and keeps in private custody without attribution. The administrator observes the completed exposure actions and the private commitments it received. Agents observe the nondecreasing public state $C_h$, starting from the self-protecting vanguard $C_0$. The mechanism may change the public coalition only in response to completed exposure actions or private commitments it received. The admissibility correspondence allows any subset of attempted actions to remain uncompleted, including the case in which only one completes. A mechanism is \emph{public-only} if its publications condition only on completed exposure actions. Every mechanism is permission-respecting: a new expression becomes public under a recruit's identity only through her own completed exposure action or a joint publication she permitted. Regret-free safety applies at every admissible realization.
\end{definition}

\begin{theorem}[Representation of Regret-Free Crossings]\label{thm:necessity}
Under Assumptions~\ref{ass:M} and~\ref{ass:U}, with mechanisms as in Definition~\ref{def:mechanism}, suppose $\Cs(S)\subsetneq\Cp(S)$.
\begin{enumerate}
\item[(i)] No public-only mechanism satisfying regret-freeness reaches, in any realization, a public coalition outside $\{C:C_0\subseteq C\subseteq\Cs(S)\}$; in particular none implements $\Cp(S)$.
\item[(ii)] Any regret-free mechanism that in some realization reaches a public coalition $D\not\subseteq\Cs(S)$ contains a \emph{private conditional-release stage}: at the first completed action that moves the public set outside $\Cs(S)$, the executing device has already received a still-private commitment from every agent the action newly names other than possibly its own author, and this set of commitments is nonempty.
\end{enumerate}
\end{theorem}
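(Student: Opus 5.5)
Both parts come from one induction over the public history, using Lemma~\ref{lem:reduction} and the invariance of $\Cs(S)$ under $T$. The key fact is that $\Cs(S)$ is closed under safe one-by-one entry. Since $\Cs(S)=T^{|S|}(C_0)$ is a fixed point of $T$, no $i\in S\setminus\Cs(S)$ satisfies $\Cs(S)\cup\{i\}\in\Pfam_i$. By Assumption~\ref{ass:M}, the same holds for every $C$ with $C_0\subseteq C\subseteq\Cs(S)$. Otherwise $C\cup\{i\}\in\Pfam_i$ together with $C\cup\{i\}\subseteq\Cs(S)\cup\{i\}$ would force $\Cs(S)\cup\{i\}\in\Pfam_i$. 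Hence $T(C)\subseteq T(\Cs(S))=\Cs(S)$ for all such $C$, which is also Lemma~\ref{lem:mono}.

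For part (i), I would induct on histories of a public-only mechanism, with hypothesis $C_{h}\subseteq\Cs(S)$; the base case is $C_0$. In a public-only mechanism every change to the public set is driven by completed exposure actions. By permission-respecting publication, a recruit's name first appears only through her own exposure action, because no private commitment exists to permit a joint publication. So a newly named $i$ is the author of an attempted, unverified exposure action. By Assumption~\ref{ass:U}, the realization in which only $i$'s action completes is admissible. Regret-free safety then requires $C_{h}\cup\{i\}\in\Pfam_i$, so $i\in T(C_h)\subseteq\Cs(S)$. Since $\Cs(S)\subsetneq\Cp(S)$, the coalition $\Cp(S)$ is never reached.

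For part (ii), take a regret-free mechanism and a realization reaching $D\not\subseteq\Cs(S)$. Consider the first completed action that moves the public set from some $C\subseteq\Cs(S)$ outside it, and let $N'$ be the set of names it newly publishes, so $N'\not\subseteq\Cs(S)$. I would use Lemma~\ref{lem:dichotomy} to classify this action. \emph{Case one:} it is an exposure action naming only its author $i$. Then $i\notin\Cs(S)$. Because it is unverified, Assumption~\ref{ass:U} gives the admissible singleton realization, and regret-freeness forces $C\cup\{i\}\in\Pfam_i$. The closure fact above rules this out. \emph{Case two:} it is a device or joint action naming someone other than its author. Lemma~\ref{lem:custody} then says the executing device already holds still-private commitments from every newly named agent except possibly the author. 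This set is nonempty because the action names at least one non-author; if it named only its author we would be back in case one.

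The main obstacle is making the case split in (ii) exhaustive. An action naming the author together with others previously made public is no problem, since only newly named agents matter. The worry is an author-plus-others action in which every other newly named agent lies inside $\Cs(S)$. Such an action still names a non-author, so custody of that agent's commitment is required and the nonemptiness claim holds. I would therefore phrase the split as ``names some non-author'' versus ``names only its author,'' rather than by membership in $\Cs(S)$. This relies on Lemma~\ref{lem:custody} being stated for all newly named non-authors, which the summary in the text supports.
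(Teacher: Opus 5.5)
Your proposal is correct and follows essentially the paper's proof. Part (i) is the same singleton-realization bound $C_h\subseteq T(C_{h^-})$ followed by induction. In part (ii), your split of the first crossing action into ``names only its author'' versus ``names some non-author'' is equivalent to the paper's Claim~4, which shows the action must name someone besides an unsafe entrant $i\in R\setminus\Cs(S)$. There the author-only case is ruled out directly by per-completion safety of the realized history, so Assumption~\ref{ass:U} (and the claim that the action is unverified) is not actually needed. The dependence you flagged is also avoided in the paper: it does not cite Lemma~\ref{lem:custody}, whose formal hypothesis is a joint-completion event. Instead it reruns that lemma's permission and twin-history argument (Assumption~\ref{ass:causal}(iii)--(iv) with Lemma~\ref{lem:dichotomy}) for each newly named $j\in R\setminus\{w\}$, and that argument covers every newly named non-author.
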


The theorem turns the example's private release into a requirement on institutions. Part (i) rules out every public-only outcome containing anyone beyond $\Cs$, not just a particular order of public entry. Part (ii) identifies the change needed at the first crossing: the releasing party must already hold still-private permission for the statements it publishes under others' names. A lawyer, a trusted delegate, and a software protocol can differ in many respects while performing this same operation. As the fifth supporter in Section~\ref{sec:working_example} illustrates, custody need not extend to people who can join safely after the crossing.

This requirement concerns protection whenever someone becomes public, not immunity from retaliation. A coalition may meet its members' protection requirements even though speaking still carries an expected retaliation cost, provided the benefit of expression is at least as great. Allowing an additional chance that turnout fails to meet those requirements changes the comparison. A public call to action can then cross the exposure barrier by accepting that chance. Section~\ref{sec:positive_tolerance_frontier} measures this alternative, and Section~\ref{sec:custody_frontier} discusses the risk of leaks from private custody.

How much of this result the environment supplies deserves consideration. Allowing one person's public action to be completed without the others is the follow-through uncertainty that bounds the public cascade. A genuinely indivisible collective act that either attributes the whole group or attributes no one lies outside that clause and can cross without prior custody.

The other clauses govern the institution's information and permission. Requiring a device to act only on what it has received rules out release conditioned on commitments that never reached the releasing party, and the permission clause rules out naming anyone without her permission. Theorem~\ref{thm:necessity} is therefore a representation result within the general environment, not an impossibility claim about every conceivable institution: a doxxer or a hostile newspaper that names people without their permission crosses the barrier trivially, and the theorem's scope is institutions that publish only with consent. Within that class, private custody is the common core of every regret-free crossing; other mechanisms may add stages around it, but none can avoid it. These classifications turn on what an institution does, whatever it is called. Probabilistic attribution enters the analysis by allowing some unprotected exposure and does not create a third kind of action. A side channel that identifies a signer before release is a leak. Likewise, a platform that accepts authenticated signatures and publishes them only above a threshold is an administrator relative to the opposition, however public it looks to its users.

Theorem~\ref{thm:necessity}'s stage has three properties that can be perturbed one at a time: privacy, conditionality, and verification. Joint publication is the event that those three make safe. Supplemental Appendix Section~\ref{sec:tightness_supp} classifies familiar institutions by these features, and Remark~\ref{prop:tightness} there shows why each is necessary.

\begin{remark}[Terminal-payoff trust versus per-exposure safety]\label{rem:gale}
In \citet{galeMonotoneGamesPositive2001}, payoffs depend on the final profile of irreversible actions. An early participant can therefore rely on later participation when evaluating the outcome. Regret-free safety instead requires protection whenever someone becomes public, because retaliation can arrive before later entrants. A protective destination need not make the path to it safe.
\end{remark}

Information and private custody address different obstacles. If an employer learns that most of the staff agree, it may retaliate less, making some people willing to speak with a smaller group. That changes the protection environment. A contract instead changes how a group forms within a given environment. Supplemental Appendix Section~\ref{sec:design} separates these effects formally.

\subsection{When a Public Call to Action Can Cross the Exposure Barrier}\label{sec:positive_tolerance_frontier}

A private contract is not the only way to get supporters past the exposure barrier when they are willing to risk speaking without enough protection. A public call to action can also succeed because enough people may in fact follow through to protect everyone who appears. The question is when the risk of too little turnout becomes small enough to accept. The earlier criterion instead requires every realized coalition to protect its members. That can mean making expression worthwhile in expectation while some chance of retaliation remains.

\runinhead{Scheduled action rounds.}\label{par:scheduled_rounds} People decide whether to answer before learning who else follows through. Those who arrive late do not count for this action: protection depends on the people actually participating in the appointed round, together with the coalition already public. Later participation cannot repair a shortfall in that round. Formally, fix all attempts before any completion draw, with no further participant decisions or device actions until after the round-end assessment. If the preceding coalition is $C$ and exactly the attempts in $R$ complete this round, a completing participant $i$ is unprotected exactly when $C\cup R\notin\Pfam_i$. Safety is assessed against everyone who completes in the round, not against the order in which they complete.

This convention retains E1's individual actions, causal information, and permission rules (Assumption~\ref{ass:causal}). Only the risk assessment of a scheduled public round changes. Outside these rounds, protection is still assessed at each exposure event, including device releases. A scheduled time is not all-or-nothing completion: every subset of the attempted group remains possible, and nobody collects the group's endorsements beforehand. At zero tolerance the necessity theorem still applies to scheduled rounds, except on outcomes of probability zero (Supplemental Appendix Proposition~\ref{prop:interim_necessity}(iii)). The positive-risk bounds below concern scheduled rounds.

For this calculation, each attempted public action completes independently with probability $1-\varepsilon\in(0,1)$, after the person has chosen to act. If an attempt does not complete within the round, it is costless and reveals no new name. People who remain willing to act can try again in a later round, with the same independent chance of completion. Monotone protection continues to hold.

A call to action is \emph{acceptable} to person $i$ if her probability of unprotected exposure, conditional on her own public action being completed, is at most her tolerance $r_i$. One interpretation gives her a benefit $e_i$ from expression and a loss $h_i$ from being exposed without protection. With probability $p$ of that loss, expression pays $e_i-p h_i$, so she accepts when $p\le r_i=e_i/h_i$. A larger benefit or smaller harm supports more risk.

Here the harm occurs exactly when the group does not protect her. Section~\ref{sec:target} instead allows retaliation to succeed with some probability even in a protective coalition, defining protection by expected payoffs. The present calculation asks whether a person accepts a chance that turnout will fail to meet her protection requirement.

\runinhead{Turnout in response to a call to action.} A call to action asks people to act at once: a strike date, a scheduled protest, or an announcement that everyone should post at noon. Because the call to action is broadcast rather than whispered, it names no supporters. Each person decides privately whether to answer, and nobody holds a list that could leak. The uncertainty instead concerns who will show up. Rather than tunneling beneath the barrier, the group tries to clear it in one coordinated leap, relying on scale for protection. When enough people follow through, the realized coalition is self-protecting even though no individual's participation is guaranteed. The person who issues the call to action is taken to belong to the already-public, self-protecting vanguard. If issuing the call to action would first expose a new organizer, her protection would also have to be checked.

This absence of a list distinguishes the call to action from one organized by soliciting particular people and keeping a list of who agreed. Such a list is a set of supporters in someone's custody, so it carries the custody risk of a contract together with the follow-through risk of a call to action. That arrangement combines parts of the two institutions and is not the list-free alternative considered here. Supplemental Appendix Section~\ref{sec:dominance_supp} bounds the set of recruits that any acceptable mechanism exposes with positive probability when each requires some chance of protection. With leak-free custody and free private retries, a contract attains the largest protective coalition with probability one, without unprotected exposure, if supporters keep signing until their endorsements arrive.

To calculate the risk of answering, begin with a finite group. Put $y^-=m(\Cs)$, and for a target $y_t>y^-$ let $B(y_t)=\{i\in S\setminus\Cs:\rho_i\le y_t\}$ be the \emph{same-date group}: the set, defined for the analysis, of supporters outside the cascade coalition whom the announced target would protect in this scheduled round. ``Same-date'' means the same appointed action round, not separate actions pooled across a calendar day. The organizer broadcasts the call to action and need not know who belongs to $B(y_t)$. The calculation assumes that people know their own requirements and that exactly those in $B(y_t)$ answer.

The group must do more than protect its members at the target: it must also make further safe growth possible. Call $y_t$ a \emph{bridging target} if, once all of $B(y_t)$ are public, ordinary safe expansion reaches $\Cp$: $T^{|S|}(\Cs\cup B(y_t))=\Cp$. Write $m_B=m(B(y_t))$, $s(y_t)=y^-+m_B-y_t$ for the mass available above the target when the whole group appears, and $\Delta(y_t)=(1-\varepsilon)m_B-(y_t-y^-)=(1-\varepsilon)s(y_t)-\varepsilon(y_t-y^-)$ for expected safety slack after allowing for people who do not follow through.

This finite calculation has a graphical counterpart. In the three-crossing nonatomic graph, every $y_t\in(y^u,y^+]$ is the limiting analogue of a bridging target, $m_B=G_S(y_t)-y^-$, and $s(y_t)=G_S(y_t)-y_t$. The proposition itself concerns a finite group and its finite destination. The continuous graph supplies the corresponding geometry and adds no agents to the economy.

The calculation concerns the specified answering group. Outside this calculation, people with higher requirements may also answer. Under monotone protection their participation cannot make members of $B(y_t)$ less safe, but the proposition does not protect an additional respondent unless realized public mass reaches her own requirement.

\begin{proposition}[The call to action]\label{prop:call}
In a finite scalar protection case with scheduled-round assessment and independent completion as above, fix a bridging target $y_t$ whose same-date group $B=B(y_t)$ has $n$ equal-mass members:
\begin{enumerate}
\item[(i)] (\emph{Mean condition.}) The expected realized public mass is $y^-+(1-\varepsilon)m_B=y_t+\Delta(y_t)$: mean realized mass clears the target exactly when $(1-\varepsilon)\,s(y_t)\ge\varepsilon\,(y_t-y^-)$. This is a condition on the mean; part (ii) bounds the probability of an unprotected outcome.
\item[(ii)] (\emph{Concentration.}) With $n$ same-date group members of equal mass $m_B/n$ and $\Delta=\Delta(y_t)>0$, every member's unprotected probability conditional on her own public action being completed is at most $\exp(-2n\Delta^2/m_B^2)$, as is the probability that the realized mass falls short of $y_t$; the call to action is acceptable to every member once $n\ge(m_B^2/2\Delta^2)\ln(1/\underline{r})$, where $\underline{r}=\min_{i\in B}r_i>0$ is the same-date group's minimum tolerance. Along equal-mass refinements with fixed $m_B$ and $\Delta>0$, the unprotected probability vanishes.
\item[(iii)] (\emph{Composition.}) Under the positive-slack condition in part (ii), if the realized public mass reaches $y_t$, every member who did not become public on her first attempt can try again safely. Once all of $B$ are public, safe expansion reaches $\Cp$. Thus a public-only composite of the call to action, safe retries, and safe expansion reaches the finite economy's tunnel destination except with probability at most $\exp(-2n\Delta^2/m_B^2)$.
\end{enumerate}
\end{proposition}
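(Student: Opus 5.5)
The plan is to treat the same-date round as a sum of independent Bernoulli contributions and apply Hoeffding's inequality. Part (iii) then follows by chaining monotone protection, Lemma~\ref{lem:reduction}, and the definition of a bridging target.

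\textbf{Part (i).} Let $X_j\in\{0,1\}$ indicate completion for member $j\in B$, and let each member have mass $a=m_B/n$. The realized public mass is $Y=y^-+a\sum_j X_j$, since $\Cs$ is already public with mass $y^-$. Linearity of expectation gives $\mathbb{E}Y=y^-+(1-\varepsilon)m_B$. Substituting $m_B=s(y_t)+y_t-y^-$ gives $\mathbb{E}Y=y_t+\Delta(y_t)$, so $\mathbb{E}Y\ge y_t$ iff $\Delta(y_t)\ge0$. That is the displayed inequality after rearranging the two expressions for $\Delta$. No independence is needed for this part.

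\textbf{Part (ii).} Fix $i\in B$ and condition on $X_i=1$. In the scalar case, $i$ is unprotected exactly when $Y<\rho_i$. Since $\rho_i\le y_t$, this event is contained in $\{Y<y_t\}$. Conditional on $X_i=1$, $Y$ equals $y^-+a+a\sum_{j\neq i}X_j$. By independence, this conditional distribution stochastically dominates the unconditional one, because $a$ replaces $aX_i\le a$. So
\[
P(Y<y_t\mid X_i=1)\le P(Y<y_t)=P\bigl(Y-\mathbb{E}Y<-\Delta\bigr).
\]
Hoeffding's inequality applies to $n$ independent summands, each ranging over an interval of length $a=m_B/n$. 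It bounds the right side by $\exp(-2\Delta^2/(n a^2))=\exp(-2n\Delta^2/m_B^2)$. Setting this at most $\underline r$ and solving gives the stated threshold for $n$. The vanishing claim is immediate because $m_B$ and $\Delta$ are held fixed while $n\to\infty$.

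The main subtlety I expect is in making this conditioning step rigorous. I need $\{Y<\rho_i\}$ to be monotone in the completion vector, and I need the conditioning to preserve independence of the other coordinates. Both hold under the independent-completion model.

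\textbf{Part (iii).} Work on the event $Y\ge y_t$, which has probability at least $1-\exp(-2n\Delta^2/m_B^2)$. On this event, every $j\in B$ with $X_j=0$ satisfies $\rho_j\le y_t\le m(C)$ for the current public coalition $C$. So $C\cup\{j\}\in\Pfam_j$, and by Assumption~\ref{ass:M} this persists as $C$ grows. A retry is therefore a singleton-safe public action. Lemma~\ref{lem:reduction} shows that a process of such retries, together with greedy $T$-steps, is regret-free, even though each retry completes only with probability $1-\varepsilon$.

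Retries repeat independently, so every member of $B$ becomes public eventually, with probability one. At that point the public coalition contains $\Cs\cup B(y_t)$. By the bridging property, $T^{|S|}(\Cs\cup B)=\Cp$, and the greedy process of Theorem~\ref{thm:cascade}(ii) reaches it safely. Failed attempts are harmless, since only completions matter for who is public. The only failure event is the initial shortfall, whose probability is the bound from part (ii).
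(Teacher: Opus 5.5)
Your proof is correct and follows the paper's route: the mean computation in (i), Hoeffding's inequality for the independent equal-mass completions in (ii), and singleton-safe retries plus the bridging property in (iii). The one difference is in (ii). You bound the conditional probability by the unconditional shortfall $\Pr(Y<y_t)$ through a monotone coupling and apply Hoeffding once to all $n$ summands. The paper instead applies Hoeffding directly to the other $n-1$ members with the larger deviation $t^\ast=\Delta+\varepsilon m_B/n$, which gives a slightly sharper intermediate bound that it then relaxes, and it handles $n=1$ as a separate case; your coupling makes that case split unnecessary.
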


Calls to action lie between public organizing and Social Assurance Contracts. For the result above, expected turnout needs to exceed the participation required for protection, and that extra participation provides a buffer against people not following through. Turnout here is measured by total public mass, including the vanguard. Follow-through is independent across the people who have decided to answer: learning whether one appears tells us nothing about whether the others do. When many people each supply a small share of the group's mass, total turnout is unlikely to fall far below its expected level. The \emph{concentration bound} above puts an upper limit on the probability of that shortfall. Holding the group's total mass and expected margin above the target fixed, dividing that mass among more equal-sized participants makes this upper limit smaller. The risk can therefore fall below any fixed positive tolerance, although the possibility of appearing alone remains, so the process does not become regret-free at $r=0$. A Social Assurance Contract instead checks who has signed before deciding whom to publish, so participants need not rely on predictions of turnout.

A call to action can carry little risk in the calculation above yet seem too risky to prospective participants. It relies on forecasts of turnout, whereas a Social Assurance Contract checks endorsements it has received. A credible poll can improve those forecasts, but a count of private agreement need not establish willingness to act, and even accurate forecasts do not guarantee follow-through. Pessimism can deter participation, but a smaller responding group may still protect itself and allow further public expansion.

\subsection{Turnout Risk versus Custody Risk}\label{sec:custody_frontier}

The primary risk to someone answering a call to action is being stood up. The Social Assurance Contract instead waits for the necessary endorsements, but what it holds could leak. A harmful leak identifies a signer to the opposition without the protection she requires. A partial public release can expose too few people to protect one another, and even theft of the whole list by the opposition does not supply the protection that publication to allies and the wider audience would. Supplemental Appendix Section~\ref{sec:frontier_supp} gives formal bounds combining custody and follow-through risk.

Whether people follow through together can change this comparison. A shared threat that makes many people stay home while others still appear can invalidate the bound derived under independent follow-through. Correlation need not be harmful, however. Under scheduled-round assessment, if a protective group completes the round together or not at all, no participant faces a protection shortfall. That pattern of follow-through rules out the outcomes in which only one person appears, which Assumption~\ref{ass:U} allows, and announcing a common time does not produce it. Nor does it turn separate actions into one joint publication. Choosing among calls to action, contracts, and hybrids would also require their costs, delays, and participation responses.

\runinhead{Custody in practice.} Protecting private endorsements requires limiting who can read them before publication. Threshold cryptography can keep the list unreadable to any single party, including the administrator, while allowing the publication condition to be checked \citep{evansPragmaticIntroductionSecure2018,breuerSeCritMassThresholdSecret2024,mangipudiCollusionDeterrentThresholdInformation2023}. Supplemental Appendix Section~\ref{sec:custody_practice} discusses these safeguards and the information that recruitment can reveal.

Secure storage is not enough if recruitment or signing identifies supporters to the opposition. Protecting that part of the process is another requirement on implementation. Both the contract and the call to action can announce their existence publicly. The call to action's remaining advantage is that it holds no list.

\subsection{Participation and Institutional Conditions}\label{sec:scope_axioms}

The results so far compare what institutions can publish and the risks their participants bear. Applying those comparisons also requires knowing who will sign, how the administrator checks protection, and whether private collection changes the environment it faces. This subsection addresses those questions in that order.

\runinhead{Participation.} Three questions must be distinguished: First, does the contract protect everyone whose endorsement it publishes? Second, is there an equilibrium in which enough people sign for publication? Third, does the model predict that people will choose that equilibrium? Theorem~\ref{thm:tunnel} answers the first question. The other two depend on what signing costs and what participants believe.

Under the \emph{costless signing conditions}, signing costs nothing, stays hidden until release, never leaks, and is governed by a credible publication rule. A hidden signature cannot change anyone's behavior, protection, or the audience's response before publication. Consider a one-shot simultaneous game in which each recruit signs or abstains without observing others' choices. Every chosen signature reaches the administrator, which then publishes $D(V)=\Cp(V)\setminus C_0$. An unpublished recruit earns zero. A published recruit earns $u_i(\Cp(V))$, where $u_i(C)\ge0$ exactly when $C$ protects her. \emph{Strict safe surplus} means $u_i(C)>0$ in every coalition that protects her.

\begin{proposition}[Signing under costless signing conditions]\label{prop:fullsign}
Under Assumptions~\ref{ass:M} and~\ref{ass:atomic}, in the signing game just defined:
\begin{enumerate}
\item[(i)] Signing is weakly better than abstaining against every profile of others' choices, with payoffs evaluated at the resulting coalition. Everyone signing is a Nash equilibrium.
\item[(ii)] Under strict safe surplus, signing weakly dominates abstaining for every $i\in\Cp(S)\setminus C_0$. An equilibrium avoids weakly dominated strategies if and only if all these recruits sign, in which case the full public outcome is $\Cp(S)$. Recruits outside $\Cp(S)$ can never be published and are indifferent. All-abstain is also an equilibrium when no recruit can safely join the vanguard alone. The selection is by dominance, not uniqueness.
\end{enumerate}
\end{proposition}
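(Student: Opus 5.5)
The plan rests on one structural fact: under Assumption~\ref{ass:M}, $\Cp(\cdot)$ is monotone in the signer set. If $V\subseteq V'$, then every self-protecting coalition inside $C_0\cup V$ also lies inside $C_0\cup V'$. Taking unions, Theorem~\ref{thm:largest} gives $\Cp(V)\subseteq\Cp(V')$.

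For part (i), fix recruit $i$ and others' choices $V_{-i}$. If $i$ abstains, she is never published and earns zero. If she signs, the administrator publishes $D(V_{-i}\cup\{i\})$, by Assumption~\ref{ass:atomic} as a single event. Either $i\notin\Cp(V_{-i}\cup\{i\})$, and she again earns zero, or she is published in $\Cp(V_{-i}\cup\{i\})$. That coalition is self-protecting by Theorem~\ref{thm:largest}, so it protects her and $u_i\ge0$. Hence signing is weakly better against every profile, and all-sign is a Nash equilibrium: each player's deviation to abstain gains nothing.

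For part (ii), weak dominance needs one profile where signing is strictly better. Take $i\in\Cp(S)\setminus C_0$ and let everyone else sign. Then $V=S$ and $i$ is published in $\Cp(S)$, which protects her, so strict safe surplus gives $u_i>0$, compared with zero from abstaining. Combined with part (i), signing weakly dominates abstaining.

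For the ``if and only if'' in part (ii), I would argue the two directions separately.
\begin{itemize}
\item[(a)] An equilibrium avoiding weakly dominated strategies cannot have any such $i$ abstain, since abstaining is weakly dominated for her. So all of $\Cp(S)\setminus C_0$ sign.
\item[(b)] Conversely, suppose all these recruits sign. Then $\Cp(S)\setminus C_0\subseteq V\subseteq S$, and monotonicity gives $\Cp(S)\subseteq C_0\cup\Cp(V)\subseteq\Cp(S)$. Hence the outcome is $\Cp(S)$ regardless of what outsiders do.
\end{itemize}
Step (b) is the point I expect to need the most care. It must show that the published coalition is exactly $\Cp(S)$, that outsiders' choices are irrelevant, and that no remaining choice is weakly dominated, so that the profile is still an equilibrium.

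The remaining claims follow from the same facts.
\begin{itemize}
\item Recruits outside $\Cp(S)$ are never published, since $\Cp(V)\subseteq\Cp(S)$ for every $V$. Their payoff is therefore zero either way, which makes them indifferent, so neither of their choices is weakly dominated.
\item For all-abstain to be an equilibrium, consider a unilateral deviation by $j$, which gives $V=\{j\}$. The set $\Cp(\{j\})$ is either $C_0$ or $C_0\cup\{j\}$. The latter requires $C_0\cup\{j\}\in\Pfam_j$, which is ruled out by the hypothesis that no recruit can safely join the vanguard alone. So the deviation yields zero and all-abstain is an equilibrium. Its coexistence with all-sign shows that the selection comes from dominance, not from uniqueness of equilibrium.
\end{itemize}
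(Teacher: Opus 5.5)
Your proposal is correct and follows essentially the same route as the paper: signing is weakly better because $\Cp(V_{-i}\cup\{i\})$ is self-protecting, signing is strictly better when everyone else signs, recruits outside $\Cp(S)$ are indifferent because $\Cp(V)\subseteq\Cp(S)$, and all-abstain survives because a lone signature releases no one. One small correction to step (b): $\Cp(S)\subseteq\Cp(V)$ does not follow from monotonicity of $\Cp(\cdot)$, which gives only $\Cp(V)\subseteq\Cp(S)$; it follows from maximality, since $\Cp(S)$ is itself a self-protecting coalition lying between $C_0$ and $C_0\cup V$, and this is exactly how the paper argues.
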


The full-publication conclusion assumes that signatures arrive. If a costless, private signing attempt may fail to reach the administrator, attempting still pays at least as much as abstaining, but everyone attempting need not deliver every endorsement. The release rule protects everyone it publishes from among the endorsements received. Collecting every endorsement additionally requires that people keep trying until their submissions arrive, as in the collection process of Supplemental Appendix Section~\ref{sec:dominance_supp}.

Even a small unreimbursed signing cost can change the selection. Take two potential recruits who cannot safely join the vanguard individually but are protected if both join, with benefit $v>0$ from safe publication and signing cost $c\in(0,v)$. Both signing remains an equilibrium, but both abstaining is now a strict equilibrium: signing alone incurs $c$ and releases nothing. If person $i$ assigns probability $p_i$ to the other signing, her expected gain is $p_i v-c$. A cost small relative to exposure harm need not be small relative to that expected gain. Cryptographic custody can reduce signing risk without making this participation comparison disappear. The companion paper \citep{cashmanConditionalDisclosureCoordination} studies entry with signing frictions through a global game in the tradition of \citet{carlssonGlobalGamesEquilibrium1993}, \citet{morrisGlobalGamesTheory2003}, and \citet{angeletosDynamicGlobalGames2007}. Its comparison, in which all supporters share one protection requirement, is separate from this paper's guarantee for whatever valid signer set $V$ is actually received.

In the corresponding game of public participation, analyzed in the Supplemental Appendix, players choose among strategies that are safe under every completion pattern, and everyone who can safely join acts immediately, reaching $\Cs$ almost surely (Theorem~\ref{thm:selection}).

\runinhead{Implementing the publication condition.} The set $V$ contains at most one currently valid signature from each eligible person. A signature with an institutional login, ORCID account, or cryptographic certificate credibly records her permission to publish the statement under her name if the announced publication condition is met. It verifies who endorsed what, not sincere support or coalition safety, and enters $V$ only after whatever screening the architect requires.

The administrator must check whether the current signers and the vanguard together would form a protective coalition. The general rule $D(V)=\Cp(V)\setminus C_0$ describes what is feasible, not what an administrator can observe. A practical contract must therefore preannounce a test the administrator can apply: a total threshold, class-specific quotas, or both. For example, 500 signatures including 100 from tenured faculty could trigger publication. These tests implement coarse versions of the protection families. A finer version asks each signer to report her protection requirement, much as the subjects of \citet{braghieriThresholdDisclosureCollective2026} chose their own disclosure thresholds. In the scalar protection case, truthful reporting is optimal. A signer who knows only an upper bound can report that bound, which may shrink the released coalition but cannot place her in a coalition that does not protect her (Supplemental Appendix Proposition~\ref{prop:report}). Whether that option is useful depends on how well potential signers know their own requirements, which remains unclear.

Even with a usable protection test the administrator must hide unreleased signatures and obey the publication rule, which creates a credibility problem \citep{akbarpourCredibleAuctionsTrilemma2020}. This is related to the entrepreneurial moral hazard that \citet{strauszTheoryCrowdfundingMechanism2017} studies in all-or-nothing crowdfunding, where the party that collects the pledges can divert them and the platform is assumed to enforce the rules. For endorsements, cryptographic custody can constrain the administrator's discretion and shift the residual risk to credentials, key custody, code, and the devices people sign from. Section~\ref{sec:custody_frontier} treats the custody risk that remains.

The formal comparison also presupposes the domain stated in Section~\ref{sec:representation_necessity}, in which the opposition responds to a coalition going public, while private permission stays out of its reach. Social and professional retaliation often has this form, since a signed public statement gives an employer, organization, or public audience evidence that a rumor or third-party allegation does not. Within this domain, the formal comparison holds fixed the opposition, its willingness to punish, and the protection each coalition provides. This isolates what the timing of attribution changes.

Supplemental Appendix Section~\ref{sec:custody_practice} discusses threats to these conditions, including infiltration and false contracts.

\section{Who Becomes Public, and How the Opposition Responds}\label{sec:target}

Who benefits from reaching a coalition that public organizing cannot assemble? The earlier analysis leaves this question in terms of supporters' protection requirements. This section gives those requirements an economic interpretation through the opposition's decision to fight or fold, and then compares who becomes public and what happens to their expression. It begins with the assumption that larger coalitions are harder to suppress because their greater resilience can both discourage retaliation and make it less likely to succeed. Here I model the opposition as a single decision-maker choosing whether to retaliate. It could be an employer deciding whether to fire workers, an accused person deciding whether to intimidate witnesses, or a political faction deciding whether to repress critics.

Let $q(C)\in[0,1]$ be the probability that coalition $C$ survives retaliation. Adding members weakly raises this probability. The opposition loses $d>0$ if the coalition survives and pays a retaliation cost $\kappa_T\in(0,d)$ if it fights. Retaliation succeeds with probability $1-q(C)$, so fighting produces an expected gain of $d[1-q(C)]$. After observing the public coalition, the opposition fights when this gain exceeds the cost $\kappa_T$ and folds otherwise. Equivalently, the opposition has a break-even survival probability $q^\ast=1-\kappa_T/d$ and fights coalitions with resilience below that threshold. For the comparison, this threshold is drawn once from $F$ and known to the opposition; participants evaluate release before learning it. The same draw and distribution are used when comparing coalitions. This is a comparison of responses to given public coalitions, not a repeated game in which participants update $F$ from attacks during recruitment.
\begin{equation}\label{eq:fold}
C\ \text{draws no retaliation iff}\ q(C)\ge q^\ast;\qquad
\Pr(C\ \text{is fought})=1-F\bigl(q(C)\bigr).
\end{equation}

The probability $1-F(q)$ reflects participants' uncertainty about the opposition's cost of retaliation. The opposition itself does not randomize: given its cost and the coalition, it chooses whether fighting is worthwhile. If it does fight, the separate probability $1-q$ describes whether it succeeds. The overall probability of suppression is therefore $\pi(q)=(1-F(q))(1-q)$. As a coalition grows, the increase in $q$ matters twice: it makes the opposition less likely to fight and makes any fight less likely to succeed.

To translate this suppression probability into protection, suppose person $i$ gains $e_i$ from expression and loses $h_i$ if the coalition is suppressed. She accepts publication when $e_i\ge\pi(q(C))h_i$. The starting condition requires $e_i\ge\pi(q(C_0))h_i$ for every vanguard member $i\in C_0$, so the initial coalition already protects them under this criterion. This criterion averages over the opposition's type and the outcome of a fight. A person can therefore prefer joining in advance yet be worse off than silence if suppression occurs and its harm exceeds her benefit from expression. If the published coalition is also uncertain, expected utility is $e_i-h_i\,\mathbb E[\pi(q(C))]\ge0$, rather than the binary risk bound of Section~\ref{sec:positive_tolerance_frontier}. Regret-free disclosure requires every coalition in which a person becomes public to meet this expected-payoff criterion. The guarantee concerns who appears with her, not how a subsequent fight turns out. Lemma~\ref{lem:interim} in Appendix~\ref{app:proofs} applies the coalition results to these expected-payoff protection families.

With these protection families established, the selection result in the Supplemental Appendix (Theorem~\ref{thm:selection}) requires, in addition, strictly positive expected surplus in every protecting coalition. The \emph{fight-or-fold model} derives monotone protection from monotone resilience while holding the opposition's stakes and costs fixed; it does not derive safety in numbers for unrestricted strategic retaliation.

\runinhead{Who the tunnel brings into view.} The additional public participants are not an arbitrary sample of silent supporters. In the scalar protection case, the public cascade has already recruited everyone whom its size protects. The tunnel can therefore bring in the people who need more protection (who need a lower probability of successful suppression before speaking is worthwhile), which the next result connects to the benefit of expression and the harm from retaliation.

\begin{corollary}[Composition of released coalitions]\label{cor:composition}
In the nonatomic scalar protection case, every hidden supporter has a strictly higher protection requirement than every member of the cascade coalition, so $\rho_i>y^-\ge\rho_j$ for all $i\in\Cp\setminus\Cs$ and $j\in\Cs$. When the cascade coalition has positive mass, the full coalition formed through a Social Assurance Contract therefore has a weakly higher average requirement than the cascade coalition grown from the same vanguard, and strictly higher whenever the hidden supporters have positive mass. In the fight-or-fold model, every hidden supporter has a strictly lower ratio of expressive benefit to exposure harm than every member of the cascade coalition, so the tunnel's additional members are those with the most to lose or the least to gain from speaking.
\end{corollary}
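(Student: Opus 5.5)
The plan has three parts, taken in the order of the statement. I first pin down $\Cs$ in the nonatomic scalar case. Then I read off the threshold separation from the cascade's fixed-point property, average it for the second claim, and translate thresholds into benefit--harm ratios through the fight-or-fold criterion.

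\textbf{Step 1: describe the cascade coalition.} In the nonatomic case each agent has negligible mass, so the activation threshold $\rho_i-a_i$ coincides with $\rho_i$. By Theorem~\ref{thm:cascade}(iii), $\Cs$ is the least fixed point of $\Gamma$. Its mass $y^-$ therefore satisfies $y^-=b+H_S(y^-)$, and it is the smallest crossing of $G_S$ with the diagonal.

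A fixed point $C=\Gamma(C)$ in the scalar case contains exactly the vanguard and the supporters with $\rho_i\le m(C)$. So
\[
\Cs=C_0\cup\{i\in S:\rho_i\le y^-\}.
\]
The fact I need is the following: every supporter with $\rho_i\le y^-$ already lies in $\Cs$, and every member $j$ of $\Cs$ satisfies $\rho_j\le y^-$. For recruits this holds by construction. For vanguard members it holds because $\rho_j\le b\le y^-$.

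\textbf{Step 2: separation and averages.} Take a hidden supporter $i\in\Cp\setminus\Cs$. Since $i\in S\setminus\Cs$, Step 1 gives $\rho_i>y^-$. Combined with $\rho_j\le y^-$ for $j\in\Cs$, this yields $\rho_i>y^-\ge\rho_j$.

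For the averages, write the mass-weighted average requirement of $\Cp$ as a convex combination of the averages over $\Cs$ and over $\Cp\setminus\Cs$. The hidden part's average exceeds $y^-$, and the cascade's average is at most $y^-$. When $m(\Cs)>0$, the combination is therefore weakly above the cascade average. It is strictly above whenever $m(\Cp\setminus\Cs)>0$, because it then places positive weight on values strictly above $y^-$.

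The main obstacle is making Step 1 rigorous for the vanguard's requirements. The definition only gives $\rho_j\le b$ for $j\in C_0$, and I use $b\le y^-$, which holds because $\Cs\supseteq C_0$. Vanguard members with very low requirements only pull the cascade average down, which helps the inequality. A secondary point is measurability: averages need $\rho$ to be integrable against the mass measure, which I take as part of the nonatomic setup.

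\textbf{Step 3: the fight-or-fold translation.} Use the expected-payoff protection families from Lemma~\ref{lem:interim}. In the scalar case, $q$ depends on the coalition only through its mass and is weakly increasing in it. Then $\pi(y)=(1-F(q(y)))(1-q(y))$ is weakly decreasing in $y$, and $i$ is protected at mass $y$ iff $e_i/h_i\ge\pi(y)$.

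For $j\in\Cs$, protection at $y^-$ gives $e_j/h_j\ge\pi(y^-)$. For a hidden supporter $i$, $\rho_i>y^-$ means $i$ is not protected at $y^-$, since otherwise the minimal requirement would be at most $y^-$. Hence $e_i/h_i<\pi(y^-)\le e_j/h_j$. This uses only that $\rho_i$ is the infimum of masses at which $i$ is protected, attained under monotonicity. No strict monotonicity of $\pi$ is needed.
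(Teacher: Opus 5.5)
Your proof is correct and follows the paper's argument: the fixed-point description $\Cs=C_0\cup\{i\in S:\rho_i\le y^-\}$, the vanguard bound $\rho_j\le b\le y^-$, and the convex-combination step for averages are exactly what the paper uses. The only difference is in the fight-or-fold part. There the paper does not pass through a scalar threshold but argues directly from $\Cs=\Gamma(\Cs)$: this gives $\Cs\cup\{i\}\notin\Pfam_i$ and hence $e_i/h_i<\pi(q(\Cs\cup\{i\}))\le\pi(q(\Cs))\le e_j/h_j$. That argument works for any coalition-monotone resilience $q(C)$, not only one that depends on mass. It also makes your side remark unnecessary: attainment of the infimum would actually need a semicontinuity condition, not just monotonicity, and your argument never uses it.
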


The composition comparison holds the population, vanguard, protection environment, and full signing fixed. Under those conditions, conditional release includes people who require more protection than the supporters recruited through the public cascade, and than the already-public vanguard members. In an application these may be junior or precarious participants, but selective signing can change the composition in that the people most exposed by speaking may also be most reluctant to trust administrators' custody of their endorsements. An empirical comparison should therefore record who was eligible and who signed, as well as who appeared in a successful release.

Their entry also changes what records of retaliation tell us about suppression.

\runinhead{The visibility--suppression reversal.} Replacing public organizing with conditional release can simultaneously produce \textit{more} observed retaliation and \textit{less} suppression by bringing previously silent people into view. Hold the environment and the full-signing population fixed. Let $G=\Cp\setminus\Cs$ be the hidden supporters whom the tunnel reaches but the public cascade does not. Under public organizing, these people remain silent. Their expression is completely suppressed, and the data show no retaliation against them because they never become targets. Conditional release instead publishes them as part of $\Cp$. The opposition then fights with probability $1-F(q(\Cp))$ and successfully suppresses the coalition with probability $\pi(q(\Cp))$. The proposition assumes that this last probability lies strictly between zero and one.

\begin{proposition}[Visibility--suppression reversal]\label{thm:rd}
Fix the fight-or-fold primitives above, full signing, a nonempty set $G=\Cp\setminus\Cs$ of hidden supporters, and $\pi(q(\Cp))\in(0,1)$. For the hidden supporters, the model comparison from public organizing to conditional release \emph{raises} observed retaliation---from $0$ to $1-F(q(\Cp))$ in fight attempts, or to $\pi(q(\Cp))$ in successful punishments---while \emph{lowering} effective suppression of their expression from $1$ to $\pi(q(\Cp))$. The two move in opposite directions.
\end{proposition}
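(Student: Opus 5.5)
The argument compares, for each hidden supporter $i\in G$, the two regimes while holding fixed the population, the vanguard, the protection families, the opposition's threshold distribution $F$, and full signing. I would first pin down what each institution publishes. Theorem~\ref{thm:cascade}(i)--(ii) says that regret-free public organizing reaches exactly $\Cs$ and never a coalition outside it. Theorem~\ref{thm:tunnel} together with Proposition~\ref{prop:fullsign} says that, under full signing, conditional release publishes $\Cp$. Since $G=\Cp\setminus\Cs$ is nonempty, members of $G$ are never public under public organizing and are public under release.

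Next I would compute the quantities under public organizing. A person who never becomes public is not identifiable, and the opposition's fight decision in \eqref{eq:fold} applies only to a public coalition. So no fight is directed at any $i\in G$, and observed retaliation against them, whether measured in attempts or in successes, is $0$. Their expression never occurs, so effective suppression of it is $1$.

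Under release, the published coalition is $\Cp$. The opposition fights when $q(\Cp)<q^\ast$, which happens with probability $1-F(q(\Cp))$ by \eqref{eq:fold}. Conditional on a fight, retaliation succeeds with probability $1-q(\Cp)$. Because the threshold draw is independent of the outcome of the fight, successful punishment has probability $\pi(q(\Cp))=(1-F(q(\Cp)))(1-q(\Cp))$. Expression is suppressed exactly when successful retaliation occurs, so effective suppression equals $\pi(q(\Cp))$.

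Finally, I would use the hypothesis $\pi(q(\Cp))\in(0,1)$ to sign the changes:
\begin{itemize}
\item Successful punishments rise strictly from $0$ to $\pi(q(\Cp))>0$.
\item Because $\pi>0$ requires $1-F(q(\Cp))>0$, fight attempts also rise strictly, from $0$ to $1-F(q(\Cp))\ge\pi(q(\Cp))>0$.
\item Suppression falls strictly from $1$ to $\pi(q(\Cp))<1$.
\end{itemize}
So the two measures move in opposite directions.

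The step I expect to be the main obstacle is the modeling one, not the arithmetic. I have to justify that silence counts as complete suppression with zero observed retaliation, and that "suppression" under release means successful retaliation against the published coalition. I would handle this by stating explicitly that the measures are defined per hidden supporter: observed retaliation is retaliation directed at a public identity, and suppression is the event that the person's expression does not survive. With those definitions fixed, every claim follows from \eqref{eq:fold} and the earlier coalition results.
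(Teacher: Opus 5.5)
Your proposal is correct and follows essentially the same route as the paper's proof. Under public organizing, hidden supporters stay silent, so they face zero retaliation and full suppression. Under release, publishing $\Cp$ gives fight probability $1-F(q(\Cp))$ and suppression $(1-F(q(\Cp)))(1-q(\Cp))=\pi(q(\Cp))$, and $\pi(q(\Cp))\in(0,1)$ makes both changes strict. Two small differences: your remark that $1-F(q(\Cp))\ge\pi(q(\Cp))>0$ makes the strict rise in fight attempts explicit where the paper leaves it implicit, and the paper notes that self-protection of $\Cp$ means every hidden supporter accepts publication, which your appeal to full signing covers.
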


More recorded retaliation need not mean that more expression has been suppressed. Suppose fear keeps a hundred employees from signing an open letter, so management need not punish anyone. If a Social Assurance Contract lets them publish together and withstand management's retaliation, attacks become visible while their views are no longer silenced. Records of retaliation alone miss the silence produced by fear. Other interventions that induce previously silent people to speak can produce the same pattern. Proposition~\ref{prop:permember} in the Supplemental Appendix states the result person by person.

This comparison changes the institution while holding the population fixed. Neither Proposition~\ref{thm:rd} nor its per-member version in the Supplemental Appendix varies the publication threshold within contracts, where a higher threshold can also reduce the chance of release. Supplemental Appendix Section~\ref{sec:experiment_supp} sets out the corresponding measurement questions, including how to distinguish the effects of better turnout information from those of conditional publication.

\runinhead{Experimental evidence.} The experiment in \citet{braghieriThresholdDisclosureCollective2026} supports two behavioral premises of this comparison: attribution suppresses controversial expression, and people condition their willingness to have their vote revealed on the share of voters who agree. In a sample of 298 UC Berkeley undergraduates, students voted on a proposal to allow transgender women to compete in women's collegiate sports, with the aggregate result to be shared with the Chancellor's office. Random assignment placed students under one of three rules. Their votes remained private, were publicly linked to names, or were linked to names only if the share voting the same way met a threshold chosen by that voter. Participants' names appeared on lists in every treatment, but their votes did not necessarily appear alongside them. Voting against inclusion was the socially controversial position in this population. Comparing public with private voting identifies the effect of linking names to votes: linking them raised abstention and cut expression of the socially controversial position by roughly half. Under threshold disclosure, participation and controversial expression were close to their private-voting levels. The chosen thresholds provide the second piece of evidence, documenting willingness to disclose one's vote conditional on how many others agree, whether or not those others' votes become public. Unlike the treatment comparison, the comparison of chosen thresholds by vote direction is descriptive rather than randomized. Section~\ref{sec:experiment_supp} gives further details.

\section{Discussion}\label{sec:conclusion}

Social Assurance Contracts can change the public discussion by enabling coalitions that cannot form safely through open organizing. The main obstacle to public organizing is the exposure of early participants to retaliation before enough others have joined. The contract changes that by collecting endorsements privately and publishing them together only when the resulting coalition protects its members. Within the environment studied here, crossing the exposure barrier while guaranteeing protection requires this combination of private commitments and joint publication. The distinction between cascade and tunnel is therefore about how a coalition forms, not only about how much support exists.

For institutional design, the first question is whether supporters need to become publicly identified at all. A credible poll may suffice when an employer or university needs evidence of support before changing a policy. An open letter serves a different purpose when the identities of its signers give the statement credibility, demonstrate their willingness to stand behind it, or, most importantly, allow them to organize further activity.

When named expression is needed, the choice between a call to action and a contract turns partly on the reliability of participation. When protection is assessed against participation in the appointed round, a call to action can protect well if the expected answering group comfortably exceeds the required size, people follow through independently, and no single person supplies a large share of the group's protection. Some supporters can then fail to appear without leaving the others unprotected in that round. A contract instead requires collecting endorsements before publishing anything, and so avoids relying on promised turnout but also creates a private record that must be protected. The comparison is therefore between the risk that others will not follow through and the risk that private endorsements will leak, alongside the costs and delays of each arrangement.

Private custody need not mean giving an administrator a readable list and trusting that person to keep it secret. Threshold cryptography can keep endorsements unreadable to any single custodian while allowing the publication condition to be checked. Such arrangements can greatly reduce dependence on custodial discretion. Protection must also cover recruitment and signing, since secure storage accomplishes little if the opposition can identify supporters as they enter the contract. The practical objective is to prevent premature identification throughout collection, while making the promised joint publication credible to potential signers.

These design lessons concern what participants should do given their goals, not whether their goals deserve support. A coalition can enable its members to spread false allegations or impose other costs on outsiders. Conversely, making previously suppressed expression public can benefit people who never signed. A larger protective coalition is therefore not, by itself, a measure of social improvement. Evaluating the institution requires considering what the coalition says and does, whom it affects, and what organizing it costs.

Several questions for future work follow directly from these choices. First, how cheaply can an institution make private custody and its publication rule credible? If the technology is close at hand, adopting Social Assurance Contracts is a matter of activating latent demand rather than of engineering. Second, which supporters will sign when participation takes effort or carries some risk, and how should an institution handle strategic or false submissions? Third, how does the strategic environment evolve over repeated campaigns, including competition among similar contracts? Finally, the welfare consequences of such contracts remain largely open. Answering these questions would help determine which arrangements to use in particular settings.

\ifanonymizeforpeerreview\else
\ifshowacknowledgments
\section*{Acknowledgments}
I wish to thank Adam Bear, Rahul Bhui, Caspar Kaiser, Drazen Prelec, Birger Wernerfelt, Zachary Wojtowicz, and Erez Yoeli for many helpful comments and suggestions. I am especially indebted to Alex Tabarrok for his substantial feedback on an earlier version of this paper.
\fi
\fi

\appendix
\makeatletter\renewcommand\thesubsection{\@arabic\c@subsection}\makeatother

\appendixparttitle{Appendix}

\section{The General Environment and the Representation Lemmas}\label{app:representation}

This appendix states the environment and the two lemmas that Section~\ref{sec:representation_necessity} uses in words. Their proofs are in Appendix~\ref{app:proofs}.

As in the article body, an action \emph{completes} when it takes effect. A public action completes when the statement becomes public under its author's name, and a private commitment completes when it reaches its recipient. These definitions concern what happened, regardless of whether noncompletion reflects a change of mind or a practical obstacle.

\begin{assumption}[Causal action histories (environment E1)]\label{ass:causal}
\begin{enumerate}
\item[(i)] Each attempted action has one author and is a separate event. Once completed, it is irreversible. Every subset of a date's attempted individual actions is an admissible completion outcome: the environment must allow that exactly those actions complete and the remaining actions do not complete. Histories record completions in their realized order, including within a calendar date, and public attribution updates at each completion before any device action at that date.
\item[(ii)] Information moves through completed actions. Each action has a designated recipient that can verify its completion. With the opposition held fixed, each action has a deterministic named set and attribution status, realized at completion. Receipt by a device is not itself attribution.
\item[(iii)] A device---a person or automated system that executes an action---chooses its action and named set using only the public history and completed actions it received strictly before execution. If an unattributed action is absent from that received record, the admissibility correspondence contains a twin history with the same public and received record in which that action never completed; the device takes the same action at both histories.
\item[(iv)] Permission binds action by action: an action cannot name another agent unless that agent has completed an action permitting publication on the realized history.
\end{enumerate}
\end{assumption}

Clause (iii) carries the main restriction: a device can condition release only on what it has received, not on a commitment that completed elsewhere but never reached it. A device may be a composite institution, such as a protocol run by several parties or a threshold-cryptographic system. Its received record is then the union of what its components hold, and clause (iii) requires publication to be causally downstream of commitments held somewhere within it. The two lemmas use the two action types that Definition~\ref{def:mechanism} makes precise.

\begin{lemma}[The attribution dichotomy]\label{lem:dichotomy}
Consider any technology in E1 that publishes a person's identity only with her permission and assigns a definite attribution consequence to each completed action. Every action by a participant in $S$ can be described using the two action types in Definition~\ref{def:mechanism}; the institution's own publications are device actions, governed by clauses (iii) and (iv) of Assumption~\ref{ass:causal} rather than by this classification. For every possible realization, the new description has the same actions, payoffs, path of public identities, and released sets. Specifically: (i) a participant's completion that makes its author identifiable is an \emph{exposure action}; (ii) a participant's completed action that remains unattributed but can be verified by a recipient is a \emph{private commitment}; and (iii) the possible completion patterns include every subset of a date's attempts, as Assumption~\ref{ass:U} and Definition~\ref{def:mechanism} require. Thus that definition describes, rather than restricts, technologies that publish identities only with permission at fixed protection families.
\end{lemma}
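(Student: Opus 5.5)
The plan is to build the redescription action by action and then check that it preserves everything the lemma lists. Fix a technology in E1 that publishes identities only with permission. By clause (ii) of Assumption~\ref{ass:causal}, each attempted action by a participant $i\in S$ has a deterministic named set and attribution status, realized at completion. I will partition participant actions by that status.

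\emph{Attributed actions.} If completing the action makes $i$ identifiable, either because $i$ belongs to its named set or because its attribution status identifies her, I label it an exposure action. The permission clause (iv) is what keeps this label honest. A participant's action cannot name another agent without that agent's prior completed permitting action. So any other names it carries are joint publications the others already permitted, and for $i$ herself it acts exactly as an exposure action publishing her own expression.

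\emph{Unattributed actions.} If completion leaves $i$ unidentified, clause (ii) still gives the action a designated recipient that can verify its completion. Since receipt by a device is not attribution, I label it a private commitment held in that recipient's custody.

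Because the classification uses only the two statuses that clause (ii) guarantees, it is exhaustive and exclusive. It also changes no completions. The relabeled game therefore has the same action set, completion events, payoffs, and public identities at every realization. Device actions are left out of the classification by construction; they remain governed by clauses (iii) and (iv).

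Next I check that released sets and public paths coincide. Payoffs depend only on the realized public coalitions, since protection families are fixed. Public coalitions change only through completed attributed actions and device releases, and the relabeling leaves both untouched. Device behavior depends only on the public history and the received record. Those objects are the same before and after relabeling, because a private commitment is exactly a received, unattributed, verified completion. So every device chooses the same action and named set, and the released sets agree along every history.

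For part (iii), the admissible completion patterns, I invoke clause (i) directly: every subset of a date's attempted individual actions is admissible. That is precisely the admissibility correspondence Definition~\ref{def:mechanism} and Assumption~\ref{ass:U} require, including the singleton case.

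I expect the main obstacle to be pinning down \emph{identifiable} for attributed actions that name their author only indirectly, or that reveal her through their recipient. The plan is to rest entirely on clause (ii)'s deterministic attribution status, with the opposition held fixed. Probabilistic identification, such as a side channel, is then outside the definition of a completed action's status and is treated as a leak, as the text stipulates. This keeps the dichotomy exact rather than approximate, and it yields the closing claim: Definition~\ref{def:mechanism} describes such technologies rather than restricting them.
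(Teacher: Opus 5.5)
Your proposal is correct and follows the paper's proof essentially step for step. You partition each participant action by the deterministic attribution status in clause (ii) of E1, relabel pathwise so that actions, completions, and the public and received records are unchanged (so the device's rule and the released sets carry over verbatim), and read the admissibility of every completion subset off clause (i). One small correction: payoffs can include private action costs such as signing costs, not only realized public coalitions, so your claim that payoffs depend only on public coalitions is too strong; those costs are preserved anyway because your relabeling changes no action or completion, which is how the paper phrases that step.
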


\begin{lemma}[Joint publication requires custody of commitments]\label{lem:custody}
Maintain the preceding primitives. Fix a history $h$, its attempted profile $A_h$, and the family $\mathcal U(A_h)$ of realizations the institution must withstand there. Let $B$ be a set of at least two recruits whose endorsements are not yet public at $h$. A realization $R\in\mathcal U(A_h)$ is a \emph{joint-completion event} over $B$ if it first attributes all of $B$ at one completion and no realization in $\mathcal U(A_h)$ first attributes some but not all members of $B$.
\begin{enumerate}
\item[(a)] Any permission-respecting technology realizing such an event operates a private conditional-release stage: one device action jointly attributes $B$, and before executing it the device has received a completed, still-unattributed private commitment from every member of $B$ other than its author. No device action first names such a member at a history lacking that commitment.
\item[(b)] Conversely, a failure-private stage in which a device has custody of completed, unattributed private commitments and releases them by one action, and at which no member of the released coalition attempts an exposure action, realizes a joint-completion event at every release history whose newly attributed coalition has at least two members.
\end{enumerate}
Without custody of completed commitments, every first attribution is an individually completing action governed by Assumption~\ref{ass:U}.
\end{lemma}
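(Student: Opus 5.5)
The plan is to prove part (a) by locating the single completion that first attributes all of $B$. I then use the attribution dichotomy (Lemma~\ref{lem:dichotomy}) together with clauses (iii) and (iv) of Assumption~\ref{ass:causal} to show that this completion must be a device action resting on received private commitments. Part (b) is a direct check that a custodial release has exactly the all-or-nothing completion structure in the definition. The closing sentence then follows from (a) by contraposition.

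For (a), fix a joint-completion event $R\in\mathcal U(A_h)$ over $B$, and let $x$ be the completion that first attributes all of $B$.
\begin{enumerate}
\item[1.] I first argue that $x$ is not a participant's exposure action. By Lemma~\ref{lem:dichotomy}, a participant's completion makes only its author identifiable. Any action naming others is classified as a device action and governed by clauses (iii) and (iv). Since $|B|\ge2$, a participant's own exposure cannot newly attribute all of $B$, so $x$ is a device action.
\item[2.] By clause (iv), every $j\in B$ other than the author of $x$ has completed, on the realized history, an action permitting publication. That action did not attribute $j$, since otherwise $j$ would already be public before $x$, contradicting first attribution. It is therefore unattributed and verifiable by its recipient, and so by Lemma~\ref{lem:dichotomy}(ii) it is a private commitment.
\item[3.] I show the executing device must have \emph{received} this commitment. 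Suppose $j$'s commitment is absent from the device's received record. Clause (iii) then supplies a twin history with the same public and received record in which that commitment never completed. At the twin the device takes the same action $x$, with its deterministic named set containing $j$. This violates clause (iv) at the twin.
\end{enumerate}
The same twin argument applied to any device action at any history gives the final assertion of (a): no device action first names a member of $B$ at a history lacking that member's commitment. Nonemptiness of the set of commitments is immediate, because $|B|\ge2$ leaves at least one member other than the author.

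For (b), fix a release history at which the device holds completed, unattributed commitments from a coalition $B$ with $|B|\ge2$, and releases them by one action. Since no member of $B$ attempts an exposure action, no participant completion in any realization of $\mathcal U(A_h)$ attributes a member of $B$. Private commitments are unattributed, and receipt by a device is not attribution (clause (ii)). The only remaining event that can attribute them is the single device action. By clause (i) that action either completes, attributing all of $B$ through its deterministic named set, or does not complete, attributing none. Hence the release is a first attribution of all of $B$, and no realization first attributes a proper nonempty subset, so the release is a joint-completion event.

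For the closing sentence, suppose no device holds completed commitments. By the argument in (a), no device action can first name anyone, so every first attribution is its author's own exposure action. By clause (i) of Assumption~\ref{ass:causal}, any subset of such attempts may complete, which is exactly Assumption~\ref{ass:U}.

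I expect the main obstacle to be step 3 of part (a), the twin-history argument converting ``completed somewhere'' into ``received by the executing device.'' It has to handle composite devices, whose received record is the union over components, and it has to ensure that the twin history remains in the admissibility correspondence and preserves the public record. A secondary care point is step 1, where the classification must rule out a participant's action jointly naming others by treating any such action as a device action.
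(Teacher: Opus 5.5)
Your proposal is correct and follows essentially the paper's own argument: a single carrier action that names at least one non-author, permission from clause (iv) of Assumption~\ref{ass:causal} classified as a private commitment via Lemma~\ref{lem:dichotomy}, receipt via the clause (iii) twin history, the converse in (b), and the closing statement by the same custody reasoning. The only minor difference is in (b), where you get all-or-nothing attribution from the release being one action with a deterministic named set, whereas the paper also cites Assumption~\ref{ass:atomic} to rule out a partial release.
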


\begin{remark}[Correlation and privacy risk]\label{rem:imperfect}
The lemma concerns guaranteed joint attribution, because admissibility is a set of realizations rather than a probability law. Correlating separate attempts while preserving singleton outcomes cannot remove the singleton safety requirement. Proposition~\ref{prop:call} quantifies risk under independent completion and scheduled-round assessment. A different, perfectly correlated all-or-none completion law can eliminate round-end shortfall for a protective group, but violates Assumption~\ref{ass:U} and does not turn separate completions into one attribution event. Arbitrary correlation need not reduce risk. Supplemental Appendix Proposition~\ref{prop:robust_frontier} treats leaks during a true custody stage.
\end{remark}

\section{Proofs}\label{app:proofs}

This appendix proves every formal result stated in the article body and in Appendix~\ref{app:representation}. The proofs are ordered so that each supporting result appears before the argument that uses it. The Supplemental Appendix contains extensions and auxiliary results.

\begin{proof}[Proof of Lemma~\ref{lem:mono}]
Every member of $C_0$ belongs to both $\Gamma(C)$ and $\Gamma(C')$. For a recruit, suppose $C\cup\{i\}\in\Pfam_i$ and $C\subseteq C'$. Then $C\cup\{i\}\subseteq C'\cup\{i\}$, so Assumption~\ref{ass:M} gives $C'\cup\{i\}\in\Pfam_i$. It follows that $i\in\Gamma(C')$ and, if $i\notin C'$, that $i\in T(C')$. The definition of $T$ immediately gives $C\subseteq T(C)$.
\end{proof}

\begin{proof}[Proof of Theorem~\ref{thm:largest}]
Take any nonempty collection of self-protecting coalitions $\{C_j\}$ and let $U=\bigcup_j C_j$. Then $C_0\subseteq U$. If $i\in U$, then $i$ belongs to some $C_j$. Because $C_j$ is self-protecting, $C_j\in\Pfam_i$. Since $C_j\subseteq U$ and $i\in U$, Assumption~\ref{ass:M} gives $U\in\Pfam_i$. This holds for every $i\in U$, so $U$ is self-protecting. There is at least one self-protecting coalition, namely $C_0$. The union of all self-protecting coalitions is therefore itself self-protecting and contains every such coalition. Thus $\Cp(S)$ is well defined and is the largest self-protecting coalition. Any safe rule must leave a self-protecting public coalition, so that coalition must be a subset of $\Cp(S)$.
\end{proof}

\begin{proof}[Proof of Lemma~\ref{lem:reduction}]
($\Leftarrow$) Take any realization in which $i\in R_t$. By hypothesis, $C_{t-1}\cup\{i\}\in\Pfam_i$. Because $C_{t-1}\cup\{i\}\subseteq C_{t-1}\cup R_t=C_t$, Assumption~\ref{ass:M} gives $C_t\in\Pfam_i$. Members of $C_{t-1}$ also remain safe because the public set only grows. Starting from the self-protecting $C_0$ and applying this argument at each date proves regret-free safety.

($\Rightarrow$) Fix any $i\in A_t$. Current attempts are unverified before exposure, so Assumption~\ref{ass:U} allows the realization $R_t=\{i\}$. Regret-free safety in that realization requires $C_{t-1}\cup\{i\}\in\Pfam_i$. Finally, requiring this condition for every $i\in A_t$ is equivalent, by definition, to $C_t\subseteq T(C_{t-1})$.
\end{proof}

\begin{proof}[Proof of Theorem~\ref{thm:cascade}]
(i) Lemma~\ref{lem:reduction} gives $C_t\subseteq T(C_{t-1})$. Lemma~\ref{lem:mono} and induction then give $C_t\subseteq T^t(C_0)\subseteq\Cs$.

(ii) Greedy intentions satisfy the condition in Lemma~\ref{lem:reduction} by construction. When every intended public action is completed, $C_t=T^t(C_0)$. Because the finite set can grow at most $|S|$ times, this sequence reaches $\Cs$ within $|S|$ steps.

(iii) Vanguard members are protected in $C_0$, and each other member of $\Cs$ was safe to add when she entered the iteration. Assumption~\ref{ass:M} keeps her safe as the set grows, so $\Cs$ is self-protecting. Because the iteration has also stopped, $\Cs$ is closed. Hence $\Cs$ is a fixed point.

To show that it is the least fixed point, take any $C^\ast=\Gamma(C^\ast)$ in the domain $C_0\subseteq C^\ast\subseteq C_0\cup S$. The induction starts from $C_0\subseteq C^\ast$. Suppose $T^n(C_0)\subseteq C^\ast$. Vanguard members already belong to $C^\ast$. For any other $i\in T^{n+1}(C_0)$, we have $T^n(C_0)\cup\{i\}\in\Pfam_i$. Assumption~\ref{ass:M} then gives $C^\ast\cup\{i\}\in\Pfam_i$, so $i\in\Gamma(C^\ast)=C^\ast$. Induction gives $\Cs\subseteq C^\ast$.

To show that $\Cp$ is the greatest fixed point, first note that $\Cp$ is self-protecting by Theorem~\ref{thm:largest}, so $\Cp\subseteq\Gamma(\Cp)$. Conversely, if $i\in\Gamma(\Cp)$, then $\Cp\cup\{i\}\in\Pfam_i$. The coalition $\Cp\cup\{i\}$ is self-protecting because Assumption~\ref{ass:M} keeps the existing members safe. Maximality of $\Cp$ therefore requires $\Cp\cup\{i\}\subseteq\Cp$, which implies $i\in\Cp$. Thus $\Cp=\Gamma(\Cp)$. Every fixed point is self-protecting and therefore lies in $\Cp$. The domain $\{C:C_0\subseteq C\subseteq C_0\cup S\}$ is a complete lattice with bottom $C_0$, and $\Gamma$ maps it into itself. Tarski's theorem gives a complete lattice of fixed points.
\end{proof}

\begin{proof}[Proof of Theorem~\ref{thm:tunnel}]
Failure privacy creates no new exposure before release, so the public coalition stays at $C_0$ until the endorsements in $D(V)=\Cp(V)\setminus C_0$ are published together. It then becomes $C_0\cup D(V)=\Cp(V)$. By Theorem~\ref{thm:largest}, applied to supporter set $V$, $\Cp(V)$ is self-protecting. Every person exposed at release is therefore protected in every realization, which proves regret-free safety. Assumption~\ref{ass:M} makes $V\mapsto\Cp(V)$ monotone, so $\Cp(V)\subseteq\Cp(S)$, with equality when $V=S$. Theorem~\ref{thm:largest} supplies the upper bound for every mechanism, and Theorem~\ref{thm:cascade}(iii) supplies the strict comparison.
\end{proof}

\begin{proof}[Proof of Lemma~\ref{lem:dichotomy}]
\emph{Claim 1 (exhaustive action partition).} Fix an action $a$ by a participant in $S$. Under deterministic attribution in Assumption~\ref{ass:causal}(ii), completion either attributes the author to the opposition or it does not. In the first case, completion is an exposure action. In the second, the author remains private, and a recipient can verify the completed action without making the author public. The action is therefore a private commitment. If a later action attributes the author, that later action is classified when it completes. The two cases are exhaustive for participants' actions and do not overlap; device actions are not classified, since a release can name others without identifying its author.

\emph{Claim 2 (pathwise translation).} Keep the same actions, authors, recipients, completion outcomes, and within-date order at every history. On each public and received-action record, let the administrator copy the device's action rule. Claim 1 changes only the label attached to each action. The translated mechanism therefore produces the same public identities, released sets, and private action costs in every realization. No later result or payoff in E1 depends on the label itself.

\emph{Claim 3 (admissibility).} Assumption~\ref{ass:causal}(i) allows every subset of a date's attempted individual actions to complete. This includes every realization required by Assumption~\ref{ass:U}, in which any subset of a date's actions is completed, and every complementary noncompletion realization required by Definition~\ref{def:mechanism}. Claim 2 preserves which actions complete, so the translated mechanism has the same realizations.

Claims 1--3 establish parts (i)--(iii) and equivalence in every realization.
\end{proof}

\begin{proof}[Proof of Lemma~\ref{lem:custody}]
\emph{Claim 1 (one carrier action).} Let realization $R$ contain completion $c$, which jointly attributes $B$. Because actions complete individually, $c$ belongs to one action $a^\ast$ by one author $w$. Attribution occurs when that action completes. Thus $a^\ast$, rather than separate self-attributing actions, names every $j\in B\setminus\{w\}$. Separate actions cannot produce the stipulated joint event: Assumption~\ref{ass:causal}(i) allows one of them to complete without the others, which would attribute only part of $B$.

\emph{Claim 2 (permission by completed commitment).} Action-by-action permission requires a completed permitting action from every $j\in B\setminus\{w\}$ before $a^\ast$ names her; Claim 4 shows that it is a private commitment. If $w\notin B$, this requirement covers all of $B$. If $w\in B$, it covers $B\setminus\{w\}$. Because $|B|\ge2$, at least one person must have completed such an action.

\emph{Claim 3 (receipt, not mere completion).} Fix one such $j$. Her earlier commitment has not attributed her because $c$ is her first attribution. Suppose the device had not received that commitment before executing $a^\ast$. Assumption~\ref{ass:causal}(iii) then supplies a twin history with the same public and received-action record in which the commitment never completed. The device must take the same action at the twin history, so it names $j$ without a completed commitment from her. That contradicts action-by-action permission. The device must therefore have received $j$'s completed commitment before executing $a^\ast$. The same argument rules out any device action that names $j$ when its received record lacks her commitment.

\emph{Claim 4 (private custody).} Because $c$ is $j$'s first attribution, her earlier commitment remained hidden from the opposition. Receipt by the device is not public attribution. Lemma~\ref{lem:dichotomy} therefore classifies the earlier action as a completed private commitment. Claims 1--4 show that one device action names the group and that, for each name, the device had already received a completed but still private commitment. This proves part (a).

\emph{Claim 5 (converse).} At a release history on a failure-private stage, no member of the new coalition $D$ became public through the commitment in the device's custody. The device attempts one release action, and by hypothesis no member of $D$ attempts an exposure action at that history, so the release is the only action able to name members of $D$ there. If the release action is not completed, no one in $D$ is attributed. If it is completed, it attributes all of $D$ at once; Assumption~\ref{ass:atomic} rules out partial release. Every release with $|D|\ge2$ is therefore a joint-completion event. This proves part (b).

Without custody of commitments, the device's received record contains no completed private commitment whose author remains unattributed. Claims 2--4 then rule out an action that first attributes several names. Only individual exposure actions remain, along with the realizations in which only one public action is completed. This proves the final statement.
\end{proof}

\begin{proof}[Proof of Theorem~\ref{thm:necessity}]
\emph{Claim 1 (public-only local bound).} In a public-only mechanism, a person's exposure action can depend only on earlier public exposure and on whether her own action completes. Assumption~\ref{ass:U} allows her action to complete while every other current action remains uncompleted. Regret-free safety therefore requires $C_{h^-}\cup\{i\}\in\Pfam_i$ for each newly exposed $i$. Every public transition must consequently satisfy $C_h\subseteq T(C_{h^-})$.

\emph{Claim 2 (public-only induction).} The full public set starts at the self-protecting vanguard $C_0$. Claim 1 and monotonicity of $T$ then give $C_h\subseteq\Cs$ at every history, by the same induction used in Theorem~\ref{thm:cascade}(i). This proves part (i).

\emph{Claim 3 (first crossing action).} For part (ii), order the history by completed actions as Assumption~\ref{ass:causal}(i) requires, not merely by calendar date. If several actions complete at the same time, their recorded order determines the preceding public set without changing any action or payoff. Let $a^\ast$ be the first completed action that moves the public set outside $\Cs$. Let $C_{h^-}\subseteq\Cs$ be the public set just before it, and let $R$ be the coalition that $a^\ast$ first attributes. Choose any $i\in R\setminus\Cs$.

\emph{Claim 4 (the crossing requires another named participant).} Agent $i$ is not safe with only the preceding public set and herself. If she were, then $C_{h^-}\cup\{i\}\in\Pfam_i$. Monotonicity would imply $\Cs\cup\{i\}\in\Pfam_i$, and hence $i\in\Gamma(\Cs)=\Cs$, a contradiction. Regret-free safety when $a^\ast$ completes instead requires $C_{h^-}\cup R\in\Pfam_i$. The coalition $R$ must therefore contain at least one other named participant.

\emph{Claim 5 (represented release stage).} Let $w$ be the author of $a^\ast$. Action-by-action permission requires a completed permitting action from every $j\in R\setminus\{w\}$. These actions remained unattributed because $a^\ast$ is the action that first names their authors, so Lemma~\ref{lem:dichotomy} classifies them as private commitments. Suppose the device had not received one of them before execution. The twin history in Assumption~\ref{ass:causal}(iii) would then make the same action name $j$ even though her commitment never completed, violating permission. The device must therefore have received every required commitment before executing $a^\ast$. If $w\notin R$, the device had received commitments from all of $R$; if $w\in R$, it had received commitments from all of $R\setminus\{w\}$. Claim 4 ensures that this set is nonempty. One action thus releases the coalition, with each name conditioned on a completed private commitment. This is the stage required by part (ii).
\end{proof}

\begin{proof}[Proof of Proposition~\ref{prop:call}]
Number the same-date group members $1,\dots,n$, each with equal mass $a=m_B/n$. To record who becomes public in the scheduled round, let $X_j\in\{0,a\}$ be the mass contributed by member $j$: independently across members, $X_j=a$ with probability $1-\varepsilon$ and $X_j=0$ otherwise. The public mass at the round-end assessment is $Y=y^-+\sum_j X_j$. No participant decision or device action intervenes during these draws, and subsequent rounds do not enter this risk event.

(i) Its expectation is $\mathbb E[Y]=y^-+(1-\varepsilon)m_B$. Because $m_B=(y_t-y^-)+s(y_t)$, the expected amount above the target is $\mathbb E[Y]-y_t=(1-\varepsilon)s(y_t)-\varepsilon(y_t-y^-)=\Delta(y_t)$.

(ii) Fix a member $i$ whose protection requirement satisfies $\rho_i\le y_t$. If $n=1$, then conditional on her completion, $X_i=a$ and the realized mass is $y^-+m_B>y_t\ge\rho_i$, where the strict inequality follows from $\Delta>0$. She is therefore protected with probability one, and the stated bound holds.

Now let $n\ge2$. Conditional on $X_i=a$, agent $i$ is unprotected only if $y^-+a+\sum_{j\ne i}X_j<\rho_i$. Because $\rho_i\le y_t$, this event is contained in $\{\sum_{j\ne i}X_j<(y_t-y^-)-a\}$. The sum has mean $(1-\varepsilon)(n-1)a$. The distance from this mean to the threshold is $t^*=(1-\varepsilon)(n-1)a+a-(y_t-y^-)=\Delta+\varepsilon m_B/n\ge\Delta$. Hoeffding's inequality for the $n-1$ independent variables, each ranging from $0$ to $a$, gives
\[
\Pr(\text{unprotected}\mid X_i=a)\;\le\;\exp\!\Bigl(-\tfrac{2t^{*2}}{(n-1)a^2}\Bigr)\;\le\;\exp\!\Bigl(-\tfrac{2n\Delta^2}{m_B^2}\Bigr),
\]
where the second inequality uses $t^*\ge\Delta$ and $(n-1)a^2\le m_B^2/n$.

To bound the unconditional probability as well, consider the shortfall event $\{Y<y_t\}=\{\sum_j X_j<m_B-s(y_t)\}$. Across all $n$ variables, the distance from the mean to this threshold is exactly $\Delta$. Hoeffding's inequality therefore gives the same bound, $\exp(-2\Delta^2/(na^2))=\exp(-2n\Delta^2/m_B^2)$. With unequal masses $a_j\le\bar a$, the same argument gives $\exp(-2t^{*2}/\sum_j a_j^2)$. The conditional bound is no greater than $\underline{r}$ exactly when $n\ge(m_B^2/2\Delta^2)\ln(1/\underline{r})$, and it converges to zero along equal-mass refinements with fixed $m_B$ and $\Delta>0$.

(iii) On the event $\{Y\ge y_t\}$, every $j\in B$ whose first public action was not completed has $\rho_j\le y_t\le Y$. She can therefore retry without risking unprotected exposure. With repeated independent attempts, each member eventually becomes public almost surely, and monotone protection keeps her safe as the public set grows. Once all of $B$ are public, the definition of a bridging target gives $T^{|S|}(\Cs\cup B)=\Cp$. Greedy safe expansion with retries implements this iteration almost surely. The only positive-probability way for the composite to miss $\Cp$ is therefore the initial shortfall event, whose probability part (ii) bounds.
\end{proof}

\begin{proof}[Proof of Proposition~\ref{prop:fullsign}]
(i) Fix a recruit $i$ and the others' signatures $V_{-i}$. Abstaining leaves her unpublished and pays zero. Signing also pays zero if her endorsement remains unpublished; otherwise the release rule places her in a self-protecting coalition, paying $u_i(\Cp(V_{-i}\cup\{i\}))\ge0$. Thus signing is weakly better against every profile, and everyone signing is a Nash equilibrium. A failed private submission likewise pays zero, which proves the weak comparison for incomplete attempts.

(ii) If everyone else signs, each $i\in\Cp(S)\setminus C_0$ obtains $u_i(\Cp(S))>0$ by signing and zero by abstaining. Abstaining is therefore weakly dominated for these recruits. For $i\notin\Cp(S)$, the inclusion $\Cp(V)\subseteq\Cp(S)$ prevents publication under every profile, so both choices always pay zero. Signing is never weakly dominated.

Every equilibrium avoiding weakly dominated strategies must therefore have $\Cp(S)\setminus C_0\subseteq V\subseteq S$. Then $\Cp(S)$ is available among the self-protecting coalitions between $C_0$ and $C_0\cup V$, so maximality gives $\Cp(V)=\Cp(S)$. Conversely, every such profile is a Nash equilibrium avoiding weakly dominated strategies: the released recruits strictly prefer signing, and the others are indifferent. If no recruit can safely join the vanguard alone, a lone signature releases no new name, so all-abstain is also an equilibrium.
\end{proof}

\begin{lemma}[Expected-payoff safety inherits the lattice]\label{lem:interim}
Fix the expected-payoff criterion: $i$ accepts release in $C$ iff $e_i\ge\pi(q(C))\,h_i$, with $\pi$ nonincreasing and $q$ nondecreasing in the coalition. Define $\Pfam^{\mathrm{int}}_i=\{C:C_0\subseteq C\subseteq C_0\cup S,\ i\in C,\ e_i\ge\pi(q(C))\,h_i\}$ for $i\in C_0\cup S$, and maintain $e_i\ge\pi(q(C_0))h_i$ for all $i\in C_0$. Then each $\Pfam^{\mathrm{int}}_i$ satisfies Assumption~\ref{ass:M}, so $\Gamma$, $T$, $\Cs$, and $\Cp$ are defined as before. Theorems~\ref{thm:largest}, \ref{thm:cascade}, \ref{thm:tunnel}, and~\ref{thm:necessity} apply with $\Pfam^{\mathrm{int}}_i$ in place of $\Pfam_i$ under each theorem's original additional conditions: follow-through uncertainty for the cascade; all-or-nothing release for the tunnel; and follow-through uncertainty, E1, and Definition~\ref{def:mechanism} for the representation theorem. The expected payoff $e_i-\pi(q(C))\,h_i$ is nondecreasing in the coalition. Supplemental Appendix Theorem~\ref{thm:selection} therefore also carries over under its completion technology and separate strict-safe-surplus hypothesis, that is, if $e_i>\pi(q(C))\,h_i$ for each recruit $i\in S$ on every coalition treated as protecting her.
\end{lemma}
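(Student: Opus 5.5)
The plan is to verify monotone protection for the expected-payoff families directly. After that, every listed theorem becomes an instance of a result already proved, because each proof used the protection families only through Assumption~\ref{ass:M} and the starting condition. First, fix $i$ and suppose $C\in\Pfam^{\mathrm{int}}_i$, $C\subseteq C'$, $i\in C'$, and $C'$ lies in the domain between $C_0$ and $C_0\cup S$. Since $q$ is nondecreasing in the coalition, $q(C)\le q(C')$. Since $\pi$ is nonincreasing, $\pi(q(C'))\le\pi(q(C))$. Because $h_i\ge0$, we get $\pi(q(C'))h_i\le\pi(q(C))h_i\le e_i$, so $C'\in\Pfam^{\mathrm{int}}_i$. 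This is Assumption~\ref{ass:M}. The same chain shows that $e_i-\pi(q(C))h_i$ is nondecreasing in $C$, which is the stated monotonicity of the expected payoff. The maintained inequality $e_i\ge\pi(q(C_0))h_i$ for $i\in C_0$ is exactly the starting condition $C_0\in\Pfam^{\mathrm{int}}_i$. Hence $\Gamma$, $T$, $\Cs$, and $\Cp$ are well defined on the same domain.

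Next I would go through the proofs of Theorems~\ref{thm:largest}, \ref{thm:cascade}, \ref{thm:tunnel}, and~\ref{thm:necessity}, together with Lemmas~\ref{lem:mono}, \ref{lem:reduction}, \ref{lem:dichotomy}, and~\ref{lem:custody}. The goal is to confirm that the protection families enter only in three ways: through membership tests $C\in\Pfam_i$, through Assumption~\ref{ass:M}, and through self-protection of $C_0$. I would check the following for each result.
\begin{itemize}
\item The union argument in Theorem~\ref{thm:largest} uses only monotonicity.
\item The cascade bound and the lattice statement use Lemma~\ref{lem:reduction}, which requires Assumption~\ref{ass:U}, together with Tarski's theorem on the unchanged domain.
\item Tunnel implementation adds only Assumption~\ref{ass:atomic}.
\item The representation theorem adds E1 and Definition~\ref{def:mechanism}, and uses the families only in Claim~4 via monotonicity and $\Gamma(\Cs)=\Cs$.
\end{itemize}
Since none of these conditions involves the specific form of $\Pfam_i$, substituting $\Pfam^{\mathrm{int}}_i$ leaves each proof valid verbatim under that theorem's original extra hypotheses.

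Finally, for Theorem~\ref{thm:selection} I would note what its argument requires. It needs monotone protection, its completion technology, and strict positive surplus on protecting coalitions. I would set $u_i(C)=e_i-\pi(q(C))h_i$. Under this choice, $u_i(C)\ge0$ holds exactly when $C\in\Pfam^{\mathrm{int}}_i$. The added hypothesis $e_i>\pi(q(C))h_i$ on every coalition treated as protecting $i$ is precisely strict safe surplus. Theorem~\ref{thm:selection} therefore applies as stated.

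The main obstacle is not any calculation. It is making sure that no proof silently used a feature of the original $\Pfam_i$ beyond monotonicity. The points to check are the domain restriction $C_0\subseteq C\subseteq C_0\cup S$, the requirement $i\in C$, and, in the selection theorem, any reliance on payoffs beyond their sign and monotonicity. I would handle this by listing, for each theorem, the exact properties of protection it invokes and confirming that each is among the three just verified.
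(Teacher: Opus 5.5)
Your proposal is correct and follows essentially the same route as the paper's proof. Monotonicity of $\pi\circ q$ makes each $\Pfam^{\mathrm{int}}_i$ closed under supersets that contain $i$ (you rightly make explicit the use of $h_i\ge0$, which the paper leaves implicit), the maintained vanguard inequality supplies the starting condition, and the earlier proofs then carry over unchanged because they use the families only through Assumption~\ref{ass:M} and that condition. For Theorem~\ref{thm:selection}, your choice $u_i(C)=e_i-\pi(q(C))h_i$ is the payoff monotonicity the paper invokes, with the completion technology and strict safe surplus imposed separately.
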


\begin{proof}[Proof of Lemma~\ref{lem:interim}]
Suppose $C\subseteq C'$, $i\in C$, and $e_i\ge\pi(q(C))h_i$. A larger coalition is at least as resilient, so $q(C')\ge q(C)$. Because $\pi$ is nonincreasing, $\pi(q(C'))\le\pi(q(C))$. Hence $e_i\ge\pi(q(C'))h_i$. Each induced family $\Pfam^{\mathrm{int}}_i$ is therefore closed under supersets that contain $i$, as Assumption~\ref{ass:M} requires. Replacing the original protection families with these induced families changes none of the completion, all-or-nothing-release, E1, or mechanism primitives. The starting condition makes $C_0$ self-protecting under the induced families as well. The earlier proofs thus apply whenever their other assumptions hold. Finally, the expected payoff $e_i-\pi(q(C))h_i$ is nondecreasing because $\pi\circ q$ is nonincreasing. This gives the payoff monotonicity required by Supplemental Appendix Theorem~\ref{thm:selection}; its completion technology and strict-safe-surplus assumptions must still be imposed separately.
\end{proof}

\begin{proof}[Proof of Proposition~\ref{thm:rd}]
Public organizing produces $\Cs$. Because $G\cap\Cs=\emptyset$, every $i\in G$ remains silent. Her expression never becomes public, so effective suppression equals one. At the same time, the opposition never retaliates against her because no public coalition identifies her. Conditional release instead produces $\Cp$. This coalition is self-protecting, so $e_i\ge\pi(q(\Cp))h_i$ for every member, and everyone in $G$ accepts publication. Equation~\ref{eq:fold} gives a fight probability of $1-F(q(\Cp))$. Conditional on a fight, suppression succeeds with probability $1-q(\Cp)$. The unconditional suppression probability is therefore $(1-F(q(\Cp)))(1-q(\Cp))=\pi(q(\Cp))$. Moving from public organizing to conditional release raises observed retaliation from zero and lowers effective suppression from one to $\pi(q(\Cp))$. Both changes are strict because $\pi(q(\Cp))\in(0,1)$.
\end{proof}

\begin{proof}[Proof of Corollary~\ref{cor:composition}]
In the nonatomic scalar protection case, a fixed point $y=b+H_S(y)$ contains exactly the supporters with $\rho_i\le y$. The safe-expansion iteration from the vanguard reaches the least fixed point $y^-$, while Theorem~\ref{thm:largest} identifies the greatest fixed point $y^+$ with the largest self-protecting coalition. Hence $\Cs=C_0\cup\{i\in S:\rho_i\le y^-\}$ and $\Cp=C_0\cup\{i\in S:\rho_i\le y^+\}$. Every $i\in\Cp\setminus\Cs$ therefore has $\rho_i>y^-$. Each recruited member $j\in\Cs\setminus C_0$ has $\rho_j\le y^-$, while every vanguard member has $\rho_j\le b\le y^-$. Thus the ordering holds for all members of the full cascade coalition. When $m(\Cs)>0$, the mass-weighted average requirement over $\Cp=\Cs\cup(\Cp\setminus\Cs)$ is a convex combination of the averages over the two pieces, which gives the inequality, strict when $m(\Cp\setminus\Cs)>0$. In the fight-or-fold model, a hidden supporter $i$ has $\Cs\cup\{i\}\notin\Pfam_i$ because $\Cs$ is a fixed point of $\Gamma$, so $e_i/h_i<\pi(q(\Cs\cup\{i\}))\le\pi(q(\Cs))$, where the second inequality holds because $q$ is nondecreasing and $\pi$ is nonincreasing. A cascade member $j$ has $\Cs\in\Pfam_j$, so $e_j/h_j\ge\pi(q(\Cs))$. Hence $e_i/h_i<e_j/h_j$ for every such pair.
\end{proof}

\clearpage
\let\section\ACbodysection
\bibliographystyle{aea}
\bibliography{safe_coalitions}

\ifbuildwithsupplement
  \ifdefined\PAPERWITHREFERENCEGUIDE
    \clearpage
    \begingroup
    \documentclass[safe_coalitions.tex]{subfiles}

\externaldocument[][nocite]{safe_coalitions_online_appendix}[safe_coalitions_online_appendix.pdf]

\begin{document}

\ifSubfilesClassLoaded{%
  \title{Reference Guide to ``Safe Collective Expression with Social Assurance Contracts''}
  \shortTitle{Reference Guide}
  \issueName{}
  \JEL{}
  \Keywords{}
  \ifanonymizeforpeerreview
    \author{Anonymous}
  \else
    \author{Matthew Cashman}
  \fi
  \date{\today}
  \begin{abstract}
  Notation and a lexicon of defined terms for the article and its Supplemental Appendix.
  \end{abstract}
  \maketitle
}{%
  \appendixparttitle{Reference Guide}
  Notation and a lexicon of defined terms for the article and its Supplemental Appendix.

  \let\section\ACbodysection
  \counterwithout{table}{section}
}

\setcounter{section}{0}
\setcounter{table}{0}
\renewcommand{\thesection}{\Roman{section}}
\renewcommand{\theHsection}{reference.\arabic{section}}
\renewcommand{\thetable}{R\arabic{table}}
\renewcommand{\theHtable}{reference.\arabic{table}}

Section references beginning with S point to the Supplemental Appendix; other section references point to the article, and ``Intro.'' to its introduction.

\section{Notation}\label{sec:reference_tables}

The table records where each symbol is introduced. As in the article and its appendix, a person \emph{follows through} when her attempt \emph{completes}: a public action completes when the statement becomes public under its author's name, and a private action completes when the commitment reaches its designated recipient.

{\footnotesize
\begin{longtable}{@{}>{\raggedright\arraybackslash}p{1.55in}>{\raggedright\arraybackslash}p{3.45in}>{\raggedright\arraybackslash}p{0.8in}@{}}
\caption{Notation, with the section that introduces each symbol.}\label{tab:notation}\\
\toprule
Symbol or term & Meaning & Where\\
\midrule
\endfirsthead
\multicolumn{3}{@{}l}{\emph{Table~\ref{tab:notation}, continued}}\\
\toprule
Symbol or term & Meaning & Where\\
\midrule
\endhead
\midrule
\multicolumn{3}{@{}r}{\emph{continued on the next page}}\\
\endfoot
\bottomrule
\endlastfoot
\multicolumn{3}{@{}l}{\emph{Notation: protective coalitions}}\\
$N,\;S$ & eligible prospective recruits, disjoint from $C_0$; supporter set $S\subseteq N$, those willing to be published if the published coalition protects them & \S\ref{sec:safe_coalitions}\\
$a_i$, $m(C)=\sum_{i\in C}a_i$ & a person's mass and a coalition's mass (the headcount when everyone counts equally) & \S\ref{sec:safe_coalitions}\\
$C_0,\;b$ & self-protecting initial public coalition and its mass $b=m(C_0)$ & \S\ref{sec:safe_coalitions}\\
$\Pfam_i$ & protection family: the full coalitions containing $i$ in which $i$ is safe, for $i\in C_0\cup S$ & \S\ref{sec:safe_coalitions}\\
$\rho_i$ & scalar protection requirement ($C\in\Pfam_i$ iff $m(C)\ge\rho_i$, for $C\ni i$); $\rho_i\le b$ for $i\in C_0$ & \S\ref{sec:safe_coalitions}\\
$T(C)$ & $C\cup\{i\in S\setminus C:C\cup\{i\}\in\Pfam_i\}$, the current coalition plus safe entrants & \S\ref{sec:safe_coalitions}\\
$\Gamma(C)$ & $C_0\cup\{i\in S:C\cup\{i\}\in\Pfam_i\}$; its fixed points are self-protecting and closed to safe entrants & \S\ref{sec:safe_coalitions}\\
$\Cs,\;\Cp$ & full cascade coalition $\Cs(S)=T^{|S|}(C_0)$ and largest self-protecting coalition; $\Cp(W)$ is the largest such $C$ with $C_0\subseteq C\subseteq C_0\cup W$ & \S\ref{sec:safe_coalitions}\\
$y$, $G_S(y)=b+H_S(y)$ & public mass; the mass that would be public if every supporter protected at $y$ joined (Figure~\ref{fig:cascade_tunnel}) & \S\ref{sec:safe_coalitions}\\
$y^-,\,y^u,\,y^+$ & crossings of $G_S$ with the diagonal: cascade mass, unstable middle crossing, largest safe mass & \S\ref{sec:safe_coalitions}\\
$C_t,\;A_t,\;R_t$ & the public coalition after date $t$; those who try to go public at $t$; those who follow through & \S\ref{sec:res}\\
$V$, $D(V)$ & the currently valid signers; the new endorsements released, $D(V)=\Cp(V)\setminus C_0$, with full outcome $\Cp(V)$ & \S\ref{sec:cascade_tunnel}\\
$r_i$ & risk tolerance: the probability of unprotected exposure $i$ accepts, conditional on exposure (none allowed at $r_i=0$); $e_i/h_i$ in the binary harm model & \S\ref{sec:res}\\
\multicolumn{3}{@{}l}{\emph{Notation: custody, risk, and beliefs}}\\
$C_\infty$ & the final public coalition & \S\ref{sec:dominance_supp}\\
$B_i$ & all other attempted public respondents in $i$'s appointed action round & \S\ref{sec:custody_risk_supp}\\
$\varepsilon$ & the probability that an attempt does not complete & \S\ref{sec:positive_tolerance_frontier}\\
$y_t$, $m_B$ & a call to action's target mass and the mass of its group & \S\ref{sec:positive_tolerance_frontier}\\
$s(y)$, $\Delta(y)$ & excess safe mass at target $y$; expected safety slack after incomplete attempts & \S\ref{sec:positive_tolerance_frontier}\\
$\underline{r}$ & the same-date group's minimum tolerance, $\min_{i\in B}r_i$ & \S\ref{sec:positive_tolerance_frontier}\\
$A$, $A_h$ & a scheduled round's attempted profile, fixed before its completion draws (frozen) & \S\ref{sec:acceptable_risk_supp}\\
$X_i$, $H_{ih}$, $\alpha_{ih}$ & the event that $i$ is exposed; the event that step $h$ is reached and her attempt completes; its share $\Pr(H_{ih}\mid X_i)$ & \S\ref{sec:acceptable_risk_supp}\\
$\delta$, $\delta^{\mathrm h}_i$ & leak probability; the harmful-leak rate for $i$ & \S\ref{sec:custody_risk_supp}\\
$m_i$ & the share of $i$'s exposure on which a step is reached without a leak and her own action completes & \S\ref{sec:custody_risk_supp}\\
$E(M)$ & the vanguard plus every recruit that mechanism $M$ exposes with positive probability & \S\ref{sec:dominance_supp}\\
$\mu$, $\mathcal M$, $\gamma$, $\Pfam_i(\mu)$ & audience belief, its attainable set, the strength of its effect, and belief-dependent protection & \S\ref{sec:design}\\
\multicolumn{3}{@{}l}{\emph{Notation: selection and fight or fold}}\\
$z_{it}$, $p_i$ & whether $i$ becomes public at date $t$; the completion probability of her attempt & \S\ref{sec:selection_supp}\\
$u_i(C)$, $\beta$ & payoff from expression in $C$, nonnegative exactly when $C$ protects $i$; discount factor & \S\ref{sec:scope_axioms}; \S\ref{sec:selection_supp}\\
$\hat\rho_i$ & a signer's reported protection requirement, to which the release rule is applied & \S\ref{sec:reporting_requirements}\\
$q(C)$ & resilience: the probability that coalition $C$ survives retaliation & \S\ref{sec:target}\\
$d$, $\kappa_T$, $q^\ast$, $F$ & the opposition's stakes, retaliation cost, break-even resilience $1-\kappa_T/d$, and the distribution of $q^\ast$ & \S\ref{sec:target}\\
$\pi(q)=(1-F(q))(1-q)$ & the probability that a coalition of resilience $q$ is suppressed & \S\ref{sec:target}\\
$e_i$, $h_i$ & $i$'s gain from expression; her loss if the coalition is suppressed & \S\ref{sec:target}\\
$\Pfam^{\mathrm{int}}_i$ & the expected-payoff protection family, the coalitions with $e_i\ge\pi(q(C))h_i$ & App.~\ref{app:proofs}\\
$\Lambda$, $\psi(m)$, $d_i$, $p_i(m)$, $\mu_i$ & per-member model: the distribution of the opposition's cost type; the cost scale of punishing a member of a mass-$m$ coalition; the harm member $i$ imposes; her punishment probability $\Lambda(d_i/\psi(m))$; the mass at which she accepts release & \S\ref{app:opponent}\\
\multicolumn{3}{@{}l}{\emph{Notation: supplement extensions}}\\
$\check\rho_i$, $K_S(y)$ & activation threshold $\rho_i-a_i$ for public entry with atomic mass; the mass of recruits whose activation threshold is at most $y$ & \S\ref{sec:gap}\\
$\Gamma_\mu$, $\Cs(\mu)$, $\Cp(\mu)$ & the operator and its least and greatest fixed points in the protection environment induced by belief $\mu$ & \S\ref{sec:design}\\
$\sigma$, $\sigma^\ast_i$, $C^\ast(\sigma)$ & an exogenous conditioning statistic, individual thresholds on it, and the coalition it protects & \S\ref{sec:bbf}\\
$\Sigma$, $\mathcal K(C_0)$ & the admissible public sets under nonmonotone protection; the coalitions safely reachable from the vanguard by one-person steps & \S\ref{sec:scope}\\
\end{longtable}
}

\section{Terms of Art}\label{sec:reference_lexicon}

The table records the paper's defined terms with the section that introduces each. Symbols are in Table~\ref{tab:notation}.

{\footnotesize
\begin{longtable}{@{}>{\raggedright\arraybackslash}p{1.55in}>{\raggedright\arraybackslash}p{3.35in}>{\raggedright\arraybackslash}p{0.9in}@{}}
\caption{Defined terms, with the section that introduces each.}\label{tab:lexicon}\\
\toprule
Term & Meaning & Where\\
\midrule
\endfirsthead
\multicolumn{3}{@{}l}{\emph{Table~\ref{tab:lexicon}, continued}}\\
\toprule
Term & Meaning & Where\\
\midrule
\endhead
\midrule
\multicolumn{3}{@{}r}{\emph{continued on the next page}}\\
\endfoot
\bottomrule
\endlastfoot
\multicolumn{3}{@{}l}{\emph{Institutions}}\\
Social Assurance Contract & an institution that collects endorsements of a statement in private and publishes them together only when the resulting coalition would protect every signer & Intro.\\
statement; endorsement & the controversial text to be published; a signature giving permission to publish it under the signer's name & Intro.\\
publication condition & the rule fixing when endorsements may be published and whose may appear, in the simplest case a number of signatures & Intro.\\
failure privacy & if the publication condition is never met, no signer's identity is revealed & Intro.\\
open letter & the base case: signatures published as they arrive, each exposed before the next joins & Intro.\\
call to action & asks people to add their names publicly at a specified time, without collecting endorsements beforehand, so each must bet that enough others follow through & Intro.; \S\ref{sec:positive_tolerance_frontier}\\
credible poll & verifies that respondents are eligible and publishes only their number; with private custody it certifies a self-protecting mass without naming anyone & Intro.; \S\ref{sec:cascade_tunnel}; \S\ref{sec:count_release}\\
cascade; tunnel & public organizing that adds names one at a time and stops at the exposure barrier; private collection and joint publication that reaches a larger protective coalition beneath it & Intro.\\
exposure barrier & the point at which the public cascade stops because no further person can join safely & Intro.; \S\ref{sec:safe_coalitions}\\
hidden supporters & members of the largest protective coalition whom the public cascade cannot reach & Intro.\\
\multicolumn{3}{@{}l}{\emph{Coalitions and protection}}\\
coalition & the set of people whose endorsement of the statement is public, including the vanguard & Intro.; \S\ref{sec:safe_coalitions}\\
protection & a coalition protects a member when speaking with it is at least as good for her as staying silent; it need not mean immunity from retaliation & Intro.; \S\ref{sec:safe_coalitions}\\
self-protecting (protective) coalition & a coalition that protects every member & \S\ref{sec:safe_coalitions}\\
vanguard & the group already public when recruitment begins, taken to be self-protecting & Intro.; \S\ref{sec:safe_coalitions}\\
supporter set; eligible population & those willing to be named if the published coalition protects them; the fixed set of people who may join & \S\ref{sec:safe_coalitions}\\
mass & headcount, weighted when some people count for more than others & \S\ref{sec:safe_coalitions}\\
opposition; allies & whoever can punish public support for the statement; whoever would act to protect the people named & \S\ref{sec:safe_coalitions}\\
architect; administrator & chooses the statement, the eligible population, and the publication condition; receives endorsements and publishes them & \S\ref{sec:safe_coalitions}\\
protection family & the public coalitions containing a person in which she is protected & \S\ref{sec:safe_coalitions}\\
monotone protection (safety in numbers) & adding public participants never makes an existing member less safe (Assumption~\ref{ass:M}) & Intro.; \S\ref{sec:safe_coalitions}\\
scalar protection case & protection depends only on total public mass, through one requirement per person & \S\ref{sec:safe_coalitions}\\
nonatomic case & a continuum of agents, each of negligible mass & \S\ref{sec:safe_coalitions}\\
safe-expansion operator; static safety check & $T$ adds exactly the people who can safely join the current public set, one round of public expression; $\Gamma$ keeps the vanguard and collects every supporter who would be safe in a candidate coalition or after joining it & \S\ref{sec:safe_coalitions}\\
cascade coalition; largest self-protecting coalition & where iterating $T$ from the vanguard stops; the union of all protective coalitions & \S\ref{sec:safe_coalitions}\\
regret-free expansion & growth that maintains protection at every step & \S\ref{sec:safe_coalitions}\\
\multicolumn{3}{@{}l}{\emph{Safety and follow-through}}\\
unprotected exposure; tolerance & being identified without the protection one requires; the probability of it a person accepts & \S\ref{sec:res}\\
follow through; complete & a person follows through when her attempt completes: a public action when her statement becomes public under her name, a private commitment when it reaches its recipient & \S\ref{sec:res}\\
follow-through uncertainty & unless co-participants are verified before exposure, any subset of a date's attempted public actions may fail to complete, including all but one (Assumption~\ref{ass:U}) & Intro.; \S\ref{sec:res}\\
verified before exposure; unverified attempt & a co-participant whose signature the institution has received and certified; an attempt to publish one's own endorsement before the institution holds the others needed & \S\ref{sec:res}\\
regret-free safety & no participant is ever named in a coalition that does not protect her, on any admissible realization (Assumption~\ref{ass:RES}) & Intro.; \S\ref{sec:res}\\
admissible realization & a pattern of who goes public and who does not that the institution's guarantee must cover & \S\ref{sec:res}\\
greedy process & everyone who can safely join the current public coalition joins at once, repeatedly & \S\ref{sec:cascade_tunnel}\\
all-or-nothing release & the administrator can publish a coalition's endorsements in a single event rather than as a sequence of individual exposures (Assumption~\ref{ass:atomic}) & \S\ref{sec:cascade_tunnel}\\
release rule; valid signatures & the administrator publishes together a subset $D(V)$ of the valid signatures $V$ it has received, chosen so that $C_0\cup D(V)$ is self-protecting; the tunnel uses $D(V)=\Cp(V)\setminus C_0$ & \S\ref{sec:cascade_tunnel}\\
\multicolumn{3}{@{}l}{\emph{Crossing the barrier}}\\
private custody & receiving and safeguarding endorsements that permit publication under a stated condition & \S\ref{sec:necessity}\\
general environment (E1) & actions are individual and irreversible with every completion subset admissible, information moves only through completed actions, devices act only on what they have received, and naming requires permission action by action (Assumption~\ref{ass:causal}) & \S\ref{sec:representation_necessity}; App.~\ref{app:representation}\\
exposure action; private commitment; device action & an action that identifies its author on completion; one that reaches a recipient while its author stays hidden; an institution's own publication & \S\ref{sec:representation_necessity}\\
leak & any route by which the opposition learns or infers a person's endorsement before release & \S\ref{sec:representation_necessity}\\
expression mechanism; public-only mechanism & the game between an administrator and the recruits; a mechanism whose publications condition only on completed exposure actions (Definition~\ref{def:mechanism}) & \S\ref{sec:representation_necessity}\\
private conditional-release stage & the first completed action that moves the public set past the cascade coalition, whose executing device already holds a still-private commitment from every agent it newly names other than possibly its author (Theorem~\ref{thm:necessity}) & \S\ref{sec:representation_necessity}\\
\multicolumn{3}{@{}l}{\emph{Accepting risk}}\\
scheduled action round & a call to action whose protection is assessed against the coalition already public plus everyone who completes in the appointed round; late arrivals do not count and cannot repair a shortfall & \S\ref{sec:positive_tolerance_frontier}\\
acceptable & a call to action or mechanism whose probability of unprotected exposure, conditional on being exposed, is at most each person's tolerance & \S\ref{sec:positive_tolerance_frontier}; \S\ref{sec:acceptable_risk_supp}\\
same-date group; bridging target; safety slack & the supporters outside the cascade whom a target would protect; a target from which safe expansion reaches $\Cp$; expected mass above the target after incomplete attempts & \S\ref{sec:positive_tolerance_frontier}\\
concentration bound & an upper limit on the probability that realized turnout falls short of the target; it shrinks as the group's mass is spread over more people & \S\ref{sec:positive_tolerance_frontier}\\
turnout risk; custody risk & the risk that others do not appear; the risk that held endorsements leak & \S\ref{sec:custody_frontier}\\
harmful leak; $\delta$-failure-private stage & a leak that identifies someone without protection; a stage that reveals commitments only through a protecting release unless a leak occurs & \S\ref{sec:custody_frontier}; \S\ref{sec:custody_risk_supp}\\
\multicolumn{3}{@{}l}{\emph{Participation and custody}}\\
costless signing conditions & signing costs nothing, stays hidden until release, never leaks, and is governed by a credible publication rule & \S\ref{sec:scope_axioms}\\
strict safe surplus & publication in a coalition that protects a person pays her strictly more than silence & \S\ref{sec:scope_axioms}\\
infiltration; false contract & signing to inflate the count while planning to recant after release; a contract that harvests names for the opposition, publishing a short list or never publishing as promised & \S\ref{sec:custody_practice}\\
\multicolumn{3}{@{}l}{\emph{The opposition}}\\
fight-or-fold model & the opposition, a single decision-maker, retaliates when the expected gain exceeds its cost & \S\ref{sec:target}\\
resilience; break-even survival probability; suppression probability & the chance a coalition survives retaliation; the resilience below which the opposition fights; the overall chance that expression is suppressed & \S\ref{sec:target}\\
expected-payoff protection & a coalition protects a member when her gain from expression covers her expected loss from suppression, $e_i\ge\pi(q(C))h_i$ & \S\ref{sec:target}\\
visibility--suppression reversal & bringing hidden supporters into view raises observed retaliation while lowering suppression of their expression & \S\ref{sec:target}\\
\multicolumn{3}{@{}l}{\emph{Supplement extensions}}\\
exposure step; frozen attempted profile; master seed & an exposure event or an entire scheduled round, assessed before its completion draws; a round's attempts, fixed before the draws; the randomization of administrator and participants & \S\ref{sec:acceptable_risk_supp}\\
final safe mass; confinement & the mass of public participants whom the final coalition protects; when each recruit requires a positive chance of protection, the vanguard plus the recruits exposed with positive probability is itself protective and lies in $\Cp$ & \S\ref{sec:dominance_supp}\\
robustly admissible; public participation game & a Markov strategy that keeps its user safe under every pattern of completed and incomplete attempts; the dynamic game of unverified public attempts in which strategies depend only on the current public coalition, promises carry no weight, and incomplete attempts are costless and unobserved & \S\ref{sec:selection_supp}\\
safe errors; class menus & reporting an upper bound on one's requirement cannot cause unprotected publication; classes with preset requirements stand in for reports & \S\ref{sec:reporting_requirements}\\
exposure barrier of positive depth; locally blocked; activation threshold & a level above the cascade at which protected mass falls short of public mass; a lower crossing with a shortfall immediately above it; a requirement net of the person's own mass & \S\ref{sec:gap}\\
exogenous-statistic limit & when protection depends only on a statistic that is primitive and independent of who acts, the cascade and the tunnel reach the same coalition at each value of the statistic & \S\ref{sec:bbf}\\
safely reachable coalitions; same-date group admissibility & coalitions reachable from the vanguard by safe one-person steps; every subset and order of a round's attempts is admissible (Assumption~\ref{ass:batch}) & \S\ref{sec:scope}\\
\end{longtable}
}

\end{document}

    \clearpage
    \endgroup
  \fi
  \clearpage
  \documentclass[safe_coalitions.tex]{subfiles}

\begin{document}

\ifSubfilesClassLoaded{%
  \title{Supplemental Appendix to ``Safe Collective Expression with Social Assurance Contracts''}
  \shortTitle{Supplemental Appendix}
  \issueName{}
  \JEL{}
  \Keywords{}
  \ifanonymizeforpeerreview
    \author{Anonymous}
  \else
    \author{Matthew Cashman\thanks{MIT Sloan, 100 Main Street, Cambridge, MA, USA. Email: \href{mailto:cashman@mit.edu}{cashman@mit.edu}.}}
  \fi
  \date{\today}
  \begin{abstract}
  This Supplemental Appendix contains supporting results, proofs, and extensions for the article.
  \end{abstract}
  \maketitle
}{ }

\startsupplement

\ifSubfilesClassLoaded{ }{%
  \appendixparttitle{Supplemental Appendix}
}

The appendix uses the article's terminology, except in the credible-poll variant below. To ``name'' an agent is to publish her specified statement, report, or other expression under her name. Accordingly, the public coalition is the set of people whose expressions have been published. Publication itself is the act of expression, rather than an announcement followed by a later choice to act.

Protection means that speaking with the published coalition is at least as good as silence. In the fight-or-fold model, that comparison uses expected payoffs, so a protective coalition can still suffer retaliation. Regret-free disclosure guarantees protection at every publication, not a favorable outcome in every fight. The risk calculations for scheduled public calls to action use the article's round-end assessment convention: only participation in the appointed round counts, with no credit for participation in later rounds.

The fixed initial public coalition $C_0$ is disjoint from the eligible prospective recruits $N$, with $S\subseteq N$ and $b=m(C_0)$. Every public coalition is a full coalition $C_0\subseteq C\subseteq C_0\cup S$, and protection families are defined on these full coalitions for all $i\in C_0\cup S$. The vanguard is self-protecting: $C_0\in\Pfam_i$ for every $i\in C_0$, with this condition maintained in each comparison or information environment considered below. Monotone protection preserves its protection as recruits join. In the scalar protection case, requirements are defined for all these agents, with $\rho_i\le b$ for $i\in C_0$. In the article's fight-or-fold model, expected-payoff protection is defined for all $i\in C_0\cup S$, with $e_i\ge\pi(q(C_0))h_i$ for initial members. Signing and public attempts are choices of recruits in $S$. The vanguard, already public, makes no such choice. The contract releases new endorsements $D(V)=\Cp(V)\setminus C_0\subseteq V$, giving full public outcome $C_0\cup D(V)=\Cp(V)$ and mass $m(\Cp(V))=b+m(D(V))$.

The supplement has three parts. The first supplies supporting arguments, the second examines robustness and empirical interpretation, and the third develops further extensions. Section labels and cross-references identify the individual results within each part.

\part*{\normalfont\large\bfseries I. Supporting Arguments}
\addcontentsline{toc}{part}{I. Supporting Arguments}

The article's Appendix~\ref{app:proofs} contains the proofs of every result stated in the article. This part supplies the additional arguments cited there: the institutional classification, details of the action-based representation, risk bounds, and results for verification and participation. The separate \emph{Reference Guide} collects the notation and the paper's defined terms.

\section{Institutional Features and Tightness}\label{sec:tightness_supp}

This section gives the institutional classification and the tightness remark summarized in Section~\ref{sec:representation_necessity} of the article.

To distinguish the institutions, consider what each records and releases. A poll records preferences rather than permission to publish; a pledge list records permission but a fixed publication date can expose a short list; a credible poll certifies support without naming anyone, and an allegation escrow publishes compatible reports together only when the group protects its members. These differences mean that the labels used in practice matter less than the order in which people are exposed and what the institution holds: whether it receives completed private commitments, conditions release on a protecting coalition, and publishes the attributable expressions together. That order determines whether the bound on regret-free public organizing (Theorem~\ref{thm:cascade}), the necessity theorem (Theorem~\ref{thm:necessity}), or the credible-poll result (Corollary~\ref{prop:count}) applies.

\begin{remark}[Each feature is necessary]\label{prop:tightness}
On a two-agent instance with an empty vanguard, neither agent safe alone, and the pair safe together, each of the three properties---private, conditional, verified---is individually necessary. Dropping conditionality permits publication of a short list, dropping privacy, so that verification happens in public, exposes the first person alone, and publishing one person's statement because the other has merely promised to sign can leave her visible when the promise is not kept. The same two-person example shows that dropping any one of the three properties breaks the guarantee against unprotected exposure.
\end{remark}

\section{Action-Space Representation Details}\label{sec:representation_details}

This section spells out the scope of Section~\ref{sec:necessity}. Attribution is defined relative to an audience. Revealing an attribute without identifying a person can change beliefs or future protection requirements, as discussed in Section~\ref{sec:design}, but does not enlarge the public coalition. An identity that becomes public without any completed action carrying that information is a leak and enters the rate of harmful leaks in Proposition~\ref{prop:robust_frontier}.

Lemma~\ref{lem:dichotomy} covers actions with deterministic attribution. It does not split one action with random attribution into separate exposure and commitment actions. Such a split would conflict with the assumption that every subset of attempted actions can complete: it would introduce a realization in which both new actions complete even when the original lottery had no corresponding outcome. Random attribution instead extends the risk analysis. At zero tolerance, unprotected attribution must have probability zero. This says nothing about individual outcomes of a continuous randomization, each of which has probability zero. At positive tolerance, its probability enters the person's risk of unprotected exposure. These observations do not produce an exact two-action representation.

The classification also applies to informal institutions: the device in Lemma~\ref{lem:custody} can be a person. A colleague who takes private custody of endorsements supporting a specified statement and then publishes the co-signed letter provides a custody stage. The still-private-commitment requirement concerns newly attributed recruits, not vanguard members or previously public recruits whose endorsements are reprinted. Trust affects which endorsements the delegate receives. The commitments she holds, together with the protection families, bound the coalition she can safely release. A delegate holding endorsements differs from people merely coordinating their own attempts: Remark~\ref{rem:imperfect} in the article distinguishes correlation from joint attribution. Under Assumption~\ref{ass:U}, coordinating separate attempts still permits a lone completion.

\section{Frontier Supporting Results}\label{sec:frontier_supp}

A public call to action risks incomplete turnout, whereas a contract risks premature disclosure from private custody. This section puts both risks on the same scale so they can be compared against a participant's tolerance for unprotected exposure.

\subsection{Acceptable Exposure Risk}\label{sec:acceptable_risk_supp}

Maintain monotone protection, Assumption~\ref{ass:U}, and the causal information and permission rules of Definition~\ref{def:mechanism} and E1. Extend their risk assessment to scheduled public rounds as in Section~\ref{sec:positive_tolerance_frontier}: fix all public attempts before their completion draws (a \emph{frozen} attempted profile), allow no decisions or device actions before the round ends, and assess a completing participant against the preceding public set plus everyone who completes that round. Incomplete or late attempts are costless and unattributed this round. All subsets remain possible; a scheduled round is not an all-or-nothing release. Outside these rounds, assess protection at each ordinary E1 exposure event. Later rounds cannot erase an assessed shortfall.

Each attempted public action and private submission completes independently with probability $1-\varepsilon\in(0,1)$, after the choice to act. An \emph{exposure step} is an ordinary exposure event or an entire scheduled public round. Its risk is computed before the relevant completion draws, and a round's assessment precedes any device action at its boundary. Steps are countably indexed, with finitely many attempts at each step, and first exposure and assessment occur at finite steps. A measurable master seed, independent of completion draws, includes all administrator and participant randomization. It may have a continuous distribution. Conditioning on the seed determines every contingent choice and completion-order rule, leaving only the completion indicators random. The same-date groups below consist of attempts in one scheduled round, not separate rounds on the same calendar date.

A mechanism is \emph{acceptable} at tolerance profile $r=(r_i)_{i\in S}$ if $\Pr(i\text{ exposed unprotected}\mid i\text{ exposed})\le r_i$ for every recruit. Unprotected exposure is assessed at her first identifying event, or at the end of her first identifying scheduled round, using the convention above. For someone the mechanism never newly exposes, including each vanguard member, take this risk to be zero. The vanguard remains protected by the starting condition and monotonicity. Unless specified otherwise, private solicitation is invisible and does not leak.

Regret-free safety requires every realized public coalition to meet the participant's protection criterion, which can already account for the benefit of speaking. With positive risk tolerance, she can instead accept some chance that the group will not protect her because the gains from expression make that risk worthwhile. Let $e_i$ be person $i$'s benefit from expression and $h_i$ her loss from unprotected exposure. Her tolerance is $r_i=e_i/h_i$: exposure is worth accepting when $e_i-p\,h_i\ge0$, that is, when the probability $p$ of unprotected exposure is at most $e_i/h_i$, so a larger benefit or smaller harm supports more risk. At a given exposure step, person $i$ receives the benefit only if her public action is completed, and in that case she also suffers the harm if the realized coalition does not protect her, so the local calculation conditions on her becoming public. For a whole mechanism, acceptability includes every route by which she can be exposed.

This binary risk calculation differs from the fight-or-fold model in Section~\ref{sec:target}, which gives one economic setting in which these benefits and harms arise. Here the harm is incurred exactly when the realized coalition does not protect her, so acceptability bounds the probability of that event, whereas there suppression occurs with probability $\pi(q(C))$ in any coalition and the criterion is expected utility, $e_i-h_i\,\mathbb E[\pi(q(C))\mid i\text{ exposed}]\ge0$. The two coincide when suppression is certain in an unprotective coalition and impossible in a protective one. These risk bounds use the binary model.

To bound risk, consider a participant who becomes public while everyone she relies on fails to appear. The next result weighs that event by how often the mechanism exposes her at the relevant step. Other patterns of incomplete turnout can also leave her unprotected, so this is a lower bound on her risk. The article's scalar call-to-action result instead uses the full distribution of who appears.

\begin{proposition}[Necessity at positive tolerance]\label{prop:interim_necessity}
(i) If $\Cs(S)\subsetneq\Cp(S)$ and $r_i\ge 1-(1-\varepsilon)^{|B|}$ for every agent $i$ of the nonempty same-date group $B=\Cp(S)\setminus\Cs(S)$, a single scheduled unverified call to action by that group is acceptable to every member and realizes $\Cp(S)$ with probability $(1-\varepsilon)^{|B|}>0$: with tolerant participants, a private stage is not necessary to reach beyond the cascade. (ii) Fix a mechanism and agent $i\in S$. Let $X_i$ be the event that $i$ is exposed, with $\Pr(X_i)>0$. Suppose that at an exposure step $h$, before its completion draws, $i$ is not yet public and $C\cup\{i\}\notin\Pfam_i$, where $C$ is the preceding public set. Her own unverified exposure attempt is accompanied by a fixed group $B_h\ne\varnothing$ of all other attempted public participants who could add names by her assessment. No other new names or device releases can occur before that assessment; in a scheduled round with frozen attempted profile $A_h$, $B_h=A_h\setminus\{i\}$. Let $H_{ih}$ be the event that $h$ is reached and $i$'s attempt completes, and write $\alpha_{ih}=\Pr(H_{ih}\mid X_i)$. Then
\[
\Pr(i\text{ is exposed unprotected}\mid X_i)\ \ge\ \alpha_{ih}\varepsilon^{|B_h|}.
\]
Consequently acceptability requires $\alpha_{ih}\varepsilon^{|B_h|}\le r_i$. For pairwise disjoint exposure events at these steps, the corresponding lower bounds add. Safe exposure on other branches can make $\alpha_{ih}<1$, so $r_i<\varepsilon^{|B_h|}$ alone does not rule out the exposure step. (iii) At $r=0$, for almost every master seed, every finite completion history of the fixed-seed mechanism is safe at its assessment times. Each scheduled round then admits a regret-free E1 refinement into individual completions, so Theorem~\ref{thm:necessity} applies for those seeds: any crossing requires its private conditional-release stage. The conclusion holds for almost every seed, not for every seed. Along any sequence of acceptable mechanisms with $r^n\to0$, each agent's mechanism-wide unprotected-exposure risk converges to zero, although it may remain positive at every $n$.
\end{proposition}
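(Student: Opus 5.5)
The three parts need different tools. Part (i) is a direct construction. Part (ii) is a single-realization lower bound. Part (iii) is a measure-zero argument that reduces to Theorem~\ref{thm:necessity}.

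\textbf{Part (i).} Schedule one round in which every member of $B=\Cp(S)\setminus\Cs(S)$ attempts publication. Take the starting public set to be $\Cs(S)$, reached beforehand by the greedy process of Theorem~\ref{thm:cascade}(ii), and assume for this bound that it is reached safely with full completion.

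Fix $i\in B$ and condition on her completion. With probability $(1-\varepsilon)^{|B|-1}$ every other member of $B$ also completes. The realized set is then $\Cp(S)$, which protects her by Theorem~\ref{thm:largest}. Her conditional unprotected probability is therefore at most $1-(1-\varepsilon)^{|B|-1}\le 1-(1-\varepsilon)^{|B|}\le r_i$, so the round is acceptable to her. The unconditional probability that all of $B$ complete is $(1-\varepsilon)^{|B|}>0$, and on that event the realized coalition is $\Cp(S)$.

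Two points need a remark. First, cascade members are exposed only in regret-free steps, so their risk is zero. Second, members of $B$ who fail in this round are not retried, so they are never exposed on other branches.

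\textbf{Part (ii).} Work on the event $H_{ih}$. Conditional on the master seed and on reaching step $h$ with $i$'s attempt, the completion indicators of $B_h$ are independent of her own. So, conditional on $H_{ih}$, every member of $B_h$ fails with probability $\varepsilon^{|B_h|}$.

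On that sub-event no other names and no device release occur before her assessment, by hypothesis. The assessed coalition is therefore $C\cup\{i\}\notin\Pfam_i$, and she is exposed unprotected. Integrating over the seed and over histories leading to $h$ gives $\Pr(\text{unprotected}\mid X_i)\ge \Pr(H_{ih}\mid X_i)\,\varepsilon^{|B_h|}$. Here $H_{ih}\subseteq X_i$, and I would use that independence of completion draws holds conditionally on everything realized before the step.

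Additivity over pairwise disjoint events is immediate, since the unprotected events lie inside disjoint sets. The remark that $\alpha_{ih}<1$ blocks any converse needs only the observation that acceptability bounds the product, not $\varepsilon^{|B_h|}$ alone.

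\textbf{Part (iii).} This is the main obstacle. At $r=0$, the event ``some finite completion history is unsafe at an assessment'' has probability zero for each $i$ with $\Pr(X_i)>0$. Recruits never exposed contribute nothing.

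Steps are countable and each has finitely many attempts, so there are countably many finite completion histories given a seed. Each has positive conditional probability given the seed, because every completion pattern has probability at least $\min(\varepsilon,1-\varepsilon)^{k}>0$. By Fubini over the seed, the set of seeds for which some history is unsafe is null. Otherwise a positive-measure set of seeds would carry a positive-probability unsafe history, contradicting zero risk.

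For a good seed, refine each scheduled round into individual completions in any order. Part (ii) applied pathwise shows that every attempter $i$ satisfies $C\cup\{i\}\in\Pfam_i$; otherwise the all-fail branch is an unsafe history of positive probability. By monotonicity (Lemma~\ref{lem:reduction}), every intermediate set in the refinement is then safe. The fixed-seed mechanism is therefore a regret-free E1 mechanism, and Theorem~\ref{thm:necessity}(ii) applies.

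For the limit statement, acceptability gives $\Pr(\text{unprotected}\mid X_i)\le r_i^n\to0$, so each agent's mechanism-wide risk vanishes. Part (i) with vanishing $\varepsilon$-dependent tolerances shows that this risk need not be zero at any finite $n$.
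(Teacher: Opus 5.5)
Your proposal is correct and follows essentially the paper's route: the one-round call to action from $\Cs(S)$ in (i); conditioning on $H_{ih}$ and on the event that every member of $B_h$ fails in (ii); and in (iii) the Fubini argument over countably many positive-probability finite histories per seed, singleton safety, a monotone refinement of each scheduled round, and Theorem~\ref{thm:necessity}. Two small precisions: the refinement is not literally the original mechanism, so add (as the paper does) that it copies every device action and that no crossing can occur inside a public round, because a singleton-safe step from a set inside $\Cs$ stays in $\Gamma(\Cs)=\Cs$, which makes the release stage found in the refinement a stage of the original; and your illustration that the risk can stay positive lets $\varepsilon$ vanish even though the environment fixes it, whereas at fixed $\varepsilon$ you could instead let the master seed run a part-(i) call to action with small probability $\eta_n$ and a safe contract otherwise.
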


\begin{proof}[Proof of Proposition~\ref{prop:interim_necessity}]
(i) From public state $\Cs(S)$, have everyone in $B=\Cp(S)\setminus\Cs(S)$ attempt to go public in one scheduled round. Conditional on person $i$'s own attempt completing, she can be unprotected at the round-end assessment only if at least one member of $B\setminus\{i\}$ does not follow through in that round. This event has probability $1-(1-\varepsilon)^{|B|-1}\le1-(1-\varepsilon)^{|B|}\le r_i$, so she accepts. If everyone follows through, the public set is $\Cp(S)\in\Pfam_i$ for every member. That event has probability $(1-\varepsilon)^{|B|}$.

(ii) The attempted profile is fixed before the draws, so conditional on $H_{ih}$, independence gives probability $\varepsilon^{|B_h|}$ that no member of $B_h$ follows through. Since these are all the other possible new public participants before assessment and no device release intervenes, the assessed coalition is then exactly $C\cup\{i\}\notin\Pfam_i$. Moreover $H_{ih}\subseteq X_i$, so conditioning additionally on $X_i$ does not change these draws within $H_{ih}$. The harmful event therefore has conditional probability at least $\Pr(H_{ih}\mid X_i)\varepsilon^{|B_h|}=\alpha_{ih}\varepsilon^{|B_h|}$. Acceptability gives the inequality. The bounds add across disjoint exposure events at these steps. Safe exposure on other branches can make $\alpha_{ih}<1$ without changing the conditional noncompletion probability at this step.

(iii) Let $Z$ be the master seed, including participant as well as administrator randomization. Zero conditional risk implies zero unconditional probability of unprotected exposure for every recruit, including those never exposed. Since there are finitely many recruits, Fubini's theorem gives, for almost every $Z=z$, zero probability of any recruit's unprotected exposure conditional on that seed. This does not require $\Pr(Z=z)>0$. For a fixed seed, the remaining tree of finite completion histories is countable: each finite history has finitely many attempts and finitely many completion subsets. Every consistent finite prefix has strictly positive conditional probability, a finite product of factors $\varepsilon$ and $1-\varepsilon$. Thus an unsafe assessment at any such prefix would contradict conditional probability-zero harm. Equivalently, the countable union of finite-prefix failure events is null only if every such event is absent for almost every seed. This leaves one full-measure set of seeds on which all finite completion histories are safe.

Fix a seed in this set and a scheduled round reached after a finite history with public coalition $C$. For each attempted participant $i$, the outcome in which only her action completes has positive conditional probability, so $C\cup\{i\}\in\Pfam_i$. The preceding coalition is self-protecting by induction from $C_0$. Monotonicity therefore protects every member of every ordered prefix of every completed subset. Refine the round into its individual completions, retaining the original completion record and allowing the original continuation only after the whole round resolves and is assessed. This refinement preserves boundary public and received histories and copies the original device actions and subsequent decisions. E1's causal information and permission rules are unchanged, and every refined exposure is safe. Ordinary E1 events outside the round were already assessed individually. Theorem~\ref{thm:necessity} therefore applies to the refinement. A first crossing cannot occur within such a public round, since each of its safe singleton steps remains inside $\Cs$; its necessary private release is consequently also a stage of the original mechanism. This is a safety and crossing argument, not payoff equivalence for arbitrary within-round utilities, and it does not transfer positive-risk round-end bounds to the refinement.

All-path regret-freeness implies probability-zero harm, but the converse here holds only for the full-measure set of seeds just obtained. Infinite paths need not have positive probability. Their safety follows from the safety of each finite exposure prefix. Eventual completion of the release does not follow. In particular, a path on which a missing endorsement never arrives can prevent release even though free retries succeed almost surely. Finally, along $r^n\to0$, acceptability bounds each mechanism-wide conditional risk by $r_i^n$, proving convergence without requiring any positive-tolerance mechanism to eliminate risk.
\end{proof}

\subsection{Custody and Crossing Risk Bounds}\label{sec:custody_risk_supp}

For the sufficiency result, supporters sign when solicited, and the administrator continues requesting any missing endorsement until it arrives. Requests and incomplete submissions are costless, and recruitment reveals no attributable information. Independent positive completion probabilities then give eventual receipt in finite time almost surely. Retries do not overcome a persistent refusal to sign. The following extension allows private custody to leak.

Real repositories of private commitments can leak (though this threat may be minimal with modern cryptographic systems). Let $\delta$ bound the probability of an unauthorized leak outside the modeled completion process. The leak process is independent of completion draws and the mechanism's master seed. A stage is $\delta$-failure-private if it reveals commitments only through a protecting release unless a leak occurs. A leak is harmful when it first identifies someone to the opposition without the protection she requires. Let $\delta^{\mathrm h}_i$ be the part of person $i$'s total exposure risk caused by such leaks, conditional on her being exposed. This term accounts for who learns the identities: disclosure to the opposition alone need not supply the protection that publication to allies and the wider audience would supply.

\begin{proposition}[Custody and crossing risk bounds]\label{prop:robust_frontier}
\begin{enumerate}
\item[(i)] (\emph{Sufficiency.}) Let the administrator tunnel toward a self-protecting target $C$: it solicits exactly the new members $C\setminus C_0$ until it has received every one of their endorsements, and then publishes those endorsements to form $C$. On a $\delta$-failure-private stage, each $i\in C\setminus C_0$ is eventually named almost surely, through the protecting release unless a leak names her first, and she is unprotected only when a leak exposes her as part of a coalition that does not protect her. Hence her conditional risk is $\delta^{\mathrm h}_i\le\delta$, and the mechanism is acceptable exactly when $r_i\ge\delta^{\mathrm h}_i$; the common bound $r_i\ge\delta$ is sufficient.
\item[(ii)] (\emph{Necessity, with rates.}) Conversely, consider any mechanism acceptable to its participants and a recruit $i$ exposed with positive probability. Fix an exposure step and a frozen attempted profile $A$. Suppose the mechanism can first expose the not-yet-safe agent $i$ through her own unverified attempt, with $B_i=A\setminus\{i\}$ comprising all other attempted public participants and no other new names or device releases before assessment, as in Proposition~\ref{prop:interim_necessity}(ii). Let $m_i$ be the probability, conditional on $i$ being exposed, that this step is reached, no leak occurs before or during it through assessment, and her own public action completes. The independent leak process leaves the other participants' completion draws unchanged on this event. Then
\[
\delta^{\mathrm h}_i+m_i\,\varepsilon^{|B_i|}\;\le\;r_i.
\]
The inequality combines two sources of unprotected exposure: a harmful leak, and going public while relying on others who may not follow through. Both count against the same tolerance for unprotected exposure, $r_i$.
\item[(iii)] (\emph{No unprotected exposure.}) Suppose $\Cs(S)\subsetneq\Cp(S)$. At $r=0$, harmful leaks have probability zero. For mechanisms whose public-coalition transitions still obey E1's action, permission, and receipt rules, with scheduled rounds as above, almost surely any first crossing of $\Cs$ contains Theorem~\ref{thm:necessity}'s private conditional-release stage. Scheduled public rounds cannot supply that crossing, by Proposition~\ref{prop:interim_necessity}(iii). The conclusion holds for almost every seed, and it does not cover an unauthorized leak that itself publishes a protective coalition, which lies outside E1's permission rules.
\end{enumerate}
\end{proposition}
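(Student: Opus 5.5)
I would prove the three parts separately, reusing Theorem~\ref{thm:tunnel} for sufficiency, Proposition~\ref{prop:interim_necessity}(ii) for the turnout term in necessity, and Proposition~\ref{prop:interim_necessity}(iii) together with Theorem~\ref{thm:necessity} for the zero-tolerance case.

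For (i), I would first establish eventual receipt. Each solicited member of $C\setminus C_0$ signs, each submission completes independently with probability $1-\varepsilon>0$, and retries are costless. A Borel--Cantelli (geometric waiting time) argument then shows that all finitely many endorsements arrive in finite time almost surely. On the no-leak event, $\delta$-failure-privacy implies that nothing is revealed before release. Release is then a single all-or-nothing event (Assumption~\ref{ass:atomic}) publishing the self-protecting $C$, so by monotonicity every newly named member is protected. The only remaining way $i$ can be exposed unprotected is a leak that names her first in a coalition outside $\Pfam_i$. By definition this is exactly the harmful-leak event, so her conditional risk equals $\delta^{\mathrm h}_i$. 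Since the leak probability is at most $\delta$, we get $\delta^{\mathrm h}_i\le\delta$, and acceptability reduces to $r_i\ge\delta^{\mathrm h}_i$.

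For (ii), I would decompose the event ``$i$ exposed unprotected'' into two disjoint pieces.
\begin{enumerate}
\item[(a)] A harmful leak first identifies her. This piece has conditional probability $\delta^{\mathrm h}_i$ by definition.
\item[(b)] The designated step is reached with no leak through assessment, her own attempt completes, and every member of $B_i$ fails to complete.
\end{enumerate}
Piece (b) is disjoint from (a) because no leak occurs there. On (b) the assessed coalition is exactly $C\cup\{i\}\notin\Pfam_i$, since no device release or other name can intervene. Because the leak process is independent of the completion draws and of the master seed, conditioning on the no-leak, reached, own-completion event leaves the $|B_i|$ other draws i.i.d. with failure probability $\varepsilon$, exactly as in the proof of Proposition~\ref{prop:interim_necessity}(ii). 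This gives probability at least $m_i\varepsilon^{|B_i|}$ for (b). Adding the two pieces and applying acceptability yields the inequality. The delicate point is conditioning on $X_i$: I would note that the event defining $m_i$ is contained in $X_i$, so the independence argument goes through conditional on it.

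For (iii), setting $r_i=0$ in (ii) forces $\delta^{\mathrm h}_i=0$ for every recruit exposed with positive probability. For recruits never exposed, harmful leaks have zero probability trivially, so harmful leaks are null overall. Discarding this null set, I would repeat the Fubini and countable-prefix argument of Proposition~\ref{prop:interim_necessity}(iii) to obtain a full-measure set of seeds on which every finite completion history is safe. On that set, each scheduled round refines into safe individual E1 completions that stay inside $\Cs$, so no scheduled round can be the first crossing. The first crossing is therefore an E1 device action, and Theorem~\ref{thm:necessity}(ii) supplies the private conditional-release stage.

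The main obstacle is making the ``almost every seed'' step rigorous in the presence of leaks. I would handle it by folding the leak process into an enlarged independent seed and requiring that the no-harmful-leak event hold for almost every such seed. I would also state explicitly that a leak which itself publishes a protective coalition lies outside E1's permission rule, so it is excluded from the conclusion rather than handled by the argument.
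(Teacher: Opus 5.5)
Your proposal is correct and follows the paper's proof essentially step for step. For (i) you use eventual receipt plus the definition of a harmful leak. For (ii) you use the disjoint split into harmful first identification by a leak and the leak-free lone-completion event, relying on $H_i\subseteq X_i$ and on the leak process being independent of completion draws. For (iii) you force both terms to vanish at $r=0$, fold the leak process into the master seed, and invoke Proposition~\ref{prop:interim_necessity}(iii) and Theorem~\ref{thm:necessity}.

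The one step to state outright is in (i): your almost-sure receipt argument is what gives $\Pr(X_i)=1$. That is what turns the unconditional leak bound $\delta$ into the conditional bound $\delta^{\mathrm h}_i\le\delta$.
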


A small public leak can expose too few people to protect one another, whereas a larger public disclosure can provide protection under monotonicity. Stealing the whole list for the opposition need not protect anyone, so the number of stolen names alone does not determine $\delta^{\mathrm h}_i$. Custody safeguards reduce this risk; larger, reliable answering groups with positive slack reduce follow-through risk. The inequality records how these two sources of risk use the same individual tolerance, without solving the cost of reducing either one.

\begin{proof}[Proof of Proposition~\ref{prop:robust_frontier}]
(i) On the no-leak event, the mechanism eventually publishes the endorsements of $C\setminus C_0$, forming the target $C$, which protects every member. The vanguard is protected throughout. Each new member $i\in C\setminus C_0$ is exposed almost surely: either a leak exposes her too early or the protecting release exposes her as intended. Her conditional probability of unprotected exposure is therefore exactly the harmful-leak contribution $\delta^{\mathrm h}_i$. Because leak probability is at most $\delta$, $\delta^{\mathrm h}_i\le\delta$. This proves both criteria.

(ii) Condition on $X_i=\{i\text{ is exposed}\}$. Let $H_i$ be the event defining $m_i$: the step is reached, it remains leak-free through assessment, and her own public action completes. The group $B_i$ is fixed before the independent draws, the leak process is independent of those draws, and $H_i\subseteq X_i$. Thus, on an $m_i\varepsilon^{|B_i|}$ share of $X_i$, agent $i$ becomes public while no member of $B_i$ follows through. These are all the other attempted public participants, and no device release precedes assessment, so the assessed coalition is exactly $C\cup\{i\}\notin\Pfam_i$. This event is disjoint from harmful first identification by a leak, including a leak during the round; that route contributes $\delta^{\mathrm h}_i$. Hence
\[
\Pr(i\text{ unprotected}\mid X_i)\ \ge\ \delta^{\mathrm h}_i+m_i\varepsilon^{|B_i|}.
\]
Acceptability now gives the displayed inequality. The same reasoning allows us to add the contributions from disjoint exposure events at these steps.

(iii) When $r=0$, every nonnegative risk term in part (ii) must vanish: harmful first identification by a leak and the counted unverified singleton exposure have probability zero. For the E1 public-coalition mechanisms specified in part (iii), include the independent leak process in the master seed. Proposition~\ref{prop:interim_necessity}(iii) then supplies the almost-every-seed finite-history argument and the refinement of scheduled rounds, requiring a private conditional-release stage at any first crossing almost surely. Probability-zero harmful leakage alone does not establish this conclusion for an external disclosure that bypasses E1's permission or receipt rules. Nor does the argument constrain null seeds or guarantee completion on every retry path. Along any sequence $r^n\to0$, the inequality implies $\delta_{i,n}^{\mathrm h}\to0$ and $m_{i,n}\varepsilon^{|B_i|}\to0$ for every fixed exposure step. At positive tolerance, however, neither term need equal zero.
\end{proof}

\section{Confinement and Verified Attainment}\label{sec:dominance_supp}

How large a coalition can an acceptable mechanism expose? Even when participants accept some risk, requiring each exposed person to have a positive chance of protection puts a common bound on the people who can become public. A contract can attain that bound when it can collect the necessary endorsements.

Use the finite population, monotone protection, and self-protecting vanguard of the article, with the conditional-risk and exposure-assessment conventions of Section~\ref{sec:acceptable_risk_supp}. Let $M$ be a mechanism acceptable at tolerances $r_i<1$ for every recruit. Write $E(M)$ for the vanguard together with all recruits whom $M$ exposes with positive probability, and $C_\infty$ for its final public coalition. Its \emph{final safe mass} is $m(\{i\in C_\infty:C_\infty\in\Pfam_i\})$, the mass of public participants whom the final coalition protects.

For attainment, additionally assume leak-free private custody, all-or-nothing release, and costless, invisible solicitation. Every supporter attempts to sign when solicited and keeps trying if her endorsement does not arrive. Submissions complete independently with probability $1-\varepsilon\in(0,1)$ on each attempt. The administrator publishes once collection is complete. These collection assumptions do not overcome persistent refusal to sign.

\begin{theorem}[Confinement and verified attainment]\label{thm:dominance}
Every mechanism $M$ acceptable at tolerances $r_i<1$ for every recruit has a self-protecting exposed set $E(M)$, so $E(M)\subseteq\Cp(S)$ and its final safe mass is at most $m(\Cp(S))$ almost surely. Under the additional collection assumptions, the contract publishes $D(S)=\Cp(S)\setminus C_0$ in finite time almost surely, producing $\Cp(S)$ without unprotected exposure. It therefore weakly lowers each person's exposure risk and weakly raises final safe mass relative to any such mechanism.
\end{theorem}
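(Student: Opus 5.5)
The argument has two halves: confinement of $E(M)$, which uses only acceptability with $r_i<1$, and attainment by the contract, which uses the collection assumptions. For confinement I will show directly that $E(M)$ is self-protecting. Theorem~\ref{thm:largest} then gives $E(M)\subseteq\Cp(S)$.

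Fix a recruit $i\in E(M)\setminus C_0$, so that $\Pr(X_i)>0$. Acceptability with $r_i<1$ gives $\Pr(i\text{ exposed protected}\mid X_i)\ge 1-r_i>0$. Hence there is a positive-probability event on which $i$ is first identified, at her exposure step, in an assessed coalition $C$ with $C\in\Pfam_i$. On that event every member of $C$ has been exposed, so each such member lies in $E(M)$, and therefore $C\subseteq E(M)$. Assumption~\ref{ass:M} then gives $E(M)\in\Pfam_i$. Vanguard members satisfy $C_0\subseteq E(M)$, so the same monotonicity argument covers them. Thus $E(M)$ protects every member and is self-protecting, and Theorem~\ref{thm:largest} gives $E(M)\subseteq\Cp(S)$.

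The final safe mass is at most $m(C_\infty)$. The set $C_\infty$ is contained in $E(M)$ up to a null event, because only countably many finite exposure histories can add names and each name added with zero probability contributes a null event. So $m(C_\infty)\le m(\Cp(S))$ almost surely. The main obstacle is the measure-theoretic bookkeeping in this step. First, the assessed coalition in a scheduled round must consist of people who are genuinely exposed with positive probability, and I will obtain this by conditioning on the positive-probability protected event. Second, I must handle the exposure of an agent through a leak or device release rather than through her own action. I will treat both uniformly through the assessment convention of Section~\ref{sec:acceptable_risk_supp}: whatever the route, protection is assessed against the coalition public at that assessment, and that coalition is always a subset of the exposed set.

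For attainment, I will invoke the collection assumptions. Each endorsement in $\Cp(S)\setminus C_0$ arrives after a geometric number of independent attempts, and these counts are finite almost surely. Since there are finitely many recruits, collection completes in finite time almost surely. Leak-free custody means nothing is revealed beforehand. All-or-nothing release (Assumption~\ref{ass:atomic}) together with Theorem~\ref{thm:tunnel} publishes $D(S)$ and yields $\Cp(S)$ with zero unprotected exposure. The comparison follows immediately. Each person's risk under the contract is $0\le$ her risk under $M$. The contract's final safe mass equals $m(\Cp(S))$, which bounds the final safe mass of $M$ by the confinement step.
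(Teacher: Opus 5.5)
Your proposal is correct and follows the paper's proof essentially step for step: a positive-probability protecting assessed coalition plus monotonicity makes $E(M)$ self-protecting, Theorem~\ref{thm:largest} confines it to $\Cp(S)$, and geometric collection with all-or-nothing release gives attainment and the comparison. The step to tighten is ``each member of $C$ lies in $E(M)$'': conditioning alone does not give it, because it needs a fixed coalition realized with positive probability. The paper gets that coalition because the public set takes only finitely many values; equivalently, you can discard the finite union of null events $X_j$ with $\Pr(X_j)=0$. The same finiteness of recruits, rather than countability of histories, is what gives $C_\infty\subseteq E(M)$ almost surely.
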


\begin{proof}[Proof of Theorem~\ref{thm:dominance}]
\emph{Confinement.} Fix $i\in E(M)\setminus C_0$. Acceptability and $r_i<1$ give a positive conditional probability that $i$ is protected at her first-exposure assessment. This is the end of her scheduled round or her ordinary E1 exposure event. Since the public set takes finitely many values, some full assessed coalition $X_i$ containing $i$ occurs with positive probability and satisfies $X_i\in\Pfam_i$. Every recruit in $X_i$ is public by that assessment, so each is exposed with positive probability and $X_i\subseteq E(M)$. Monotone protection therefore gives $E(M)\in\Pfam_i$. For $i\in C_0$, the starting condition and $C_0\subseteq E(M)$ give the same conclusion. Thus $E(M)$ is self-protecting, and Theorem~\ref{thm:largest} implies $E(M)\subseteq\Cp(S)$. Because there are finitely many recruits, almost every realized public coalition is a subset of $E(M)$. Its final safe mass is consequently at most $m(E(M))\le m(\Cp(S))$. This argument uses protection at first exposure, not protection acquired in a later round.

\emph{Attainment.} The contract solicits all supporters and repeats each request until the endorsement arrives. A given endorsement remains missing after $k$ attempts with probability $\varepsilon^k\to0$. Since $S$ is finite, all endorsements arrive in finite time almost surely. Private custody keeps the public coalition at the self-protecting vanguard until publication. The single release then forms $\Cp(S)$, which protects every member by Theorem~\ref{thm:largest}. Every person's unprotected-exposure risk is zero, and final safe mass equals $m(\Cp(S))$ almost surely. Together with confinement, this proves the comparison.
\end{proof}

The bound depends on each recruit requiring some chance of protection. A person willing to accept certain unprotected exposure, $r_i\ge1$, can be a risky first mover outside $\Cp$. The attainment claim requires continued signing attempts and free, private retries. It holds almost surely, not on a path where an endorsement never arrives. The comparison ranks exposure risk and final safe mass, rather than delay, cost, or welfare.

\section{Public Participation under Regret-Free Safety}\label{sec:selection_supp}

This section asks whether strategic public participation reaches the cascade coalition when participants choose only actions that guarantee protection. It states and proves the public-selection result summarized in the article's Section~\ref{sec:scope_axioms}. The signing result and its proof are in the article.

The initial public state is $C_0$. At each time point, each supporter $i\in S$ whose statement is not yet public chooses whether to try to publish it under her name. Ordinary practical obstacles may keep that choice from producing a public statement, so $z_{it}\in\{0,1\}$ records whether person $i$ actually becomes public at date $t$. Conditional on choosing to attempt publication, she receives a completion draw with $\Pr(z_{it}=1)=p_i\in(0,1)$. Completion is independent across people and dates and does not depend on the public history. Because the supporter set $S$ is finite, every attempt to go public completes with probability at least $p=\min_{i\in S}p_i>0$ when $S$ is nonempty. With $S=\varnothing$, the public coalition remains $C_0=\Cs(S)$. An incomplete attempt costs nothing, is unobserved, and does not change the public coalition. The possibility that others do not follow through makes a public attempt risky, as in Section~\ref{sec:res}, while independent completion lets repeated safe attempts reach the cascade coalition almost surely, without assuming that they do. The attempts are unverified before exposure because a person cannot confirm that anyone else's statement has become public before her own does.

When her attempt completes, person $i$ receives a continuing payoff $u_i(C)$ that weakly rises as the coalition grows. The payoff is nonnegative exactly when coalition $C$ protects her. In the fight-or-fold interpretation, this payoff averages over the opposition's type and the outcome of retaliation. The comparisons below therefore concern each possible pattern of public participation, with payoffs evaluated before retaliation is resolved. A person whose statement remains unpublished receives zero. Total payoffs are expected discounted sums of these per-period payoffs, with discount factor $\beta\in(0,1)$.

A Markov strategy depends only on the current public coalition. It is \emph{robustly admissible} if it keeps the person safe under every pattern of completed and incomplete attempts that the model allows. A robust Markov-perfect equilibrium requires each person to choose her best strategy among these safe strategies. \emph{Strict safe surplus}, $u_i(C)>0$ whenever $C\in\Pfam_i$, means that safe expression is strictly better than silence. It therefore removes any reason to wait once speaking is safe.

\begin{theorem}[Sequential safe-expression selection]\label{thm:selection}
Under the follow-through process above, Assumptions~\ref{ass:M} and~\ref{ass:U}, and strict safe surplus, every robust Markov-perfect equilibrium has the same action rule. At each public state $C$, exactly the agents in $T(C)\setminus C$ try to go public. The public set therefore follows the greedy path, $C_t=T(C_{t-1})$, apart from incomplete attempts and retries. No equilibrium path reaches outside $\Cs$. Nor can an equilibrium \emph{strategically} stall below it: at every public state, each agent who is safe to join tries to go public immediately rather than waiting. Because attempts complete with probability bounded below by the $p>0$ above, the public coalition reaches $\Cs$ almost surely in finite time.
\end{theorem}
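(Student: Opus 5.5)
The plan has three steps. First, characterize the robustly admissible Markov strategies. Second, use strict safe surplus to pin down the equilibrium action within that set. Third, invoke Lemma~\ref{lem:mono} and Theorem~\ref{thm:cascade} for the path properties.

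\emph{Step 1 (admissibility).} At any public state $C$, an attempt by $i\notin C$ is unverified, because no other public statement can be confirmed before her own completes. Assumption~\ref{ass:U} admits the realization in which only her attempt completes, with every other current attempt incomplete. A robustly admissible strategy may therefore attempt at $C$ only if $C\cup\{i\}\in\Pfam_i$, that is, only if $i\in T(C)\setminus C$. Conversely, an attempt by such an $i$ is safe under every completion pattern: the realized coalition contains $C\cup\{i\}$, and Assumption~\ref{ass:M} keeps her protected then and at every later state. Incumbents stay protected for the same reason. This is the Markov-strategy analogue of Lemma~\ref{lem:reduction}, and I would cite that argument directly.

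\emph{Step 2 (no waiting).} Fix an agent $i\in T(C)\setminus C$ and the others' robustly admissible Markov strategies. Compare attempting now with any admissible deviation that abstains at $C$. Attempting yields, on completion, the per-period payoff $u_i(C_t)$ from every later public state $C_t\supseteq C\cup\{i\}$. That payoff is strictly positive by strict safe surplus and weakly nondecreasing as the coalition grows. Abstaining yields zero this period, and the state then evolves only through others' safe entries, so the best it can do is postpone the same stream. The comparison I would make is a pathwise coupling. Hold the others' completion draws fixed, and let $i$'s eventual completion date under the deviation be no earlier than under attempting-now-and-retrying. Attempting gains the strictly positive terms $u_i(\cdot)>0$ over the intervening periods, and later terms are weakly larger because public states are nested and $u_i$ is monotone. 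Since $p_i\in(0,1)$ and $\beta<1$, the expected discounted gain is strictly positive. This rules out mixing as well as pure delay at $C$. Agents outside $T(C)$ cannot attempt by Step 1. Hence the action rule at every state is exactly ``those in $T(C)\setminus C$ attempt.''

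\emph{Step 3 (path).} With the rule fixed, $C_t\subseteq T(C_{t-1})$ along every realization, so Theorem~\ref{thm:cascade}(i) gives $C_t\subseteq\Cs$. For reaching $\Cs$, suppose the current state is $C\subsetneq\Cs$. Then $T(C)\supsetneq C$, because $\Cs$ is the least fixed point reached by iterating $T$ from $C_0$ and $C\supseteq C_0$ is not closed under $T$. By Lemma~\ref{lem:mono}, every agent safe at $C$ remains safe and keeps attempting at later states. Each period therefore sees some growth with probability at least $p>0$. There are at most $|S|$ growth steps, so a geometric bound shows $\Cs$ is reached in finite time almost surely.

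The main obstacle is Step 2. The coupling must compare payoff streams whose continuation states differ, because $i$'s own entry changes others' safe sets. I would handle this by noting that her entry only enlarges $T$ of later states under Lemma~\ref{lem:mono}, so her continuation states under attempting dominate, in the inclusion order, those under waiting. Monotonicity of $u_i$ then makes the comparison go through.
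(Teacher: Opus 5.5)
Your Steps 1 and 3 match the paper's argument, but Step 2 has a genuine gap. You fix \emph{arbitrary} robustly admissible Markov strategies for the others. You then argue that $i$'s entry makes her continuation states dominate those under waiting, because Lemma~\ref{lem:mono} enlarges $T$ at later states. That lemma only says that an agent who is \emph{safe} to join at the waiting-copy state is also safe at the larger attempting-copy state. It does not say she \emph{attempts} there. Robust admissibility always permits abstention, so a Markov strategy need not be monotone in the state.

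Against such profiles your comparison is false. Let $S\setminus C=\{i,j\}$ with $i,j\in T(C)\setminus C$, and let $j$ attempt at $C$ but abstain at every state containing $i$. Take $u_i(C\cup\{i\})=\eta>0$ and $u_i(C\cup\{i,j\})=M>\eta$. Attempting at $C$ carries a positive probability that $i$ completes before $j$ and then stays at $C\cup\{i\}$ forever. Waiting lets $j$ enter first and $i$ follow into $C\cup\{i,j\}$. For $\beta$ close to one, waiting is strictly better. So your ``postpone the same stream'' step needs the others' \emph{equilibrium} behavior at larger states, which is part of what the theorem asserts.

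The paper supplies this by backward induction on $|S\setminus C|$, which is well founded because public states only grow and $S$ is finite. The induction hypothesis is that play is greedy at every strict superset of $C$. Under it, consider any $j\neq i$ who attempts in the waiting copy at state $D^W_t$.
\begin{itemize}
\item If $D^W_t=C$, robust admissibility puts $j\in T(C)\setminus C$.
\item If $D^W_t\supsetneq C$, greedy play puts $j\in T(D^W_t)\setminus D^W_t$.
\end{itemize}
Either way $j\in T(D^W_t)$. Lemma~\ref{lem:mono} then gives $j\in T(D^A_t)$. Because $D^A_t\supsetneq C$, greedy play there makes her actually attempt unless she is already public. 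This is what yields the realization-by-realization inclusion $D^A_t\supseteq D^W_t\cup\{i\}$ of Lemma~\ref{lem:coupling}.

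After that inclusion, your payoff comparison goes through: nonnegative versus zero until the waiting path names $i$, monotone $u_i$ afterward, and $u_i(C\cup\{i\})>0$ on immediate completion with probability $p_i$. The paper also isolates the only event that matters. If $i$'s attempt fails and someone else completes, the two copies coincide from then on, because her strategies differ only at $C$. With the induction added to Step 2, the conclusion at the inductive step gives the greedy rule at $C$ and closes the induction. Your Step 3 then applies unchanged.
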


Theorem~\ref{thm:cascade} established that $\Cs$ is the largest coalition a regret-free public process can reach, and the selection theorem shows that the process actually reaches it, with no strategic delay. Every person in $T(C)\setminus C$ remains safe even if hers is the only attempt to go public that completes, so strict surplus makes immediate action better than waiting.

The conclusion depends on both the payoff and the safety standard. If safe expression produces exactly zero surplus, waiting may also be optimal, and if people trust promises that others will act later, that trust can support coordinated movement beyond $\Cs$. This \emph{public participation game} assumes that strategies depend only on the current public coalition, that promises of future action carry no weight, silent people do not benefit from others' expression, and incomplete attempts are costless and unobserved. It also gives each person only two possible public statuses: either her statement has been published under her name, or it has not. The theorem therefore selects among robustly admissible strategies, which impose the worst-case safety standard instead of deriving it from expected-utility maximization. A person who assigns small probability to being stranded and values expression highly enough might rationally attempt to cross $\Cs$, which is the positive-tolerance margin of Section~\ref{sec:positive_tolerance_frontier}. Because silent people gain nothing from others' expression here, free riding on a coalition's success lies outside this game. If incomplete attempts become visible, they can change protection requirements and later behavior; the theorem does not cover that environment.

The proof compares acting now with waiting by running both choices on the same sequence of completion draws. The next lemma shows that the path with earlier participation always contains the waiting path.

\begin{lemma}[Path coupling]\label{lem:coupling}
Under the state-independent follow-through process of this section, fix a public state $C$, an agent $i\in T(C)\setminus C$, and a Markov profile that is greedy at every public state strictly containing $C$ and arbitrary but robustly admissible at $C$. Run two copies of the process from $C$ on identical completion draws at every date: in copy $A$, agent $i$'s date-$0$ attempt completes; in copy $W$, agent $i$ waits whenever the state is $C$. Then the public sets satisfy $D^A_t\supseteq D^W_t\cup\{i\}$ for every $t\ge0$, realization by realization.
\end{lemma}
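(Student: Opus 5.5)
The plan is induction on $t$, coupling the two copies on identical completion draws and using only monotonicity of $T$ (Lemma~\ref{lem:mono}) and Assumption~\ref{ass:M}. The base case $t=0$ is immediate. In copy $A$, agent $i$'s date-$0$ attempt completes, so $D^A_0\ni i$. The other date-$0$ attempts at $C$ follow the same robustly admissible rule in both copies and face the same draws. Hence $D^A_0=D^W_0\cup\{i\}$ up to $i$'s own entry. If the rule at $C$ had $i$ attempting anyway, copy $W$ simply withholds that attempt, so still $D^A_0\supseteq D^W_0\cup\{i\}$.

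For the inductive step, assume $D^A_t\supseteq D^W_t\cup\{i\}$. I split on copy $W$'s state at date $t$.

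\emph{Case 1: $D^W_t=C$.} Copy $W$ uses the date-$C$ rule with $i$ waiting, so its date-$(t+1)$ entrants are a subset of the attempters of the $C$-rule other than $i$, namely those whose draws complete. Every such attempter $j$ satisfies $C\cup\{j\}\in\Pfam_j$ by robust admissibility and Lemma~\ref{lem:reduction}, so $j\in T(C)$. Copy $A$ is at a state $D^A_t\supsetneq C$, because it contains $i\notin C$, so it plays greedy. By Lemma~\ref{lem:mono}, $T(D^A_t)\supseteq T(C)$, and therefore $j$ attempts in $A$ unless $j$ is already public there. With the same draw for $j$, every completed entrant of $W$ is in $D^A_{t+1}$.

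\emph{Case 2: $D^W_t\supsetneq C$.} Both copies play greedy. Any $j\in D^W_{t+1}\setminus D^W_t$ attempted because $j\in T(D^W_t)$. By monotonicity, $j\in T(D^A_t)$, so $j$ either is already in $D^A_t$ or attempts in $A$ with the same completing draw.

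In both cases $D^A_{t+1}\supseteq D^W_{t+1}$, and $i$ stays in $D^A_{t+1}$ by irreversibility. This closes the induction.

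The main obstacle is making ``identical completion draws'' well defined when the two copies attempt different sets of people at the same date. My first choice is to index draws by agent and date: fix an i.i.d.\ array $z_{jt}$, and let the realized completion for $j$ at date $t$ be $z_{jt}$ whenever $j$ attempts, in either copy. Under the state-independent follow-through process this gives the correct law in each copy separately, and it is exactly what Cases 1 and 2 use when they say ``the same draw for $j$.''

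A secondary subtlety is that copy $A$ never returns to state $C$ after date $0$, since it always contains $i$. So only copy $W$ ever invokes the arbitrary rule at $C$, which Case 1 handles. Copy $A$ is greedy from date $1$ onward by hypothesis.
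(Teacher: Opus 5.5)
Your proposal is correct and follows essentially the same route as the paper's proof. Both argue by induction on dates under shared completion draws, and both split into the case $D^W_t=C$, where robust admissibility puts each attempter in $T(C)$, and the case $D^W_t\supsetneq C$, where play is greedy; both then use monotonicity of $T$ and the fact that copy $A$ always lies strictly above $C$, so play there is greedy, and your agent--date indexing of the draws makes explicit what the paper states in one line.
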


\begin{proof}[Proof of Lemma~\ref{lem:coupling}]
Measure each state at the end of a date. At date~$0$, both copies begin at $C$. Every person other than $i$ takes the same action in both copies and receives the same completion draw. Let $R_0$ be the common set of those whose attempts complete. Then $D^A_0=C\cup\{i\}\cup R_0\supseteq C\cup R_0\cup\{i\}=D^W_0\cup\{i\}$.

Now suppose $D^A_t\supseteq D^W_t\cup\{i\}$. Consider any $j\ne i$ who attempts in copy $W$ at state $D^W_t$ and is not already public in copy $A$. If $D^W_t=C$, robust admissibility at $C$ requires $j\in T(C)\setminus C$. If $D^W_t\supsetneq C$, greedy play requires $j\in T(D^W_t)\setminus D^W_t$. In either case, $j\in T(D^W_t)$. Since $D^W_t\subseteq D^A_t$, Lemma~\ref{lem:mono} gives $j\in T(D^A_t)\setminus D^A_t$. Moreover, $D^A_t\supseteq C\cup\{i\}\supsetneq C$, so play in copy $A$ is greedy and $j$ attempts there too. The two copies use the same completion draw for each person and date. Thus anyone who becomes public in copy $W$ at date $t+1$ either becomes public in copy $A$ at the same date or was already public there. The inclusion is preserved by induction.
\end{proof}

\begin{proof}[Proof of Theorem~\ref{thm:selection}]
Fix a robust Markov-perfect equilibrium, a public state $C$, and a person $i\in S\setminus C$. First suppose $i\notin T(C)$. Then $C\cup\{i\}\notin\Pfam_i$. Public attempts are unverified, so Assumption~\ref{ass:U} allows only $i$'s attempt to complete. That realization exposes her in a coalition that does not protect her. Attempting is therefore not robustly admissible, and $i$ cannot attempt in equilibrium.

Now suppose $i\in T(C)\setminus C$, so $C\cup\{i\}\in\Pfam_i$. By Lemma~\ref{lem:reduction}, every realization $R$ in which her attempt completes satisfies $C\cup R\supseteq C\cup\{i\}$. Assumption~\ref{ass:M} gives $C\cup R\in\Pfam_i$. Attempting is therefore safe in every realization and is robustly admissible. Waiting pays zero in the current period.

Proceed by backward induction on $|S\setminus C|$. Assume that play is greedy at every strict superset of $C$. Compare attempting with waiting under identical completion draws at every date. If $i$'s attempt does not complete and no one else follows through, the state remains $C$ and the same comparison begins again. If someone else follows through while $i$ does not, the two paths coincide from then on: $i$'s strategies differ only at $C$, and public states never shrink. It remains to compare the case in which $i$ follows through with the case in which she waits.

After $i$ follows through, Lemma~\ref{lem:coupling} gives $D'_t\supseteq D_t\cup\{i\}$ at every later date. The lemma applies because the induction hypothesis supplies greedy play at every strict superset of $C$. When the waiting path eventually names $i$ in $D_t\cup\{i\}$, the completion path pays $u_i(D'_t)\ge u_i(D_t\cup\{i\})$ by monotonicity of $u_i$. Before that date, the completion path pays $u_i(\cdot)\ge0$, while waiting pays zero. Assumption~\ref{ass:M} also preserves safety along the completion path because $C\cup\{i\}\in\Pfam_i$. Completion is therefore weakly better in every realization of public participation, with payoffs evaluated as above. With probability $p_i>0$, an attempt completes immediately and produces a payoff of at least $u_i(C\cup\{i\})>0$. Strict safe surplus therefore makes attempting strictly optimal in expectation.

At state $C$, the unique equilibrium action therefore has exactly the people in $T(C)\setminus C$ attempt. It follows that $C_t\subseteq T(C_{t-1})$. Starting from $C_0$, Lemma~\ref{lem:mono} and induction keep the entire path inside $\Cs$. The path cannot stop at $C_0\subseteq C\subsetneq\Cs$. If $T(C)=C$, monotonicity of $T$ and $C_0\subseteq C$ would imply $T^n(C_0)\subseteq C$ for every $n$, and hence $\Cs\subseteq C$, a contradiction. Thus $T(C)\setminus C$ is nonempty at every such proper subset of $\Cs$, and those people attempt. Repeated attempts complete independently with probability at least $p>0$. Every recruit in $\Cs\setminus C_0$ is therefore named in finite time almost surely, and the public set reaches $\Cs$ almost surely.
\end{proof}

The article's Proposition~\ref{prop:fullsign} separately treats the contract's one-shot signing game under the costless signing conditions. Under strict safe surplus, its dominance argument selects signing by every recruit in $\Cp(S)\setminus C_0$, without relying on the dynamic public game's completion process. The two-person example in Section~\ref{sec:scope_axioms} shows why an unreimbursed signing cost can make the all-abstain equilibrium strict. Proposition~\ref{prop:robust_frontier} bounds exposure risk from leakage. It does not select participation under that risk.

\part*{\normalfont\large\bfseries II. Robustness and Empirical Interpretation}
\addcontentsline{toc}{part}{II. Robustness and Empirical Interpretation}

The preceding results leave practical questions about what signers know, what an administrator can safeguard, and how the opposition chooses its targets. This part considers how people report their protection requirements, what threatens private custody, and how selective retaliation affects the interpretation of public expression. The experimental evidence supports the behavioral premises of the comparison and does not test its full institutional characterization.

\section{Reporting Protection Requirements}\label{sec:reporting_requirements}

The rule that publishes the largest protective coalition needs a protection requirement for each signer. This section asks what happens when signers supply those requirements themselves and when an uncertain signer reports cautiously.

Let each signer $i\in V\subseteq S$ report a requirement $\hat\rho_i$ with her signature, and let the administrator apply the rule to the reports: it selects the largest full coalition $C_0\subseteq C\subseteq C_0\cup V$ such that $m(C)\ge\hat\rho_j$ for every $j\in C\setminus C_0$, then releases $C\setminus C_0$. The vanguard's fixed true requirements satisfy $\rho_j\le b$ and require no reports. To evaluate that release, use the same payoff convention as in the article's signing game in Section~\ref{sec:scope_axioms}: a newly published person earns $u_i(C)$, which is nonnegative exactly when $m(C)\ge\rho_i$, and an unpublished recruit earns zero. True requirements therefore determine the payoff $u_i(C)$ even when reports determine publication. For part~(i), additionally assume that $u_i(C)$ is nondecreasing as the coalition expands. Part~(i) also uses the signing game's strict safe surplus condition: $u_i(C)>0$ whenever $C$ protects her.

\begin{proposition}[Reporting requirements: truthful reports and safe errors]\label{prop:report}
In the scalar protection case under Assumptions~\ref{ass:M} and~\ref{ass:atomic}, with every supporter signing and the others' reports held fixed:
\begin{enumerate}
\item[(i)] (\emph{Truthful reporting.}) For a signer who knows $\rho_i$, reporting $\hat\rho_i=\rho_i$ is weakly better than any other report against every profile of others' reports, and under strict safe surplus it is strictly better than some other report against some profile.
\item[(ii)] (\emph{Safe errors.}) Any report $\hat\rho_i\ge\rho_i$ guarantees that $i$ is never published unprotected, the full public coalition and new release are weakly decreasing in $\hat\rho_i$, and only a report below $\rho_i$ can publish her unprotected.
\item[(iii)] (\emph{Class menus.}) If the architect announces a menu of classes whose membership is verifiable or assigned by the architect, each with a requirement at least as large as the true requirement of any member of that class, then a signer's class stands in for a report of the kind in part (ii). If signers choose their class freely, part (ii) guarantees protection whenever the chosen class requirement is at least their true requirement. The administrator can also round a submitted requirement up to an available class requirement, preserving this guarantee for truthful or conservative submissions. Part (i)'s truth-telling conclusion does not automatically extend to a rounded menu.
\end{enumerate}
\end{proposition}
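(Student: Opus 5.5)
The plan is to reduce everything to a monotonicity property of the reported release rule in the scalar case. Define, for a report profile $\hat\rho$, the released full coalition $C(\hat\rho)$ as the largest $C$ with $C_0\subseteq C\subseteq C_0\cup V$ and $m(C)\ge\hat\rho_j$ for all $j\in C\setminus C_0$. First I verify that this is well defined. The reported families $\{C\ni j: m(C)\ge\hat\rho_j\}$ are monotone, so they satisfy Assumption~\ref{ass:M}, and Theorem~\ref{thm:largest}, applied with the reported families in place of the true ones, gives a unique largest reported-self-protecting coalition. Next I show that $C(\hat\rho)$ is weakly decreasing in $\hat\rho_i$ with the other reports fixed. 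Raising $\hat\rho_i$ shrinks $i$'s reported family, so every coalition that is reported-self-protecting under the higher report remains so under the lower one. Maximality then gives the inclusion, and the new release $C\setminus C_0$ inherits it.

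For part~(ii), suppose $\hat\rho_i\ge\rho_i$ and $i\in C(\hat\rho)$. Then $m(C(\hat\rho))\ge\hat\rho_i\ge\rho_i$, so the published coalition protects her. Combined with the monotonicity already shown, this gives the first two claims. For the third, I exhibit an underreport $\hat\rho_i<\rho_i$ together with other reports under which she is published at mass in $[\hat\rho_i,\rho_i)$. A concrete instance is an empty vanguard, $V=\{i\}$, and $\hat\rho_i\le a_i<\rho_i$. The statement should be read as ``only such reports can,'' so this existence example suffices.

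For part~(i), fix the others' reports and compare $\hat\rho_i=\rho_i$ with any alternative. I split into two cases.
\begin{itemize}
\item If $\hat\rho_i>\rho_i$: either she is unpublished under $\hat\rho_i$ and earns $0$, which is at most her payoff under truth by part~(ii); or she is published under both reports. In the latter case monotonicity gives $C(\hat\rho)\subseteq C(\rho_i,\hat\rho_{-i})$, and monotone $u_i$ gives the weak ranking. I still need the case where she is published under $\hat\rho_i$ but not under truth, and I rule it out. If $i\in C(\hat\rho)$, then $C(\hat\rho)$ is reported-self-protecting under the truthful report too, since $m\ge\hat\rho_i>\rho_i$. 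Hence it is contained in the truthful release, so $i$ is published there.
\item If $\hat\rho_i<\rho_i$: the release can only grow. Whenever it grows to include $i$ at mass at least $\rho_i$, that coalition is reported-self-protecting under truth as well, so truth publishes a weakly larger coalition containing $i$. Whenever it publishes her at mass below $\rho_i$, she earns $u_i<0$, which is at most truth's payoff of at least $0$.
\end{itemize}
Strictness follows from the same examples. A singleton-type profile shows that underreporting yields a negative payoff. Alternatively, an overreport above $m(C_0\cup V)$ yields $0$, whereas truth publishes her with positive surplus when she is safe; this uses strict safe surplus.

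For part~(iii), verifiable class assignment gives each member an effective report $\hat\rho_i\ge\rho_i$, so part~(ii) applies directly. Free choice and rounding up preserve $\hat\rho_i\ge\rho_i$ for truthful or conservative submissions, so the same argument covers them. The final disclaimer needs only a remark that rounding coarsens the report space, so the argument of part~(i) no longer applies. I expect the main obstacle to be the case analysis in part~(i): showing that an overreport never publishes her when truth does not, and that an underreport's extra publications are either already achieved by truth or unprotected. Both rest on the observation that the truthful release contains every coalition reported-self-protecting under $\hat\rho_i$ at which $i$ is actually safe.
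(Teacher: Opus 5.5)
Your proposal is correct and follows the paper's proof essentially step for step. It uses the same steps: the reported families are monotone, so Theorem~\ref{thm:largest} applies; the release is weakly decreasing in $\hat\rho_i$; part~(ii) is proved first; part~(i) splits into over- and under-reports; and part~(iii) reduces to conservative reports.

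The one step to tighten is strictness, where your two constructions are not alternatives as you present them. They are complementary cases, and both must be realized by choosing the others' reports within the given environment rather than by shrinking $V$. As the paper does, let every other signer report zero and split on two cases:
\begin{itemize}
\item If $b+m(V)\ge\rho_i$, an overreport above $b+m(V)$ forfeits the strictly positive surplus $u_i(C_0\cup V)$.
\item If $b+m(V)<\rho_i$, reporting zero publishes her unprotected while truth pays zero. This case needs no strict safe surplus.
\end{itemize}
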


With exact requirements, truthful reporting is optimal. With a coarse menu, rounding a truthful requirement upward preserves protection but may exclude someone who could safely participate. For a signer who does not know her requirement but knows an upper bound, reporting that bound fails safe: it can exclude her from a protecting coalition but cannot place her in an unprotecting one. Uncertainty about protection therefore reduces the final safe mass rather than safety.

The same logic supports simpler publication rules. A preannounced total threshold is the one-class case. Class-based rules, such as a quota of tenured signers, extend the idea to composition requirements that the scalar protection case does not cover.

There is an institutional analogue in the experiment: the subjects who chose higher disclosure thresholds when they held the controversial position demanded more surrounding support before accepting attribution, although their thresholds were strategic choices rather than reports of known upper bounds on protection requirements \citep{braghieriThresholdDisclosureCollective2026}.

\begin{proof}[Proof of Proposition~\ref{prop:report}]
Fix $i\in V$ and the others' reports, and write $D(\hat\rho_i)$ for the new endorsements released when $i$ reports $\hat\rho_i$. Reported families are upper sets in mass and leave the vanguard's requirements unchanged, so Assumption~\ref{ass:M} holds for them and Theorem~\ref{thm:largest} applies: $C_0\cup D(\hat\rho_i)$ is the union of all full coalitions $C_0\subseteq C\subseteq C_0\cup V$ with $m(C)\ge\hat\rho_j$ for every $j\in C\setminus C_0$. Every such coalition also protects the vanguard because $m(C)\ge b\ge\rho_j$ for $j\in C_0$. To compare reports, note that raising $\hat\rho_i$ removes some coalitions containing $i$ from this family and adds none, so $D$ and $C_0\cup D$ are weakly decreasing in $\hat\rho_i$. I first prove the safety claim in part (ii), then use it to compare payoffs in part (i).

(ii) If $\hat\rho_i\ge\rho_i$ and $i\in D(\hat\rho_i)$, then $b+m(D(\hat\rho_i))\ge\hat\rho_i\ge\rho_i$, so $i$ is protected and earns $u_i(C_0\cup D(\hat\rho_i))\ge0$; if $i\notin D(\hat\rho_i)$ she earns zero. Either way she is never published unprotected, and by monotonicity a higher report can only remove her from a coalition. If instead $\hat\rho_i<\rho_i$, any profile of others' reports under which $i\in D(\hat\rho_i)$ with $b+m(D(\hat\rho_i))<\rho_i$ publishes her unprotected, and such profiles exist whenever some full coalition $C_0\subseteq C\subseteq C_0\cup V$ containing $i$ has mass in $[\hat\rho_i,\rho_i)$.

(i) Compare the truthful report with a higher report $\hat\rho_i'>\rho_i$. Monotonicity gives $D(\hat\rho_i')\subseteq D(\rho_i)$. If $i\in D(\hat\rho_i')$, then $i\in D(\rho_i)$ and both full coalitions protect her, so $u_i(C_0\cup D(\rho_i))\ge u_i(C_0\cup D(\hat\rho_i'))$ because $u_i$ is nondecreasing. If $i\notin D(\hat\rho_i')$, she earns zero, which is at most her truthful payoff by part (ii). Now compare with a lower report $\hat\rho_i''<\rho_i$, so $D(\hat\rho_i'')\supseteq D(\rho_i)$. If $i\notin D(\hat\rho_i'')$, then $i\notin D(\rho_i)$ and both payoffs are zero. If $i\in D(\hat\rho_i'')$ and $b+m(D(\hat\rho_i''))\ge\rho_i$, then $C_0\cup D(\hat\rho_i'')$ satisfies every constraint under the truthful report as well, so it is contained in $C_0\cup D(\rho_i)$ and the two coincide. If $i\in D(\hat\rho_i'')$ and $b+m(D(\hat\rho_i''))<\rho_i$, she earns $u_i(C_0\cup D(\hat\rho_i''))<0$, below any truthful payoff. Truthful reporting is therefore weakly better against every profile. For strictness under strict safe surplus, let every other signer report zero. If $b+m(V)\ge\rho_i$, then $D(\rho_i)=V$ and $u_i(C_0\cup V)>0$, while a report above $b+m(V)$ excludes her and pays zero. If instead $\rho_i>b+m(V)$, truthful reporting excludes her and pays zero, whereas reporting zero releases all of $V$ and gives her a strictly negative payoff. Thus strict comparison is available even when no full coalition of available signers and the vanguard can protect her.

(iii) When membership is verifiable or assigned, a class requirement at least as large as every member's true requirement acts as a report $\hat\rho_i\ge\rho_i$ for each member, so part (ii) applies. When a signer chooses her class, the selected requirement is the report used by the release rule, and the same safety conclusion holds if it is at least $\rho_i$. Rounding a truthful or conservative submission upward to an available class requirement preserves that inequality. This establishes protection, rather than the incentive to submit truthfully: rounding can change the released coalition even when the submitted requirement is exact.
\end{proof}

\section{Custody in Practice}\label{sec:custody_practice}

Private collection must protect stored endorsements, avoid identifying supporters during recruitment, and make promised publication credible. This section examines those requirements and the threats that can undermine them, expanding the article's Sections~\ref{sec:custody_frontier} and~\ref{sec:scope_axioms}.

Custody safeguards aim to prevent the opposition from identifying supporters before a protective release. One response is to hold nothing longer than necessary and delete failed deposits promptly. Another is to separate certifying the threshold from releasing identities. This helps exactly when the condition can be checked without a readable list of names existing anywhere.

Threshold cryptography can provide this separation: endorsements are encrypted or secret-shared across parties, the condition is computed by multiparty computation or threshold decryption \citep{evansPragmaticIntroductionSecure2018,breuerSeCritMassThresholdSecret2024,mangipudiCollusionDeterrentThresholdInformation2023}, and zero-knowledge proofs verify claims about what is held \citep{kosbaHawkBlockchainModel2016}. With these safeguards, a contract can ensure that no single party, the administrator included, can read the list before the condition is met \citep{breuerSeCritMassThresholdSecret2024,mangipudiCollusionDeterrentThresholdInformation2023}, and the release itself can be automatic \citep{kosbaHawkBlockchainModel2016}. That shifts the residual risk from the administrator's discretion to credentials, key custody, code correctness, and the devices and connections people sign from, so premature disclosure requires collusion among key holders, compromised endpoints, or a flaw in the code rather than one administrator's betrayal.

Where such tools are unavailable, custody depends on administrators' ability and willingness to keep endorsements confidential.

Recruitment creates another opportunity for information to escape. The model assumes that soliciting and signing generate no attributable information. When they do---when signing, or the metadata around it, reveals to the opposition that a person endorsed---the model records a leak. Danger from merely being approached, before any endorsement exists, lies outside the model. Because the contract's guarantee covers only what enters custody, the boundary that matters is whether discovering, reaching, or signing the contract creates attributable information the opposition can get.

Not all solicitation crosses this boundary. Announcing that a contract exists need not create any: a broadcast advertisement, or a site anyone can visit, solicits no one in particular, and even hostile publicity spreads the address. A targeted ask can create such information, since it reveals who was asked, and so can a public site through traffic or signing metadata, even without targeted recruitment. A site with open enrollment avoids targeted requests but does not guarantee privacy.

Private signing need not be costless to attract participants. By reducing expected retaliation costs, the contract can make participation more attractive, but signing must still be worthwhile compared with remaining silent. When these risks deter signing, the coalition the contract can publish, $\Cp(V)$, can shrink with the signer set (Theorem~\ref{thm:tunnel}).

On this recruitment margin the call to action and the contract can both broadcast. The call to action's remaining edge is that it holds nothing, which is why it stays the relevant competitor whenever custody cannot be trusted.

\runinhead{Responses available against either institution.} The comparison holds fixed the protection that each public coalition provides, isolating what private collection changes. If the opposition's response changes those protection conditions equally under both institutions, the comparison applies in the changed environment. For example, a response aimed at the whole eligible class---raising class-wide penalties, conducting a general purge, or adopting a chilling policy---can be applied to either the contract or public organizing. The changed environment must remain monotone and leave the vanguard protected for Theorems~\ref{thm:cascade} and~\ref{thm:tunnel} to apply. Such a response can narrow the gap between the cascade and the tunnel but cannot reverse their ranking. Targeted retaliation, by contrast, needs names, and the contract withholds them until the coalition can protect its members. This leaves contract-specific responses, including infiltration and false contracts, for the following discussion.

\runinhead{Infiltration and false contracts.} Two responses exploit features of the contract itself. An infiltrator may sign to inflate the count and plan to recant after release. This is related to the Sybil attack, in which an adversary multiplies identities to swing a count, but authentication forces the infiltrator to use a real identity. However, it is not the case that \textit{anyone} may serve as an infiltrator. Contract architects may hedge against this by excluding people publicly committed to the opposition from the eligible class, and so prevent them from signing. Therefore, infiltrators must be those whose position is publicly uncertain. In the case where such an infiltrator recants, she is indistinguishable from an endorser who caved to the opposition's retaliation---which both limits the damage, since the coalition loses some protection but perhaps not her whole weight, and discourages infiltration in the first place.

A false contract instead harvests names for the opposition. It either publishes a short list or never publishes as promised. In either case the harvested names are in hostile custody, a failure with a leak rate of one. Protection against this threat depends on the signer's judgment about who is running the contract, perhaps a vanguard she recognizes or an institution with a reputation to lose. The guarantee is conditional on that judgment.

The companion paper \citep{cashmanConditionalDisclosureCoordination} treats the two threats differently. It models false-contract risk as an entry decision: a signer participates only when her trust in the administrator clears a threshold that rises with the harm a leak would do. For bad-faith endorsements, it instead studies robust publication design: when immediate recantation is bounded and only continuing commitment supplies protection, a sharp margin above the minimum safe size guarantees that the remaining coalition still protects its members.

\runinhead{Competing contracts.} Competing contracts may split signatures below every threshold, so a common platform may need a rule for pooling compatible campaigns. Because failed deposits are destroyed, carrying an endorsement to a compatible successor requires the signer's standing permission.

\section{The Fight-or-Fold Model: Per-Member Result}\label{app:opponent}

Section~\ref{sec:target} makes retaliation an all-or-none choice against the coalition. The reversal also holds when the opposition, still modeled as a single decision-maker, decides whom to punish one member at a time. To model this choice, suppose punishing a member of a coalition with mass $m$ costs $\kappa_T \psi(m)$, where $\psi$ is continuous, strictly increasing, and positive. Punishment prevents the harm $d_i>0$ that member $i$ would impose on the opposition. The opposition therefore punishes $i$ exactly when $d_i\ge\kappa_T \psi(m)$. As in the coalition-level model, its cost type is drawn once and held fixed when comparing coalitions. Here $\kappa_T$ has a continuous, strictly increasing cdf $\Lambda$ with full support on $(0,\infty)$. This extension does not impose the article's bound $\kappa_T<d$.

In the article, a coalition can survive a fight. Here punishment suppresses the targeted member's expression, while an unpunished member's expression remains public. The proposition has four parts: acceptance is a mass threshold; the reversal holds member by member; already-visible members are punished with weakly lower probability after the release; and punishment concentrates on the most damaging members.

\begin{proposition}[Per-member reversal and selective retaliation]\label{prop:permember}
Let $p_i(m)=\Lambda\!\big(d_i/\psi(m)\big)$ be the probability that member $i$ of a full public mass-$m$ coalition is punished, let $h_i>0$ and $e_i\ge0$ for all $i\in C_0\cup S$, and let $i$ be protected iff $e_i\ge p_i(m)\,h_i$. Maintain $e_i\ge p_i(b)h_i$ for $i\in C_0$, so the initial vanguard is self-protecting; recruits accept release exactly when protected. Then: (i) acceptance is a mass threshold: $i$ accepts iff $m\ge\mu_i$. When $e_i=0$, $\mu_i=\infty$. When $0<e_i<h_i$, $\mu_i=\psi^{-1}\!\big(d_i/\Lambda^{-1}(e_i/h_i)\big)$. If the argument exceeds $\psi$'s range, set $\mu_i=\infty$, so the agent never accepts; if it falls below the range, set $\mu_i=0$. When $e_i\ge h_i$, $\mu_i=0$. At a common tolerance $e/h$, the cutoff $\mu_i$ is weakly increasing in $d_i$. The induced protection families are upward sets, with $\mu_i\le b$ for $i\in C_0$, and $\Cs,\Cp$ exist by Lemma~\ref{lem:interim}; (ii) for every hidden supporter $i\in\Cp\setminus\Cs$, replacing public organizing with expansion to $\Cp$ raises observed retaliation on her from $0$ to $p_i(m(\Cp))\in(0,1)$ and lowers suppression of her expression from $1$ to the same interior number: the reversal holds member by member; (iii) every already-visible member $j\in\Cs$ is punished with weakly lower probability after the release, $p_j(m(\Cp))\le p_j(m(\Cs))$; (iv) at any full public mass, punishment concentrates on the most damaging members: $p_i(m)$ is increasing in $d_i$.
\end{proposition}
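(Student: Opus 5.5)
The proof starts from one structural fact. Since $\Lambda$ is a continuous, strictly increasing cdf and $\psi$ is continuous, strictly increasing and positive, the map $m\mapsto p_i(m)=\Lambda(d_i/\psi(m))$ is continuous and strictly decreasing in $m$, and strictly increasing in $d_i$ at fixed $m$. Parts (iii) and (iv) follow at once: $m(\Cs)\le m(\Cp)$ because $\Cs\subseteq\Cp$ by Theorem~\ref{thm:cascade}(iii), which gives $p_j(m(\Cp))\le p_j(m(\Cs))$, and monotonicity in $d_i$ is (iv). So the substance lies in (i) and (ii).

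For (i), the acceptance condition $e_i\ge p_i(m)h_i$ is split into cases on $e_i/h_i$.
\begin{itemize}
\item If $e_i\ge h_i$, the condition always holds because $p_i<1$, so $\mu_i=0$.
\item If $e_i=0$, the condition requires $\Lambda(d_i/\psi(m))\le0$. This never happens, since $\Lambda$ has full support on $(0,\infty)$ and $d_i/\psi(m)>0$, so $\mu_i=\infty$.
\item If $0<e_i<h_i$, invert $\Lambda$ on $(0,1)$. The condition becomes $d_i/\psi(m)\le\Lambda^{-1}(e_i/h_i)$, that is, $\psi(m)\ge d_i/\Lambda^{-1}(e_i/h_i)$. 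Strict monotonicity of $\psi$ then turns this into $m\ge\mu_i$, with the stated conventions when the argument lies outside the range of $\psi$.
\end{itemize}
Monotonicity of $\mu_i$ in $d_i$ at a common $e/h$ is read off from the formula, or from the fact that $p_i$ increases in $d_i$. The induced families $\{C: i\in C,\ m(C)\ge\mu_i\}$ are upward closed. The vanguard condition $e_i\ge p_i(b)h_i$ gives $\mu_i\le b$, since acceptance is exactly the threshold just derived. Setting $q$ so that $\pi(q(C))$ corresponds to $p_i(m(C))$ would put this under Lemma~\ref{lem:interim}. More directly, Assumption~\ref{ass:M} holds for these families, so Theorems~\ref{thm:largest} and~\ref{thm:cascade} define $\Cs$ and $\Cp$.

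For (ii), take $i\in\Cp\setminus\Cs$. Under public organizing she stays silent, so observed retaliation on her is $0$ and suppression is $1$. Under expansion to $\Cp$ she is published, punished with probability $p_i(m(\Cp))$, and her expression is suppressed exactly when she is punished. This makes observed retaliation and suppression the same number, with no coalition-survival layer. The main obstacle is showing that this number is interior.
\begin{itemize}
\item Positivity is immediate: $d_i/\psi(m)>0$ and $\Lambda$ has full support, so $\Lambda$ of a positive argument is strictly positive.
\item Being strictly below one is also immediate, since $\Lambda(x)<1$ for every finite $x$ under full support on $(0,\infty)$.
\end{itemize}
I will also check that membership in $\Cp$ is consistent with this: it requires $e_i\ge p_i h_i$, which, with $e_i<h_i$ for a hidden supporter, is compatible with an interior value. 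A hidden supporter must have $\mu_i>0$; otherwise she would join the cascade, because $\Cs$ is a fixed point of $\Gamma$. Hence $e_i<h_i$, which confirms she is a genuine recruit needing protection.

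Finally, suppression falls from $1$ to a number below one and retaliation rises from $0$ to a positive number, so the two move in opposite directions. This completes the member-by-member reversal.
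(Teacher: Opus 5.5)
Your proposal is correct and follows the paper's proof essentially step for step: the same case split on $e_i/h_i$ with inversion of $\Lambda$ and $\psi$ for (i), the silent-versus-published comparison with interiority of $p_i(m(\Cp))$ from the full support of $\Lambda$ for (ii), and direct monotonicity of $p_i$ in $m$ and in $d_i$ for (iii) and (iv). Your direct appeal to Assumption~\ref{ass:M} for the existence of $\Cs$ and $\Cp$ is slightly cleaner than citing Lemma~\ref{lem:interim}. That lemma is stated with a common $\pi(q(C))$, which the member-specific $p_i$ does not literally match.
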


\begin{proof}[Proof of Proposition~\ref{prop:permember}]
(i) The protection condition $e_i\ge p_i(m)h_i$ is equivalent to $\Lambda(d_i/\psi(m))\le e_i/h_i$. If $e_i=0$, it fails at every finite mass: $d_i/\psi(m)>0$, and full support gives $\Lambda(d_i/\psi(m))>0$. Hence $\mu_i=\infty$. If $e_i\ge h_i$, the condition holds for every $m$ because $\Lambda\le1$. For $0<e_i<h_i$, continuity, strict monotonicity, and full support give protection exactly when $d_i/\psi(m)\le\Lambda^{-1}(e_i/h_i)$, or equivalently when $\psi(m)\ge d_i/\Lambda^{-1}(e_i/h_i)$. This defines a mass threshold $\mu_i$ that rises with $d_i$ and falls with $e_i/h_i$. Protection is upward in mass and therefore in set inclusion. The initial condition gives $\mu_i\le b$ for $i\in C_0$, so $C_0$ is self-protecting. Lemma~\ref{lem:interim} then supplies the lattice objects.

(ii) Under public organizing, a hidden supporter never becomes public. The data record no punishment against her, but her expression is completely suppressed. Observed retaliation is therefore $0$ and suppression is $1$. Under expansion to $\Cp$, which is self-protecting under the induced families, she accepts publication. The opposition punishes her exactly when $\kappa_T\le d_i/\psi(m(\Cp))$. This event has probability $p_i(m(\Cp))=\Lambda(d_i/\psi(m(\Cp)))\in(0,1)$ because $d_i>0$ and $\Lambda$ is interior on $(0,\infty)$. Her expression is suppressed exactly when she is punished. Observed retaliation rises from $0$ to $p_i(m(\Cp))$, while suppression falls from $1$ to the same probability.

(iii) Because $m(\Cp)\ge m(\Cs)$ and $\psi$ is increasing, $d_j/\psi(m(\Cp))\le d_j/\psi(m(\Cs))$. Applying the increasing function $\Lambda$ proves the claim.

(iv) The result follows directly because $\Lambda$ is increasing.
\end{proof}

The reversal therefore also holds when the opposition can punish individual members rather than fight the coalition as a whole. Beyond the reversal, part (iv) produces the selective-discharge pattern found in the union evidence: the opposition targets the members who impose the greatest harm, and the minimum coalition mass at which a member accepts release, $\mu_i$, rises with that harm. In union organizing campaigns, the discharge of union activists is documented as a distinct employer tactic and described as a signal meant to deter other supporters \citep{bronfenbrennerNoHoldsBarred2009,schmittDroppingAxIllegal2009}. Consistent with that interpretation, estimates that translate discharge counts into a per-activist risk rest on the explicit assumption that firings target activists \citep{schmittDroppingAxIllegal2009}. Estimates of illegal organizer discharge remain contested \citep{schmittDroppingAxIllegal2009,lalondeHardTimesUnions1991,sherkTruthImproperFirings2007}.

\section{Experimental Evidence and Measurement}\label{sec:experiment_supp}

Section~\ref{sec:target} of the article describes the experiment in \citet{braghieriThresholdDisclosureCollective2026}, in which 298 students voted under private, public, or threshold-disclosure rules. This section records the point estimates and sets out measurement questions for conditional publication.

Linking names to votes raised abstention from 21.7 to 37.1 percent and reduced expression of the socially controversial position from 43.4 to 22.9 percent. Under threshold disclosure, the point estimates for participation and controversial expression were close to those under private voting, while one-third of participants had their votes publicly revealed.

Voters taking the controversial position chose a median disclosure threshold of 75 percent, against 45 percent among other voters. These thresholds concern the share of agreeing votes, including votes that remain private. Unlike the randomized treatment comparisons, the difference in chosen thresholds by vote direction is descriptive.

\runinhead{What to measure.} Beyond these behavioral premises, evaluating the institution requires observing both public participation and the support that remains private. The model separates five quantities of interest.
\begin{enumerate}
\item Whether and when a coalition is released. Under conditional release the publication condition is either met or not, so release is a discrete event that arrives with a delay, and the release probability and the delay are informative about the distance from the vanguard to $y^+$ rather than to $y^-$, given a model of how endorsements arrive.
\item The share of the eventual coalition that is public at the initial release against the share that joins afterward. Public organizing shows names arriving one at a time and stalling; the contract shows a block at the release date followed by public joiners whom the block protects, and Theorem~\ref{thm:necessity} says that anything reaching past the barrier while guaranteeing protection at publication must show such a block.
\item The composition of the release by exposure and private benefit, which Corollary~\ref{cor:composition} predicts.
\item Retaliation against people who become public, alongside suppression over the eligible population, counting those who remain silent. Proposition~\ref{thm:rd} shows why more observed attacks can accompany less suppression; successful punishments alone also omit suppressed expression that never became public.
\item The separate effects of information and conditional publication. A low-stakes design could vary credible information about support independently of whether signed endorsements are published immediately or only as a protecting group. Comparing release, composition, and exposure before protection across these conditions would distinguish improved turnout forecasts from the value of collected commitments. Follow the eligible population in every condition, including nonparticipants and campaigns that never release.
\end{enumerate}

\part*{\normalfont\large\bfseries III. Further Extensions}
\addcontentsline{toc}{part}{III. Further Extensions}

These extensions ask what changes when the output is a count, when coalition composition matters, or when a larger group can harm a member. Each can be read separately after the article; none is needed to prove its main results.

\section{Credible Polls: Supporting Result}\label{sec:count_release}

A credible poll uses private custody to certify support without publishing endorsements under their authors' names. The allies who would act on that support receive the certified full-coalition mass, including the vanguard. Only a confidentiality-bound administrator sees the recruits' identities. The following corollary states what the count certifies.

\begin{corollary}[Credible poll]\label{prop:count}
In the nonatomic scalar protection case, let $\Pfam_i=\{C\ni i:m(C)\ge\rho_i\}$ on full coalitions, with $\rho_i\le b$ for $i\in C_0$. Replace publication of $D(V)=\Cp(V)\setminus C_0$ by publication of the count $b+m(D(V))$. The rule certifies a self-protecting full-coalition mass, at least $\rho_i$ for every member of $C_0\cup D(V)$, and certifies $y^+$ when $V=S$. It directly names no recruit, so the named public coalition remains $C_0$ and the rule adds no naming-based safety constraint. This certifies support, not protection for a later self-identifying act.
\end{corollary}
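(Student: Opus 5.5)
The argument reduces the corollary to Theorem~\ref{thm:tunnel} and Theorem~\ref{thm:largest}. The count rule changes only what is published. Collection and the release computation stay the same. I will first apply the constructive deletion procedure (or directly Theorem~\ref{thm:largest} with supporter set $V$) to show that $\Cp(V)=C_0\cup D(V)$ is self-protecting under the scalar families. Its mass is $m(\Cp(V))=b+m(D(V))$, because $C_0$ and $D(V)\subseteq V\subseteq S\subseteq N$ are disjoint and mass is additive. Self-protection in the scalar case means exactly $m(\Cp(V))\ge\rho_i$ for every $i\in\Cp(V)$. For recruits this is membership in $\Pfam_i$. For vanguard members it follows from $\rho_i\le b\le b+m(D(V))$. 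So the published number is a self-protecting full-coalition mass in the stated sense.

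Next comes the case $V=S$. Here $D(S)=\Cp(S)\setminus C_0$, so the count equals $m(\Cp(S))$. I then need $m(\Cp(S))=y^+$. In the nonatomic scalar case I would use Theorem~\ref{thm:cascade}(iii): $\Cp(S)$ is the greatest fixed point of $\Gamma$. A fixed point of $\Gamma$ contains exactly $C_0$ together with the supporters having $\rho_i\le m(C)$, with negligible own mass. Its mass therefore solves equation~\eqref{eq:granovetter}, $y=b+H_S(y)$. The next step is an order-matching argument. Every crossing $y$ defines the self-protecting coalition $C_0\cup\{i:\rho_i\le y\}$, and these coalitions are nested in $y$. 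The largest crossing therefore corresponds to the largest fixed point, which is exactly the characterization used in the proof of Corollary~\ref{cor:composition}. I expect this identification to be the only delicate step. The nonatomic convention ($a_i$ negligible, so $C\cup\{i\}$ and $C$ have the same mass) has to be invoked carefully so that $\Gamma$'s fixed points coincide with diagonal crossings of $G_S$.

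Finally, the naming claims follow by inspection. The rule publishes a single number and no endorsement, so no action attributes any recruit. Under the paper's definition, the public coalition is the set of people whose expressions are published, so it remains $C_0$. Because $C_0$ is self-protecting by the starting condition, regret-free safety (Assumption~\ref{ass:RES}) is satisfied trivially at every date with respect to naming. This is why the rule adds no naming-based constraint. The last sentence of the statement is interpretive. I would note that the certified mass refers to a coalition that was never published. Monotone protection for a later self-identifying act is therefore evaluated against the then-public coalition, which is still $C_0$ plus whoever has become public. The count gives no guarantee for that act, and nothing further needs proof.
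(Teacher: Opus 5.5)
Your proposal is correct and follows the paper's proof: Theorem~\ref{thm:largest} applied to the signer set $V$ makes $\Cp(V)$ self-protecting, disjointness of $C_0$ and $D(V)$ makes the published count equal $m(\Cp(V))$, equation~\eqref{eq:granovetter} identifies the $V=S$ count with $y^+$, and publishing a number names no recruit, so the named coalition stays $C_0$. Your explicit matching of the fixed points of $\Gamma$ with diagonal crossings spells out a step the paper takes directly from equation~\eqref{eq:granovetter} and its notation $y^+=m(\Cp)$.
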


\begin{proof}[Proof of Corollary~\ref{prop:count}]
The threshold families are monotone in mass, so Theorem~\ref{thm:largest} makes $\Cp(V)$ self-protecting. Because $C_0$ and $D(V)$ are disjoint, the published count is $b+m(D(V))=m(\Cp(V))$, which meets every member's requirement. At $V=S$, equation~\eqref{eq:granovetter} identifies this mass with $y^+$. Publication of a count does not itself publish anyone's endorsement, leaving the self-protecting vanguard as the named public coalition. Theorem~\ref{thm:necessity} concerns transitions that newly name recruits, so a credible poll is not a public-only route past $\Cs$.
\end{proof}

The corollary assumes private accumulation of endorsements. It does not analyze observable signing or the participation incentives it creates. Certifier knowledge alone is not a harmful leak, but disclosure that identifies a signer to the opposition without protection enters $\delta_i^{\mathrm h}$ in Proposition~\ref{prop:robust_frontier}. Count publication also need not preserve anonymity. Side information, a small candidate set, or distinctive individual masses can identify signers, and a count covering the whole eligible population identifies everyone. Those inferences require a model of audience information. The certificate's value for later public action likewise depends on how allies and the opposition respond and does not follow from the coalition calculation alone.

\section{Heterogeneity and Coalition Expansion}\label{sec:gap}

With heterogeneous protection requirements, their distribution determines how much larger a coalition the tunnel can form than the public cascade. This section develops the fixed-point geometry and comparative statics used in the article. Let $H_S(y)=m\{i\in S:\rho_i\le y\}$ and $G_S(y)=b+H_S(y)$. The least and greatest fixed points, $y^-\le y^+$, are the public masses of $\Cs$ and $\Cp$ in the scalar versions of Theorems~\ref{thm:largest} and~\ref{thm:cascade}.

\begin{proposition}[Exposure barriers and coalition expansion]\label{prop:gap}
In the scalar nonatomic case, with continuous $H_S$:
\begin{enumerate}
\item[(i)] the additional public mass assembled by the tunnel is the fixed-point distance, $m(\Cp)-m(\Cs)=y^+-y^-$;
\item[(ii)] (\emph{Barriers of positive depth.}) The inequality $y^-<y^+$ is equivalent to $G_S$ having multiple fixed points. An \emph{exposure barrier of positive depth}---a level $y>y^-$ at which the protected mass falls short of the mass needed to sustain it, $G_S(y)<y$, with a recrossing above it ($G_S(y')\ge y'$ for some $y'>y$)---is sufficient for $y^-<y^+$ and is also necessary when the lower crossing $y^-$ is \emph{locally blocked} ($G_S(y)<y$ for all $y$ in a right-neighbourhood of $y^-$). Without local blockage, multiple fixed points can instead arise through a critical-mass or tangency case, in which the cascade mass $y^-$ is not robust to an upward perturbation but no barrier of positive depth separates it from $y^+$;
\item[(iii)] raising $H_S$ only on $(y^-,\infty)$ leaves the cascade mass $y^-$ unchanged and weakly raises the largest protective mass $y^+$; raising $H_S$ on $[b,y^-]$, or raising the vanguard $b$ with the prospective recruit population fixed and the vanguard self-protecting in each environment, weakly raises $y^-$. These distributional comparisons concern masses, not the identities of coalition members.
\end{enumerate}
\end{proposition}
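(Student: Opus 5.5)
The plan is to work throughout with the identification, already used in the proof of Corollary~\ref{cor:composition}, of coalitions with masses in the nonatomic scalar case. First, the fixed points of $\Gamma$ correspond to fixed points $y=G_S(y)$. A fixed point of $\Gamma$ is self-protecting and closed, so it equals $C_0\cup\{i\in S:\rho_i\le y\}$ with $y=m(C)=b+H_S(y)$. Conversely, every such $y$ yields a $\Gamma$-fixed point. Theorem~\ref{thm:cascade}(iii) then identifies $\Cs$ and $\Cp$ with the least and greatest fixed points, so $m(\Cs)=y^-$ and $m(\Cp)=y^+$. Part (i) follows immediately, since $\Cs\subseteq\Cp$.

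For part (ii), I would argue both directions. If $y^-<y^+$, then two fixed points exist. Conversely, if a second fixed point exists, it exceeds the least one, so $y^+>y^-$. Sufficiency of a barrier of positive depth comes next. Take $y>y^-$ with $G_S(y)<y$ and some $y'>y$ with $G_S(y')\ge y'$. The function $G_S(x)-x$ is continuous on $[y,\infty)$ because $H_S$ is continuous. It is negative at $y$ and nonnegative at $y'$. It is negative at large $x$, since $H_S$ is bounded by $m(S)$. The intermediate value theorem therefore gives a crossing in $(y,y']$ or beyond, which is a fixed point strictly above $y^-$. For necessity under local blockage, suppose $G_S(x)<x$ on $(y^-,y^-+\eta)$. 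Pick $y$ in this interval; then $y$ itself is the barrier level, and $y'=y^+>y$ satisfies $G_S(y^+)=y^+$. For the tangency remark I would exhibit a short example, such as $G_S$ touching the diagonal from above on $(y^-,y^+)$, where $G_S(y)\ge y$ throughout so that no barrier of positive depth exists.

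For part (iii), I would use Tarski-style comparative statics on the real function, via the characterizations $y^-=\min\{y\ge b: G_S(y)\le y\}$ and $y^+=\max\{y: G_S(y)\ge y\}$. I expect these to hold because $G_S$ is nondecreasing and continuous, and $G_S(b)\ge b$ starting from the vanguard. Raising $G_S$ pointwise then weakly raises both the least and the greatest fixed points. Raising $H_S$ only above $y^-$ leaves $G_S$ unchanged on $[b,y^-]$, so $y^-$ is still a crossing. No earlier crossing appears, because $G_S$ is unchanged below $y^-$. The greatest fixed point can only rise.

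The main obstacle is justifying the min characterization of $y^-$ so that the cascade indeed equals the least fixed point of the real map. This requires the cascade iteration from $b$ to converge to the first crossing, and continuity of $H_S$ is what prevents overshoot. One subtlety is that changing $H_S$ only on $(y^-,\infty)$ must keep $H_S$ nondecreasing, so it cannot lower the values at $y^-$. A second subtlety is that raising $b$ should leave $G_S(b')\ge b'$ under the stated self-protection hypothesis. I would check both carefully.
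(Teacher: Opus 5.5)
Your proposal is correct. Parts (i) and (ii) track the paper's proof closely. The paper also reads off $\Cs=C_0\cup\{i\in S:\rho_i\le y^-\}$ and $\Cp=C_0\cup\{i\in S:\rho_i\le y^+\}$ from the fixed-point equation. It obtains sufficiency of a barrier from the intermediate value theorem on $[y,y']$, so your observation about large $x$ is not needed. It proves necessity by choosing a level inside the blocked neighbourhood and below $y^+$. In your version you should say why $y^+>y$: local blockage leaves no fixed point in $(y^-,y^-+\eta)$, so $y^+\ge y^-+\eta$. For the final clause the paper gives the concrete critical-mass example $b=1$, $G_S(y)=1+\sqrt{y-1}$ on $[1,2]$. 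Your tangency sketch also works, provided $G_S(y)>y$ strictly just above $y^-$, so that it actually exhibits non-robustness to an upward perturbation.

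The real difference is in (iii). The paper argues dynamically. The cascade iteration from $b$ only evaluates $G_S$ at masses no greater than $y^-$, so changes above $y^-$ cannot move it, and pointwise increases along the iteration can only raise its limit. The rise of $y^+$ is then simply asserted from the upward shift of $G_S$. You argue statically from the Tarski-type characterizations $y^-=\min\{y\ge b:G_S(y)\le y\}$ and $y^+=\max\{y:G_S(y)\ge y\}$. A pointwise increase of $G_S$ shrinks the first set and enlarges the second. Your route gives both monotonicity claims in one line, including the $y^+$ claim the paper leaves implicit, and it does not depend on how the cascade iteration behaves in the continuum. The paper's route makes transparent why public organizing is sensitive only to $H_S$ on $[b,y^-]$.

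Two small corrections to your closing remarks. First, it is monotonicity of $G_S$, not continuity, that keeps the iterates below every fixed point; continuity is what makes their limit a fixed point. Second, $G_S(b')\ge b'$ holds automatically because $H_S\ge0$. The self-protection hypothesis is needed instead to keep $\rho_i\le b'$ for the vanguard, and hence to preserve the identification of $y^-$ and $y^+$ with the masses of $\Cs$ and $\Cp$ in the new environment.
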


\begin{proof}[Proof of Proposition~\ref{prop:gap}]
(i) At the fixed points, $\Cp=C_0\cup\{i\in S:\rho_i\le y^+\}$ and $\Cs=C_0\cup\{i\in S:\rho_i\le y^-\}$. Hence $m(\Cp)=b+H_S(y^+)=y^+$ and $m(\Cs)=b+H_S(y^-)=y^-$. Subtracting gives the result.

(ii) By definition, $y^-<y^+$ exactly when $G_S$ has at least two fixed points. For sufficiency, take $y_0>y^-$ such that $G_S(y_0)<y_0$, and suppose $G_S(y')\ge y'$ for some $y'>y_0$. Continuity of $G_S$ and the intermediate value theorem give a fixed point in $(y_0,y']$. Therefore $y^+>y_0>y^-$.

Now impose local blockage. If $y^-<y^+$ and $G_S(y)<y$ on $(y^-,y^-+\nu)$, choose any $y_0\in(y^-,\,y^-+\min\{\nu,\,y^+-y^-\})$. Then $G_S(y_0)<y_0$, and the graph recrosses the diagonal at $y'=y^+$. This is an exposure barrier of positive depth.

Local blockage is necessary for this converse. Let $b=1$ and $G_S(y)=1+\sqrt{y-1}$ on $[1,2]$. The fixed points are $1$ and $2$, but $G_S(y)>y$ throughout $(1,2)$. Thus $y^-<y^+$ even though no barrier of positive depth separates them. Here $y^-=b$, so the cascade does not leave the vanguard, but any upward perturbation at the unstable lower fixed point can start growth.

(iii) The iteration that reaches $y^-$ evaluates $G_S$ only at masses no greater than $y^-$. Thus $y^-$ depends only on $H_S$ over $[b,y^-]$. Raising $H_S$ only on $(y^-,\infty)$ leaves $y^-$ unchanged and weakly raises $G_S$ above it, so the greatest fixed point $y^+$ weakly rises. Raising $H_S$ on $[b,y^-]$, or raising the vanguard $b$, raises $G_S$ along the public iteration and weakly raises $y^-$. The function $H_S$ records masses and says nothing about which named person has which requirement, so these comparisons of masses imply neither unchanged membership nor set inclusion across environments.
\end{proof}

This is the standard geometry of a threshold cascade \citep{granovetterThresholdModelsCollective1978}. Failure privacy matters for the support beyond the cascade coalition, not for total latent support: a movement can have broad private backing but little public expression, and the more support is concentrated beyond that coalition, the more the contract adds. In the union application, these are the most exposed workers, whom only the tunnel reaches.

In a finite coalition, each person has positive mass, and the cascade and the tunnel destination use different thresholds. A recruit of mass $a_i$ can safely join existing public mass $y$ exactly when $y+a_i\ge\rho_i$, so her \emph{activation threshold} $\check\rho_i:=\rho_i-a_i$ governs public entry. Writing $K_S(y)=m\{i\in S:\check\rho_i\le y\}$, the cascade iterates $b+K_S(y)$ from $b$, while the tunnel destination $\Cp$ still follows the final requirements $H_S(y)=m\{i\in S:\rho_i\le y\}$ through $y=b+H_S(y)$. The distinction disappears in the nonatomic limit $a_i\to0$, and a person with high mass can start further growth despite a high protection requirement because her own mass helps satisfy it.

\subsection{Information}\label{sec:design}

The preceding geometry holds the protection offered by a given coalition fixed. Information could instead reduce the expected cost of speaking by changing what the opposition believes---an employer who learns that most of the staff agree may retaliate less, which in the scalar protection case lowers each person's requirement---but it need not make being the first to join worthwhile. Let an intervention select a belief $\mu$ from a fixed set $\mathcal M$, and let $\Pfam_i(\mu)$ list the coalitions that protect person $i$ under that belief. Maintain the starting condition that $C_0$ protects its members under each belief being compared. If $C_0\cup\{i\}\notin\Pfam_i(\mu)$ for every attainable $\mu$, no intervention that changes only what the audience believes, such as a credible poll, lets $i$ join the vanguard alone, though it may still start the cascade by making someone else a safe first mover. If some attainable belief does make $i$ safe with the vanguard, information suffices.

Information and private custody therefore solve different parts of the problem. By changing audience beliefs, information changes which coalitions protect every member, and, under the regret-free standard with protection held fixed, private custody changes which protective coalition can be reached. In the scalar protection case, favorable audience beliefs lower requirements from $\rho_i$ to $\rho_i-\gamma\mu$, where $\gamma\ge0$ measures the strength of this effect. The change moves both fixed points. For any fixed belief, the path still determines whether expression reaches the lower or upper crossing.

Write $\Gamma_\mu$, $\Cs(\mu)$, and $\Cp(\mu)$ for the coalition operator, cascade coalition, and largest self-protecting coalition in the protection environment $\Pfam_i(\mu)$.

\begin{remark}[Information and the path constraint]\label{cor:info}
At fixed $\mu$, a belief-only intervention that releases no named coalition leaves the coalition fixed points unchanged: a public-only regret-free process remains bounded by $\Cs(\mu)$, while the Social Assurance Contract implements $\Cp(\mu)$. The observation is immediate from the fixed-belief theorems and is recorded for the design discussion in the article.
\end{remark}

\subsection{Exogenous Conditioning Statistics: a Boundary Case}\label{sec:bbf}

The tunnel can expand the coalitions that can form safely only when appearing with others changes whether a person is safe. To mark that boundary, this subsection asks what happens when safety instead depends on an external statistic that coalition membership cannot change. At each value of the external statistic, anyone who could safely appear with others could also safely join the vanguard alone. The result compares which coalitions are safe at a given value of the statistic. It does not model participation by people who do not yet know that value.

\begin{remark}[The exogenous-statistic limit]\label{rem:bbf}
Threshold majority voting with disclosure conditions release on a statistic rather than on a protecting coalition \citep{braghieriThresholdDisclosureCollective2026}. The realized vote share in that model is endogenous to participation and voting, so the proposition below does not formally nest it. The proposition instead isolates a stricter boundary: at any fixed value of a conditioning statistic that is primitive and independent of who acts, the custody stage cannot expand the coalitions that can form safely through the cascade. When safety depends on who else is exposed, coalition composition and the vanguard again determine which recruits can join, so the tunnel may form a larger protective coalition than the cascade.
\end{remark}

\begin{proposition}[The exogenous-statistic limit]\label{prop:bbf}
Suppose each protection family conditions only on an exogenous statistic: there is a random variable $\sigma$, realized independently of the mechanism and of who is named, and thresholds $\sigma^\ast_i$ for all $i\in C_0\cup S$ with $\Pfam_i=\{C\ni i:\sigma\ge\sigma^\ast_i\}$ on full coalitions. Restrict attention to realizations and vanguards satisfying $\sigma\ge\sigma_i^\ast$ for every $i\in C_0$, so the initial coalition remains self-protecting. Then: (i) $\Gamma(C)=C^\ast(\sigma):=C_0\cup\{i\in S:\sigma\ge\sigma^\ast_i\}$ for every full $C$, the operator has the unique fixed point $C^\ast(\sigma)$, and $\Cs(S)=\Cp(S)=C^\ast(\sigma)$; (ii) for fixed $S$ and $\sigma$, the additional recruits $C^\ast(\sigma)\setminus C_0$ are vanguard- and history-independent, while the full coalition includes the initial vanguard; (iii) at each realized value of $\sigma$, failure privacy does not expand the set of coalitions that can form safely: every recruit in $C^\ast(\sigma)\setminus C_0$ is safe when joining the vanguard alone. With signer set $V$, the contract releases the new endorsements $D(V)=V\cap C^\ast(\sigma)$, giving full outcome $C_0\cup D(V)$. Conversely, $\Cs\subsetneq\Cp$ requires protection to depend on co-exposure.
\end{proposition}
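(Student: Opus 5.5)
The plan is to compute $\Gamma$ directly from the definition, then read off (ii) and (iii) as consequences of (i). Throughout, fix a realization of $\sigma$ satisfying $\sigma\ge\sigma_i^\ast$ for every $i\in C_0$. This makes $C_0$ self-protecting, so the lattice objects of Theorems~\ref{thm:largest} and~\ref{thm:cascade} are well defined.

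First I would verify that each $\Pfam_i=\{C\ni i:\sigma\ge\sigma_i^\ast\}$ satisfies Assumption~\ref{ass:M}. If $\sigma\ge\sigma_i^\ast$, the family consists of every full coalition containing $i$; otherwise it is empty. Either way it is closed under supersets containing $i$.

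For (i), take any full $C$ and any $i\in S$. We have $C\cup\{i\}\in\Pfam_i$ iff $\sigma\ge\sigma_i^\ast$, a condition that does not involve $C$. Hence $\Gamma(C)=C_0\cup\{i\in S:\sigma\ge\sigma_i^\ast\}=C^\ast(\sigma)$ for every $C$. A constant map has exactly one fixed point, namely $C^\ast(\sigma)$. Theorem~\ref{thm:cascade}(iii) identifies $\Cs(S)$ as the least fixed point and $\Cp(S)$ as the greatest, so both equal $C^\ast(\sigma)$. As a cross-check without the lattice theorem, I can compute $T(C_0)=C_0\cup\{i\in S:\sigma\ge\sigma_i^\ast\}$ directly and note that $T$ fixes this set. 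That gives $\Cs=C^\ast(\sigma)$ after one round. Any self-protecting $C$ has every recruit satisfying $\sigma\ge\sigma_i^\ast$, so $C\subseteq C^\ast(\sigma)$, and therefore $\Cp=C^\ast(\sigma)$.

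For (ii), the set $C^\ast(\sigma)\setminus C_0=\{i\in S:\sigma\ge\sigma_i^\ast\}$ involves only $S$, $\sigma$, and the thresholds. It uses neither $C_0$ (apart from the disjointness $C_0\cap N=\varnothing$) nor any history, because $\sigma$ is realized independently of the mechanism and of who is named.

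For (iii), each $i\in C^\ast(\sigma)\setminus C_0$ has $C_0\cup\{i\}\in\Pfam_i$, so she can safely join the vanguard alone. Running the same argument with $V$ in place of $S$ gives $\Cp(V)=C_0\cup(V\cap\{i:\sigma\ge\sigma_i^\ast\})$, hence $D(V)=V\cap C^\ast(\sigma)$. The converse is the contrapositive of (i). If every $\Pfam_i$ is independent of co-exposure, meaning membership of $C$ is determined by a condition not depending on $C$, then $\Gamma$ is constant, so $\Cs=\Cp$. Thus $\Cs\subsetneq\Cp$ forces some $\Pfam_i$ to depend on which others are exposed.

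No step is deep. The only care needed is interpretive: state the converse precisely as "if membership of $\Pfam_i$ does not vary with $C$ (beyond $i\in C$), then $\Gamma$ is constant", and keep the restriction to realizations of $\sigma$ that protect the vanguard.
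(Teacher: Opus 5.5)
Your proposal is correct and follows the paper's proof essentially step for step. Like the paper, you show that $\Gamma$ is constant at $C^\ast(\sigma)$, so it has a unique fixed point and $\Cs=\Cp$; you confirm directly that $T(C_0)=C^\ast(\sigma)$ and rerun the argument with $V$ for the release rule. The differences are minor: you check Assumption~\ref{ass:M} explicitly, and you state the converse as a contrapositive, whereas the paper exhibits a witness $i\in\Cp\setminus\Cs$ with $\Cp\in\Pfam_i$ but $\Cs\cup\{i\}\notin\Pfam_i$.
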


\begin{proof}
(i) For a full coalition $C\ni i$, $C\in\Pfam_i$ exactly when $\sigma\ge\sigma^\ast_i$, regardless of who else belongs to $C$. Vanguard members satisfy this condition by hypothesis. Hence $\Gamma(C)=C_0\cup\{i\in S:\sigma\ge\sigma_i^\ast\}=C^\ast(\sigma)$ for every full $C$. The operator has the unique fixed point $C^\ast(\sigma)$, so its least and greatest fixed points coincide. Every recruit $i\in C^\ast(\sigma)\setminus C_0$ is safe when joining $C_0$ alone. Thus $T(C_0)=C^\ast(\sigma)$: one application of the cascade operator gives the full protective coalition at that value of the statistic. For fixed $S$ and $\sigma$, the entrant set depends on neither the vanguard nor the history, but the full coalition retains $C_0$. With available signers $V$, the same argument gives $\Cp(V)=C_0\cup(V\cap C^\ast(\sigma))$, proving parts (ii)--(iii). Conversely, $\Cs\subsetneq\Cp$ would require some $i\in\Cp\setminus\Cs\subseteq S$ such that $\Cp\in\Pfam_i$ but $\Cs\cup\{i\}\notin\Pfam_i$. That is impossible when protection is independent of coalition composition.
\end{proof}

\section{Scope of Monotone Protection}\label{sec:scope}

The lattice results require monotone protection: adding public participants cannot make an existing member less safe. Material retaliation does not guarantee this property. This section studies what remains when monotonicity fails.

Without monotone protection, there may be no largest self-protecting coalition. Let $C_0=\varnothing$, $S=\{1,2,3\}$, and take
\[
\Pfam_1=\{\{1,2\},\{1,3\}\},\qquad
\Pfam_2=\{\{1,2\}\},\qquad
\Pfam_3=\{\{1,3\}\}.
\]
The coalitions $\{1,2\}$ and $\{1,3\}$ are both self-protecting, but neither contains the other, and their union does not protect its members. After all three agents sign, the administrator must choose whether to exclude agent 2 or agent 3. The greatest-coalition implementation in Theorems~\ref{thm:largest}--\ref{thm:tunnel} therefore fails.

Without monotonicity, a small, curated contract could perhaps still check protection without revealing the coalition to a proposed entrant by privately re-evaluating each potential new coalition. However, this means the mechanism must then collect and process potentially complex compatibility constraints, choose among conflicting protective coalitions, and perhaps reject an entrant, remove an incumbent, or seek renewed consent. That is a discretionary coalition-selection problem, and it does not scale into a practical mechanism.

\runinhead{The nonmonotone frontier.} Monotonicity can fail in at least two ways. In social-image models, exposure costs depend on audience beliefs and perceived norms \citep{benabouIncentivesProsocialBehavior2006,bursztynMisperceivedSocialNorms2020}. Adding a disliked co-signer can raise the reputational cost borne by everyone else. A new member's record or vulnerability can also reduce coalition resilience $q(C)$ (Section~\ref{sec:target}) and make the group easier to defeat. In either case, the union of two self-protecting coalitions can fail to protect some of their members, so a largest self-protecting coalition need not exist. The relevant question becomes which protective coalitions are accessible.

Let $C_0$ be the irreversible self-protecting vanguard. Let $\Sigma\subseteq\{C:C_0\subseteq C\subseteq C_0\cup S\}$ contain the admissible public sets, including $C_0$. Every member of an admissible set, including every vanguard member, must be protected. Each public step adds one recruit and must leave the new set in $\Sigma$. The \emph{safely reachable coalitions} $\mathcal K(C_0)$ are the coalitions reachable from $C_0$ through these safe one-person steps.

Under monotone protection, Lemma~\ref{lem:reduction} shows that safety after each individual completion guarantees safety under every realization. Without monotonicity, a later addition can make an earlier member unsafe. This extension retains ordinary E1 event-by-event assessment. The scheduled-round convention does not apply here. The next assumption therefore strengthens Assumption~\ref{ass:U} for public same-date groups of unverified attempts. It is a separate assumption, not a consequence of nonmonotonicity.

\begin{assumption}[Same-date group admissibility]\label{ass:batch}
For a public transition in which a set $E$ of unverified attempts by not-yet-public recruits is made, the admissible realization set contains, for every subset $E'\subseteq E$ and every ordering of $E'$, the realization in which exactly the members of $E'$ complete, one at a time, in that order. A device action releasing completed commitments of previously unattributed recruits already in private custody is instead governed by Assumption~\ref{ass:atomic}.
\end{assumption}

The analysis takes $\Sigma$ as known, so to implement coalition $C$ the administrator must know $\Sigma$, the vanguard $C_0$, and the target $C$. With those inputs given, the next proposition characterizes reachability. It neither elicits protection sets nor chooses among incomparable protective coalitions, and outside the monotone domain there may be no largest destination.

\begin{proposition}[Accessibility necessity]\label{thm:accessibility}
Maintain the permission-respecting action space and causal timing preceding Lemma~\ref{lem:dichotomy}, mechanisms as in Definition~\ref{def:mechanism}, and Assumptions~\ref{ass:atomic} and~\ref{ass:batch}. For any admissible family $\Sigma$ with irreversible self-protecting vanguard $C_0$ as above: (i) a coalition is the outcome of some regret-free public-only process from $C_0$ if and only if it lies in the safely reachable coalitions $\mathcal K(C_0)$; (ii) every $C\in\Sigma\setminus\mathcal K(C_0)$ requires a private conditional-release stage to implement, and taking private custody of completed commitments from $C\setminus C_0$ before releasing those names in one release event implements it. A private conditional-release stage is therefore necessary, and custody followed by one release is sufficient, for exactly the admissible coalitions outside the reachable set. When $\Sigma$ is generated by monotone protection (Assumption~\ref{ass:M}), $\max\mathcal K(C_0)=\Cs$ and $\max\Sigma=\Cp$, recovering Theorem~\ref{thm:necessity} for full coalitions.
\end{proposition}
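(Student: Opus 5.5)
I would prove part (i) in two directions, mirroring Lemma~\ref{lem:reduction} and Theorem~\ref{thm:cascade}, but with $\Sigma$ in place of protection families and with Assumption~\ref{ass:batch} in place of the singleton clause of Assumption~\ref{ass:U}.

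For sufficiency, take $C\in\mathcal K(C_0)$ with a witnessing chain $C_0=K_0\subsetneq K_1\subsetneq\cdots\subsetneq K_n=C$ of safe one-person steps. The process has exactly the single recruit $K_{k}\setminus K_{k-1}$ attempt at state $K_{k-1}$. If the attempt fails it is retried; if it completes, the public set moves to $K_k\in\Sigma$. Every realization stays on the chain, so every realized public set is admissible and the process is regret-free. Its full-completion outcome is $C$.

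For necessity, fix a public-only regret-free process and a history. At each public transition the attempted set $E$ consists of unverified exposure attempts. By Assumption~\ref{ass:batch}, every subset $E'$, in every order, is an admissible realization completing one at a time. Regret-freeness then requires every prefix set $C_{h^-}\cup\{e_1,\dots,e_k\}$ to lie in $\Sigma$. In particular, any realized transition $C_{h^-}\to C_{h^-}\cup R$ decomposes into safe one-person steps. Induction on completed actions from $C_0$ keeps every reachable public set, and hence every outcome, inside $\mathcal K(C_0)$. The key step is that a public-only mechanism cannot condition on private commitments. Lemma~\ref{lem:dichotomy} lets me treat all participant actions as exposure actions, so no other transition type exists.

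For part (ii), necessity follows from (i) together with the argument of Theorem~\ref{thm:necessity}(ii). If $C\in\Sigma\setminus\mathcal K(C_0)$ is implemented, some completed action must produce a public set outside $\mathcal K(C_0)$. Take the first such action $a^\ast$, which exists because the process starts in $\mathcal K(C_0)$, and let $R$ be the set it newly attributes. If $|R|=1$ and $a^\ast$ were an individual exposure action, the new set would be a safe one-person step from a reachable set, so it would itself be reachable; this is a contradiction. Hence $a^\ast$ names someone other than its author. Claims 3--5 in the proof of Theorem~\ref{thm:necessity} (permission, the twin history of Assumption~\ref{ass:causal}(iii), and Lemma~\ref{lem:dichotomy}) then show that the device held still-private commitments from every newly named agent other than possibly the author. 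By Lemma~\ref{lem:custody}(a), this is a private conditional-release stage.

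Sufficiency follows from Lemma~\ref{lem:custody}(b) and Assumption~\ref{ass:atomic}. The device collects commitments from all of $C\setminus C_0$, leaves the public set at $C_0\in\Sigma$ throughout collection, and then releases them in one event. The result is $C\in\Sigma$, so no intermediate set is ever public.

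For the monotone case, $\Sigma$ is the family of self-protecting coalitions. Every safe one-person step from $K$ adds some $i\in T(K)$, so $\mathcal K(C_0)\subseteq\{C\subseteq\Cs\}$. Conversely, the greedy chain of Theorem~\ref{thm:cascade}(ii), serialized one recruit at a time (each such step is safe by monotonicity), reaches $\Cs$. Hence $\max\mathcal K(C_0)=\Cs$. Theorem~\ref{thm:largest} gives $\max\Sigma=\Cp$, and part (ii) then recovers Theorem~\ref{thm:necessity}.

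The main obstacle is the necessity direction of (i). It requires showing carefully that simultaneous completions within a date and device actions in a public-only mechanism cannot jump over unreachable sets. The plan is to rely on the within-date completion ordering of Assumption~\ref{ass:causal}(i) and on the fact that a public-only device, acting only on public exposure, cannot newly name anyone under the permission rule.
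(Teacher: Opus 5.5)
Your proposal is correct and follows essentially the paper's own proof: part (i) serializes each same-date group into safe one-person steps using Assumption~\ref{ass:batch}, with the chain-following process for the converse; part (ii) combines permission, receipt, and custody with a single all-or-nothing release; and the monotone case uses the greedy cascade and Theorem~\ref{thm:largest}. Your first-crossing treatment of (ii) reuses Claims~3--5 of Theorem~\ref{thm:necessity} and observes that any crossing adding only one name would land back in $\mathcal K(C_0)$, so it is slightly more explicit than the paper's direct appeal to Lemma~\ref{lem:custody}, but it is the same argument.
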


\begin{proof}
(i) A regret-free public-only process exposes names irreversibly and conditions transitions only on completed public actions. Consider a transition in which a set $E$ attempts to become public. No commitment in private custody controls this transition, so Assumption~\ref{ass:batch} treats it as an unverified public same-date group. Regret-free safety must hold for every possible subset and order of completions. Each prefix of each completion order, added to the current public set, must therefore belong to $\Sigma$. Replace every realized same-date group with one such safe sequence of individual completions and join these sequences across transitions. The result is a one-person-at-a-time chain from $C_0$ to the final coalition. Hence the outcome lies in $\mathcal K(C_0)$.

Conversely, suppose a chain shows that $C\in\mathcal K(C_0)$. Have each new member attempt alone in the order given by the chain. Every intermediate set belongs to $\Sigma$, so the process is regret-free and reaches $C$.

(ii) Now take $C\in\Sigma\setminus\mathcal K(C_0)$. No safe one-person public chain reaches it. By part (i), unverified public attempts alone cannot carry the process outside the reachable set. In the maintained permission-respecting action space, an event newly attributing several recruits can eliminate partial-completion outcomes only by using one device action based on their completed commitments that remain private. This is Lemma~\ref{lem:custody}; endorsements already public at that transition may be reprinted without such private custody. Every regret-free implementation of $C$ therefore contains a private conditional-release stage.

For sufficiency, collect completed private commitments from every member of $C\setminus C_0$, then release all of their names in one device action. If the action does not complete, the public set remains $C_0$. If it completes, the public set becomes $C$. Both sets belong to $\Sigma$ and protect all their members, including the vanguard, and Assumption~\ref{ass:atomic} rules out partial release. The mechanism therefore implements $C$. Under Assumption~\ref{ass:M}, the largest accessible and admissible full coalitions are $\Cs$ and $\Cp$.
\end{proof}

Without monotone protection, accessibility replaces the lattice characterization, and choosing among incomparable protective coalitions requires an objective and knowledge of the protection constraints.

\ifSubfilesClassLoaded{%
  \let\section\ACbodysection
  \bibliographystyle{aea}
  \bibliography{safe_coalitions}
}{ }

\end{document}

\fi

\end{document}